  \providecommand\BibTeX{{%
    Bib\TeX}}}
\newif\ifshowappendix
\renewcommand\footnotetextcopyrightpermission[1]{}
\definecolor{blue}{RGB}{10, 10, 200}
\def\BibTeX{{\rm B\kern-.05em{\sc i\kern-.025em b}\kern-.08em
    T\kern-.1667em\lower.7ex\hbox{E}\kern-.125emX}}
\newcommand{\bheading}[1]{{\vspace{4pt}\noindent{\textbf{#1}}}}
\newcommand{\iheading}[1]{{\vspace{2pt}\noindent{\textit{#1}}}}
\newcolumntype{?}{!{\vrule width 1pt}}
\newcounter{note}[section]
\renewcommand{\thenote}{\thesection.\arabic{note}}
    \newcommand{\Sadoghi}[1]{{\color{cyan}{\textbf{Sadoghi:} #1}}}
    \newcommand{\niu}[1]{\refstepcounter{note}{\bf\textcolor{red}{$\ll$JN~\thenote: {\sf #1}$\gg$}}}
    \newcommand{\hz}[1]{\refstepcounter{note}{\bf\textcolor{purple}{$\ll$HZ~\thenote: {\sf #1}$\gg$}}}
    \newcommand{\shaokang}[1]{\refstepcounter{note}{\bf\textcolor{blue}{$\ll$SK~\thenote: {\sf #1}$\gg$}}}
    \newcommand{\Dakai}[1]
    {{\color{orange}{\textbf{Dakai:} #1}}}
    \newcommand{\new}[1]{#1}
    \newcommand{\Sadoghi}[1]{}
    \newcommand{\niu}[1]{}
    \newcommand{\hz}[1]{}
    \newcommand{\shaokang}[1]{}
    \newcommand{\Dakai}[1]{}
\newcommand{\apxref}[1]{%
  Appendix~\ref{#1}~\cite{fides_appendix}%
}
\newcommand{\apxlmref}[2]{%
  \ref{#1} in %
  \ifshowappendix
    Appendix~\ref{#2}%
  \else
    the Appendix~\cite{fides_appendix}%
  \fi
}
\newcommand{\secref}[1]{\mbox{Sec.~\ref{#1}}\xspace}
\newcommand{\figref}[1]{\mbox{Fig.~\ref{#1}}}
\newcommand{\ignore}[1]{}
\newcommand{\ie}{\textit{i.e.}\xspace}
\newcommand{\eg}{\textit{e.g.}\xspace}
\newcommand{\sysname}{\textsc{{F}ides}\xspace}
\newcommand{\blackding}[1]{\ding{\numexpr181+#1\relax}}
\newcounter{packednmbr}
\newenvironment{packedenumerate}{
\begin{list}{\thepackednmbr.}{\usecounter{packednmbr}
\setlength{\itemsep}{0pt}
\addtolength{\labelwidth}{4pt}
\setlength{\leftmargin}{12pt}
\setlength{\listparindent}{\parindent}
\setlength{\parsep}{3pt}
\setlength{\topsep}{3pt}}}{\end{list}}
\newenvironment{packeditemize}{
\begin{list}{$\bullet$}{
\setlength{\labelwidth}{0pt}
\setlength{\itemsep}{2pt}
\setlength{\leftmargin}{\labelwidth}
\addtolength{\leftmargin}{\labelsep}
\setlength{\parindent}{0pt}
\setlength{\listparindent}{\parindent}
\setlength{\parsep}{1pt}
\setlength{\topsep}{1pt}}}{\end{list}}
\newtheorem{theorem}{Theorem}
\newtheorem{lemma}{Lemma}
\newtheorem{claim}{Claim}
\newcommand{\Name}[1]{\textnormal{\textsc{#1}}}
\newcommand{\GETS}{:=}
\newenvironment{myprotocol}{
    \hrule
    \smallskip
    \algsetup{linenosize=\small}
    \begin{algorithmic}[1]
        
        \newcommand{\SPACE}{\item[]}
        \newcommand{\TITLE}[2]{\item[] \textbf{\underline{##1}} (##2) \textbf{:}\\[2pt]}
        \makeatletter
            \newcommand{\EVENT}[2][]{\STATE \textbf{event} ##2 \textbf{do}%
                \ifthenelse{\equal{##1}{}}{}{\ \algorithmiccomment{##1}}%
                \begin{ALC@g}}
            \newcommand{\ENDEVENT}{\end{ALC@g}}
        \makeatother
        
        \makeatletter
            \newcommand{\FUNCTION}[2]{\STATE \textbf{function} \Name{##1}(##2) \textbf{do}\begin{ALC@g}}
            \newcommand{\ENDFUNCTION}{\end{ALC@g}}
        \renewcommand{\COMMENT}[1]{\hfill $\triangleright$ ##1}
        \renewcommand{\algorithmiccomment}[1]{\hfill $\triangleright$ ##1}
        \makeatother
    
}{
    \end{algorithmic}%
    \hrule
}
\newtheorem{theoremapx}{Theorem}
\titlespacing*{\section}{0pt}{1.4ex}{1.4ex}
\titlespacing*{\subsection}{0pt}{1.2ex}{1.2ex}
\titlespacing*{\subsubsection}{0pt}{0.8ex}{0.8ex}
\begin{document}

\title{\sysname: {S}ecure and {S}calable {A}synchronous {DAG} {C}onsensus \\ via  {T}rusted {C}omponents}

\author{Shaokang Xie$^1$, Dakai Kang$^1$, Hanzheng Lyu$^2$, Jianyu Niu$^3$, Mohammad Sadoghi$^1$}
\affiliation{%
  \institution{
  $^1$Exploratory Systems Lab, Department of Computer Science, University of California, Davis \\ 
  $^2$School of Engineering, University of British Columbia (Okanagan campus) \\
  $^3$Department of Computer Science, City University of Hong Kong}
}

\begin{abstract}
DAG-based BFT consensus has attracted growing interest in distributed data management systems for consistent replication in untrusted settings due to its high throughput and resilience to asynchrony.
However, existing protocols still suffer from high communication overhead and long commit latency. In parallel, introducing minimal hardware trust has proven effective in reducing the complexity of BFT consensus.

Inspired by these works, we present \sysname, an asynchronous DAG-based BFT consensus protocol that, to our knowledge, is among the first to leverage TEEs to enhance both scalability and efficiency.
\sysname tolerates a minority of Byzantine replicas and achieves $\mathcal{O}(\kappa n^2 + n^3)$ metadata communication complexity 
through a customized TEE-assisted Reliable Broadcast (\textit{T-RBC}) primitive with linear communication complexity in one-step broadcast. 
Building on T-RBC, \sysname redefines the DAG construction rules by reducing the reference requirement from $2f{+}1$ to $f{+}1$ between consecutive vertices. 
This new structure weakens DAG connectivity and invalidates traditional commit rules, so we formally abstract the problem and derive new theoretical bounds on liveness.
We further propose a \textit{four-round commit rule} that achieves the theoretically minimal commit latency.
In addition, we design two additional primitives, \textit{T-RoundCert} and \textit{T-Coin}, to efficiently certify DAG references and replace the costly cryptographic common coin used in prior protocols.
Comprehensive evaluations on geo-distributed and local testbeds show that \sysname substantially outperforms state-of-the-art protocols, including Tusk, Bullshark, Mysticeti, \new{Shoal++}, RCC, Damysus, Achilles and HybridSet, achieving lower latency and higher throughput while preserving strong safety and liveness guarantees. 
\end{abstract}

\maketitle


\nocite{fides_appendix, sourcecode,intelRA,sgxsealing,amdRA,intel_software_security_guidance,sgxsimulate,resdb,openenclave2022, aptos, sui, osmosis, jupiter}

\section{Introduction} \label{sec:intro}
Modern distributed data management systems rely on replication to keep data consistent and available despite failures. 
As deployments scale to unreliable, geo-distributed, and even untrusted environments, Byzantine fault tolerance (BFT) consensus has gained significant interest as a backbone for fault-tolerant replicated databases.
Notable examples include the rapid rise of blockchains~\cite{tendermint, aptos, sui} and Web3 applications~\cite{osmosis, jupiter}, 
which require BFT to order transactions in decentralized and Byzantine networks. 
However, geo-distributed deployments with fluctuating network conditions~\cite{Chandra1996, Clement2009, avarikioti2020fnf, 2023fever} pose new challenges for classic BFT protocols, such as PBFT~\cite{pbft1999}, which are typically designed for partially synchronous networks~\cite{consensusinpartialsync}. 
Prior studies have shown that these protocols may fail to guarantee liveness (\ie, the sequence of agreed transactions is ever-growing) in fully asynchronous environments~\cite{consensusinpartialsync, pbft1999, honeybadger, fairness}.


To remove the reliance on network synchrony, a series of fully asynchronous protocols~\cite{honeybadger, fairness, Dumbo-NG, Dumbo-MVBA, DAGRider, narwhal, Bullshark, gradeddag, jovanovic2024mahi} have been proposed to ensure consensus under arbitrary message delays.
Among them, asynchronous \textit{Directed Acyclic Graph} (DAG)-based protocols~\cite{DAGRider, Bullshark, narwhal, gradeddag, jovanovic2024mahi} have gained particular prominence due to their efficiency, robustness, and conceptual simplicity.
They form a DAG of proposals from all replicas, avoid single-leader bottlenecks, commit transactions directly from the DAG without extra communication, and preserve asynchronous liveness with optimal amortized communication complexity. 
Representative examples include Tusk~\cite{narwhal} and Bullshark~\cite{Bullshark}.

Asynchronous DAG-based protocols require $n{=}3f{+}1$ replicas to tolerate up to $f$ Byzantine ones (\ie, behaving arbitrarily). They proceed in waves, each comprising multiple rounds.
In each round, all replicas use reliable broadcast (RBC) to disseminate proposals (\ie, vertices), each carrying pending transactions and references to at least $2f{+}1$ proposals from the previous round, thereby forming a DAG.
At the end of each wave, replicas invoke a cryptographic common coin to collectively select a leader vertex from the first round of that wave, and decide to commit the leader vertex and its referencing proposals according to the commit rules. 
This enables replicas to produce a consistently ordered sequence of transactions. 

Despite their promising features, DAG-based BFT protocols still struggle to scale in large-scale decentralized applications. \textit{First}, the \textit{large system size} requirement of $n = 3f{+}1$ replicas causes performance (\eg, throughput) to degrade sharply once the number of replicas reaches several tens (see \secref{sec:evaluation}). 
\textit{Second}, these protocols incur \textit{high communication complexity}: each round runs $n$ instances of reliable broadcast (RBC), each with $\Theta(n^2)$ communication cost, and requires validating $\mathcal{O}(n^2)$ references per round (up to $n$ vertices)~\cite{DAGRider, narwhal, BLS2001}.
\textit{Third}, they experience \textit{long commit latency}, especially under adversarial asynchronous networks. 
Table~\ref{tab:dag-comparison} shows that asynchronous DAG protocols guarantee commitment but still require either many message steps or larger committees, while partially synchronous DAG protocols such as Shoal++, Sailfish, and Mysticeti fail to ensure progress under asynchrony.

In this paper, we introduce \sysname, an asynchronous DAG-based BFT consensus protocol that, to our knowledge, is among the first to leverage Trusted Execution Environments (TEEs) to enhance both scalability and efficiency. 
However, using TEEs in DAG protocols is \textit{not plug-and-play.} 
The key challenge is to \textit{identify the specific bottlenecks in DAG-based consensus} and address them with trusted components built atop TEEs by following the minimized Trusted Computing Base (TCB) principle~\cite{trInc, hybster, Damysus, FlexiTrust, OneShot, Achilles}.  
To this end, we employ Monotonic Counter (MC), a trusted component to prevent message equivocation by binding each message to a unique counter value. 
With it, we design a \textit{TEE-assisted Reliable Broadcast (T-RBC)} primitive with linear communication complexity in one-step broadcast.
By incorporating T-RBC, \sysname can tolerate a minority of Byzantine replicas and achieve $\mathcal{O}(\kappa n^2 + n^3)$ communication complexity. 

T-RBC also enables us to relax the DAG construction rule: each vertex now needs only $f{+}1$ references to the previous round, instead of $2f{+}1$.
However, this relaxation creates a \textit{new liveness pitfall} that is specific to TEE-assisted DAG protocols and is absent from prior TEE-assisted (non-DAG) BFT designs~\cite{minbft, trInc, hybster, OneShot, Achilles, FlexiTrust, TrustedHardwareAssisted, dqbft}. 
In Tusk’s wave-based commitment~\cite{narwhal}, progress implicitly relies on sufficient overlap between the reference sets of consecutive rounds.
With only $f{+}1$ references, the overlap between two such sets can drop to one honest replica in the worst case (rather than at least $f{+}1$ under the original $2f{+}1$ rule).
An adversary that controls message scheduling can then indefinitely delay that replica’s messages, suppressing the only bridging references and causing the protocol to lose liveness under adversarial asynchrony.


To resolve this pitfall, we abstract the wave-level commit conditions as a $k$-iteration common core problem (\secref{sec:bound}), building on insights from prior DAG protocols~\cite{DAGRider, narwhal, Bullshark}. The parameter $k$ is the number of rounds per wave. 
Using this abstraction, we derive that the minimum $k=3$ suffices for liveness, while setting $k=4$ is necessary to achieve constant-round commit latency.
The gap stems from the fact that, with $k=3$, commits can require $\Theta(f)$ additional rounds in the worst case; in contrast, increasing to $k=4$ yields constant-round commits and achieves the optimal expected commit latency within our model.

We propose the Round Advancement Certifier (RAC), a trusted component that validates a vertex's collected references to the previous round and issues a compact cryptographic certificate when the references are valid. Building on RAC, we design \textit{TEE-assisted Round Certifier (T-RoundCert)} so that replicas can quickly verify references and advance to the next round.  
Besides, we propose \textit{TEE-assisted Common Coin (T-Coin)} that utilizes TEEs' confidential property to provide a lightweight Common Coin, replacing costly threshold-cryptographic constructions. 

We built end-to-end prototypes of \sysname atop the open-source Apache ResilientDB platform~\cite{resdb, resdbpaper} and used the Open Enclave SDK~\cite{openenclave2022} to develop trusted components on \textit{Intel SGX}~\cite{intelsgx}.
We conducted extensive experiments in the public cloud to evaluate and compare \sysname with several representative protocols, including Tusk~\cite{narwhal}, Bullshark~\cite{Bullshark}, Mysticeti~\cite{babel2024mysticeti}, \new{Shoal++~\cite{shoal++}}, RCC~\cite{gupta2021rcc}, Damysus~\cite{Damysus}, Achilles~\cite{Achilles}, and HybridSet~\cite{TrustedHardwareAssisted}.
Evaluation results in geo‑distributed and local environments show that \sysname achieves peak throughput of 400k tps and 810k tps, respectively, the highest among all compared protocols. We also examine its runtime overhead through fault‑injection tests and a detailed stage‑by‑stage and component-based performance breakdown.

\bheading{Contributions.} We propose \sysname, to our knowledge, one of the first asynchronous TEE-assisted DAG-based consensus protocols: 
\begin{packeditemize}
    \item \textbf{TEE-assisted DAG with $n{=}2f{+}1$ replicas and $\mathcal{O}(\kappa n + n^2)$ bits per proposal.} 
    We reduce the replica requirement of asynchronous DAG protocols from $n{=}3f{+}1$ to $n{=}2f{+}1$ while tolerating $f$ Byzantine faults, enabled by \textit{T-RBC}. Additionally, \textit{T-RBC} allows \sysname to reduce dissemination and referencing overhead, improving message complexity to $\mathcal{O}(\kappa n + n^2)$ bits per proposal.  


    \item \textbf{Reduced-quorum liveness and commitment.} We identify a liveness pitfall introduced by relaxing cross-round references and develop a four-round commit rule, achieving constant (expected) commit latency under adversarial asynchrony.  
    

    \item \textbf{Lower computation overhead.} \sysname avoids crypto-heavy components (notably threshold-cryptographic common coins) via \textit{T-RoundCert} (powered by \textit{RAC}) and a lightweight TEE-backed common coin (\textit{T-Coin} with \textit{RNG}). 
    
    \item \textbf{Implementation and evaluation.} We implement \sysname and evaluate it against state-of-the-art protocols, attributing costs and quantifying the performance gains of each trusted component. 
\end{packeditemize}

\section{Background and Motivation} \label{sec:motivation}

\subsection{Trusted Execution Environment}
Trusted Execution Environments (TEEs) are hardware-enforced, isolated execution environments that protect code and data even from a malicious OS or hypervisor. 
For example, in Intel SGX, the isolated environments are known as enclaves, whose memory is transparently encrypted and isolated.
Besides, there are two key mechanisms of TEEs: attestation, which enables TEEs to prove remotely that a particular enclave with a known code identity is running securely, and sealing, which encrypts enclave state for persistent storage tied to enclave identity~\cite{intelRA, sgxsealing, intelsgxpaper}.

\subsection{Asynchronous DAG-Based Consensus}
Asynchronous DAG-based consensus protocols~\cite{DAGRider, narwhal, Bullshark} operate round-by-round. 
In each round, every replica proposes a block containing a batch of transactions, and each block references at least $2{f}+1$ blocks (\ie, $\Theta(n)$) from the previous round.
These blocks act as \textit{vertices}, and the reference links between them serve as \textit{edges}, collectively forming a DAG. 
As a result, a replica has to verify $\Theta(n^2)$ references given at most $n$ proposals in a round during the DAG construction of one round. 
Each reference is a $\kappa$-bit hash~\cite{DAGRider, narwhal, BLS2001}.
After vertex validation, the proposed blocks then go through \textit{reliable broadcast (RBC)} to be delivered by replicas.
The agreement property of RBC ensures at most one vertex from a replica is produced for each round (\apxref{app:rbc}). 
The above process is also referred to as the dissemination phase.

Rounds are grouped into consecutive waves, each wave comprising a specific number of rounds. The rounds in a wave collectively form a \textit{common core} problem, determining how many vertices from the first round can be safely committed~\cite{DAGRider, narwhal, Bullshark}.
At the end of each wave, a vertex is randomly elected as the \textit{leader vertex} using a \textit{global common coin} (\apxref{app:commoncoin}), which produces the same random value across all replicas. 
If the leader vertex is in the common core, replicas can commit the vertex and its referenced vertices (\ie, transactions) in a consistent order. The commitment requires no additional communication overhead to achieve transaction ordering.

\subsection{Dissecting DAG Consensus} \label{subsec:dissect}
We dissect the performance of Tusk, a representative asynchronous DAG protocol that is extended by many later DAG protocols~\cite{narwhal, Bullshark, shoal}. 
We evaluate throughput and latency of Tusk under varying numbers of nodes in a LAN setting, as shown in \figref{fig:motivation}.
The experimental results reveal the following findings. 

\begin{packeditemize}
\item \textbf{Large system size leads to poor scalability.} Throughput decreases and latency increases as the system scales, reflecting the potential inherent cost of a large quorum.

\item \textbf{RBC dominates latency.}
RBC accounts for one of the largest parts of end-to-end latency (15\% in LAN and more than 50\% in WAN), making it a primary scalability bottleneck.

\item \textbf{Referencing overhead at scale.}
Encoding $\Theta(n)$ references to previous-round vertices inflates per-vertex metadata by $\Theta(n)$, slowing propagation and validation at large $n$.


\item \textbf{Common coin primitives are costly and fragile.}
Tusk's threshold-signature-based coin adds about 5--10\% to end-to-end latency.
Worse, recent work on cross-block front-running attacks reveals the coin's potential vulnerabilities~\cite{bluefish}. 


\end{packeditemize}

\begin{figure}[t]
    \vspace{-4mm}
    \centering
    \includegraphics[width=0.99\linewidth]{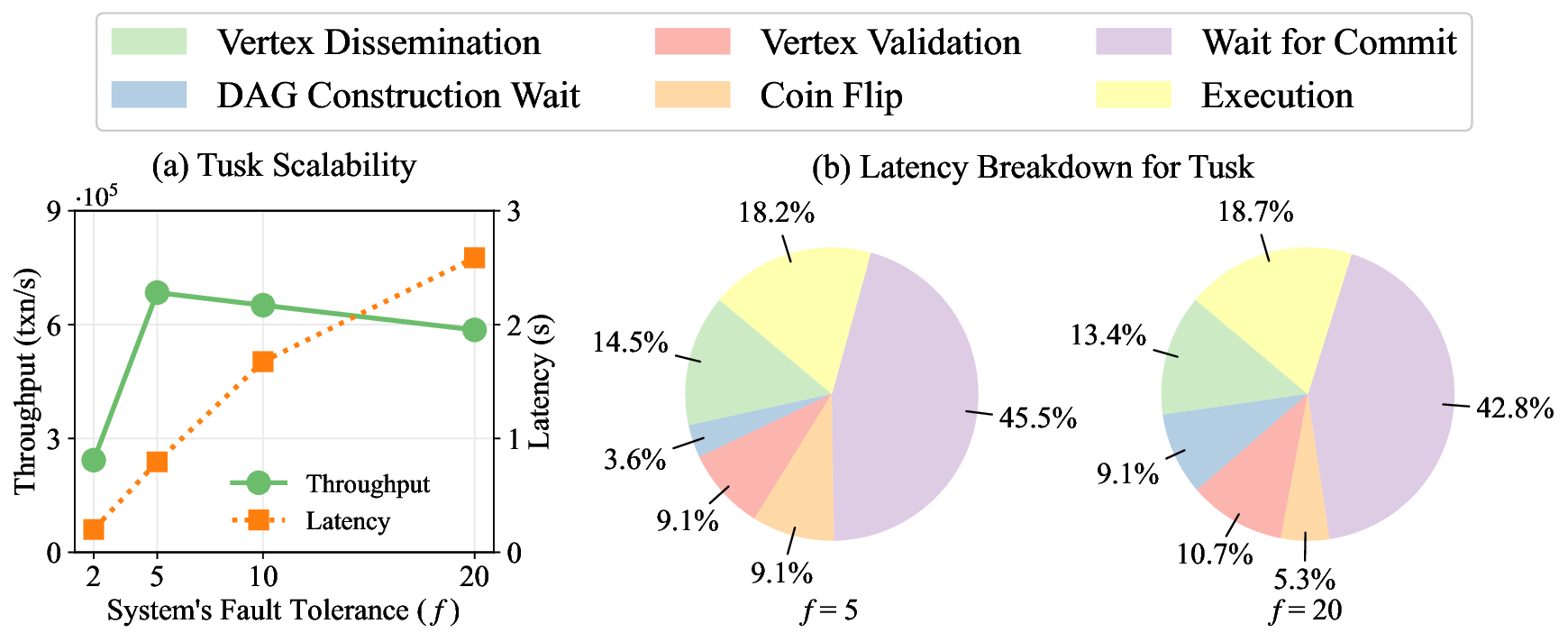}
    \vspace{-4mm}
    \caption{The Scalability and Latency Breakdown for Tusk. } \label{fig:motivation}
    \vspace{-6mm}
\end{figure}

These insights generalize across DAG protocols.
Building on them, \sysname employs TEE-assisted services to efficiently address the first four bottlenecks (\secref{sec:teedesign}):
\begin{packeditemize}
    \item We enhance fault tolerance to $n{=}2f{+}1$ via \textit{T-RBC} to improve scalability, while systematically analyzing and resolving the \textit{liveness challenges} arising from the quorum reduction.
    \item \textit{T-RBC} further optimizes dissemination complexity and latency.
    \item \textit{T-RoundCert} cuts crypto referencing complexity and overhead.
    \item \textit{T-Coin} accelerates secure randomness generation.
\end{packeditemize}

By utilizing the above four ideas in \sysname, Table~\ref{tab:dag-comparison} further summarizes how \sysname performs compared to representative DAG protocols in different aspects.
\section{System Model and Goals} \label{sec:systemmodel}

\subsection{System Model}
\bheading{Replicas with TEEs.}
We consider a network $\Pi$ consisting of $n{=}2f{+}1$ replicas, denoted by $\Pi=\{p_i\}_{i=1}^{n}$. 
Each replica $p_i$ has a unique identity known to all others within the network.
Each replica runs on an SGX-enabled machine that provides a Trusted Execution Environment (TEE), whose integrity and confidentiality are guaranteed by the underlying hardware. Up to $f$ replicas may be Byzantine and behave arbitrarily. The remaining replicas are honest and strictly follow the protocol.
In addition to replicas, there is an unbounded set of clients that create transactions.  

\bheading{Threat Model.} We adopt the threat model used in prior TEE-assisted BFT protocols~\cite{hybster, minbft, Damysus, dqbft, TBFT, Achilles}. We assume an adversary that controls up to $f$ Byzantine replicas and any number of clients.  
The adversary has root-level access to corrupted replicas except for TEEs, enabling control over the network. In other words, all components outside the TEEs are considered vulnerable to compromise. 
In contrast, the TEE itself is assumed to remain secure and untampered with, maintaining its core trusted functionalities.
We also assume that TEEs support secure remote attestation, providing unforgeable cryptographic evidence that a specific program is running inside enclaves~\cite{intelRA,amdRA}.
\new{Following prior TEE-assisted BFT work~\cite{hybster, minbft, Damysus, dqbft, TBFT, ENGRAFT}, we do not consider most TEE-specific attacks (\eg, transient execution~\cite{foreshadow, SgxPectre}, micro-architectural side channels~\cite{cacheout, crosstalk, sgaxe, aepic-leak}, fault injection~\cite{plundervolt}, and rollback~\cite{ROTE, Nimble, narrator, narrator-pro}), as these can be mitigated by microcode patches~\cite{intel_software_security_guidance}, hardened cryptographic libraries~\cite{CacheBleed}, and state-continuity mechanisms~\cite{ROTE, Nimble, ENGRAFT, narrator, narrator-pro}, respectively. Moreover, \sysname{}'s minimal TCB (only tiny components in the enclave) further reduces exposure to residual side channels compared to designs that place full protocol logic in enclaves.}

Each replica possesses a public/private key pair established through a Public-Key Infrastructure (PKI) and uses it to sign consensus messages (\eg, votes). The private keys are securely sealed and managed within TEEs, while the corresponding public keys are known to all replicas.  The adversary is computationally bounded and cannot compromise standard cryptographic primitives such as digital signatures or collision-resistant hash functions. All outputs generated by the enclave are cryptographically signed by the TEE, ensuring that they cannot be forged, modified, or tampered with by any replica, even if the host system is Byzantine.

\bheading{Asynchronous Network.} We assume an asynchronous network model~\cite{asyncnetwork}, where messages between replicas may experience arbitrary but finite delays, i.e., messages are eventually delivered and never lost, but there are no guarantees on delivery time or order.

\begin{table*}[t]
\vspace{-4mm}
\centering
\caption{Comparison of DAG-based BFT Protocols (Normalized to Message Delays $\delta$).}
\label{tab:dag-comparison}
\vspace{-3mm}
\setlength{\tabcolsep}{8pt}
\renewcommand{\arraystretch}{1.25}
\scalebox{0.85}{
\begin{tabular}{@{}lcccccc@{}}
\toprule[1pt]
 & Fault Tolerance & Network Assumption & LV Commit\textsuperscript{(1)} & Async LV Commit & NLV Commit\textsuperscript{(2)} & Bits per round\textsuperscript{(3)} \\
\midrule
DAG-Rider~\cite{DAGRider} & $3f{+}1$ & Async & $\!18\delta$ (6 RBCs) & $\!18\delta$ (6 RBCs) & ${+}\,7.5\delta$ & $\mathcal{O}(\kappa n^4)$ \\
Tusk~\cite{narwhal} & $3f{+}1$ & Async & $\!13.5\delta$ ($4.5$ RBCs) & $\!21\delta$ (7 RBCs) & ${+}\,4.5\delta$ & $\mathcal{O}(\kappa n^3)$ \\
Bullshark~\cite{Bullshark} & $3f{+}1$ & Async \& Partial Sync & $6\delta$ (2 RBCs) & $\!18\delta$ (6 RBCs) & ${+}\,4.5\text{--}7.5\delta$ & $\mathcal{O}(\kappa n^3)$ \\
\new{GradedDAG~\cite{gradeddag}} & \new{$3f{+}1$} & \new{Async} & \new{$5\delta$ (1 RBC \& 1 CBC)} & \new{$7.5\delta$ (1.5 RBC \& 1.5 CBC)} & \new{${+}\,5\delta$} & \new{$\mathcal{O}(\kappa n^3)$} \\
\new{Mahi-Mahi~\cite{jovanovic2024mahi}} & \new{$3f{+}1$} & \new{Async} & \new{$5\delta$ (5 broadcasts)} & \new{$5\delta/p^*$\textsuperscript{$\ddagger$}} & \new{${+}\,0\delta$} & \new{$\mathcal{O}(\kappa n^3)$} \\
Shoal++~\cite{shoal++} & $3f{+}1$ & Partial Sync & $\mathbf{4\delta}$ (1 RBC \& 1 broadcast) & $\infty$ & ${+}\,2\text{--}3\delta$ & $\mathcal{O}(\kappa n^3)$ \\
Sailfish~\cite{sailfish} & $3f{+}1$ & Partial Sync & $\,4\delta$ (1 RBC \& 1 broadcast) & $\infty$ & ${+}\,3\delta$ & $\mathcal{O}(\kappa n^3)$ \\
Mysticeti~\cite{babel2024mysticeti} & $3f{+}1$ & Partial Sync & $\mathbf{3\delta}$ (3 broadcasts) & $\infty$ & ${+}\,3\delta$ & $\mathcal{O}(\kappa n^3)$ \\
\rowcolor[gray]{0.9}
\textbf{\sysname} & $2f{+}1$ & Async & $4\delta$ (4 T-RBCs) & $8\delta$ (8 T-RBCs) & ${+}\,2.5\delta$ & $\mathcal{O}(\kappa n^2 + n^3)$ \\
\bottomrule[1pt]
\end{tabular}}
\vspace{1mm}

\raggedright
\footnotesize
\textsuperscript{(1)} LV = leader vertex. In leader–anchored commit rules (e.g., Bullshark~\cite{Bullshark}, Shoal++~\cite{shoal++}, Sailfish~\cite{sailfish}), blocks are decided with respect to the \emph{wave leader} of that wave~(\secref{sec:motivation}). \newline
\textsuperscript{(2)} NLV = non-leader vertex. In those leader-based commit rules, non-leader vertices are indirectly committed and typically need a few extra DAG rounds beyond leaders. \newline
\textsuperscript{(3)} We count \emph{metadata only} (parents/certificates/signatures), not payload. 
In recent DAG designs, each node emits one vertex to $\Theta(n)$ replicas per round with $n$ certified parents ($\Theta(\kappa)$ for \sysname, while $\Theta(\kappa n)$ for others). 
For DAG-Rider, the dissemination (RBC) complexity is $\Theta(n^2)$, while $\Theta(n)$ for all others.
In total $\Theta(n)$ nodes yield $\mathcal{O}(\kappa n^3)$ bits per round (constants depend on the signature scheme). \newline
\new{\textsuperscript{$\ddagger$} Under an async adversary, Mahi-Mahi commits probabilistically with $p^* {=} \ell/(3f{+}1)$ per wave ($\ell$~= leader slots/round). The expected async commit latency grows with $n$.}
\end{table*}

\subsection{DAG-Based Consensus Goals} \label{subsec:goals}

DAG-based BFT consensus protocols should allow replicas to agree on a sequence of client transactions in the presence of Byzantine behaviors (see the above threat model). 
Each replica continuously proposes blocks containing batches of transactions, with each block $b$ assigned a consecutive round number \(r\). A block is also a vertex in a DAG, and we use them interchangeably in this paper. 
The round number $r$ is used to distinguish between blocks proposed by the same replica.
Each block references at least \( f {+} 1 \) vertices from the previous round number, thus creating a DAG. 

Replicas then use common coins to select the leader vertex in a wave and further commit all blocks referenced by the leader vertex according to certain deterministic algorithms. Once a block is committed by a correct replica, it is assigned a unique sequence number $sn$, and replicas send the corresponding replies to clients.
The consensus protocol should satisfy the following properties: 
\begin{packeditemize} 
    \item \textbf{Safety:}  If two correct replicas commit two blocks $b$ and $b^{\prime}$ with the same sequence number $sn$, then $b=b^{\prime}$.
    \item \textbf{Liveness:} A transaction $tx$ created by a correct client will eventually be committed to a block $b$ by correct replicas. 
\end{packeditemize} 

\definecolor{lightred}{HTML}{F7C7C5}
\definecolor{lightblue}{HTML}{D5E4FB}
\definecolor{lightorange}{HTML}{FFD580}
\definecolor{darkyellow}{HTML}{dae465}
\definecolor{myblue}{HTML}{6182B7}
\definecolor{myred}{HTML}{AF4A47}

\begin{figure}[t]
\vspace{-3mm}
    \centering
    \begin{tikzpicture}[scale=0.77,
      node distance=0.25cm and 0.8cm,
      roundnode/.style={circle, draw, fill=white, minimum size=5mm},
      smallroundnode/.style={circle, draw, fill=white, minimum size=1mm},
      rednode/.style={circle, draw=myred, fill=lightred, minimum size=5mm},
      bluenode/.style={circle, draw=myblue, fill=lightblue, minimum size=5mm},
      smallbluenode/.style={circle, draw=myblue, fill=lightblue, minimum size=1mm},
      smallrednode/.style={circle,  draw=myred, fill=lightred, minimum size=1mm},
      orangenode/.style={circle, draw=myblue, fill=lightorange, minimum size=5mm},
      smallorangenode/.style={circle, draw=myblue, fill=lightorange, minimum size=1mm},
      edge/.style={thin, Stealth-},
      rededge/.style={thick, Stealth-, draw=myred},
      yellowedge/.style={thick, Stealth-, draw=darkyellow},
      blueedge/.style={thick, Stealth-, draw=myblue}
    ]

    \tikzset{>={Stealth[length=5pt,width=7pt]}}
    

    \draw[black, dashed, thick] (0.2,6) rectangle (8.3,9.1);
    \draw[orange, dashed, thick] (0,-0.2) rectangle (8.5,9.8);

    \draw[dashed, thick] (0.2,2.65) rectangle (8.3,5.8);
    \draw[dashed, thick] (0.2,-0.05) rectangle (8.3,2.45);

    \node[font=\footnotesize=] at (7.18, 8.88) {\textbf{Dissemination}};
    \node[font=\footnotesize] at (7.18, 8.5) {\textbf{Layer}};
    \node[font=\footnotesize] at (7.18, 2.18) {\textbf{Ordering}};
    \node[font=\footnotesize] at (7.18, 1.8) {\textbf{Layer}};

    \node at (1.6, 9.45) {\textbf{\sysname Overview}};
    \node[font=\tiny] at (1, 5) {\textbf{Replica$_1$}};
    \node[font=\tiny] at (1, 4) {\textbf{Replica$_2$}};
    \node[font=\tiny] at (1, 3) {\textbf{Replica$_3$}};
    \node[font=\footnotesize] at (7.25, 3.15) {\textbf{Local DAG}};




    \node at (1, 7.9) {\includegraphics[width = 0.5cm]{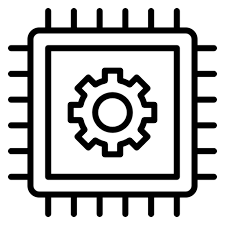}};
    \node at (1.2, 7.6) {\includegraphics[width = 0.25cm]{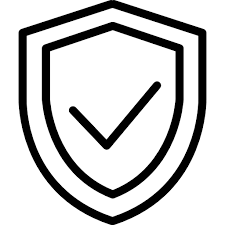}};
    \draw[dashed, blue, line width=1pt] (1,7.9) circle[radius=0.5cm];
    \node[font=\tiny] at (1, 7.2) {\textbf{RAC}};

    \node at (4, 7.9) {\includegraphics[width = 0.6cm]{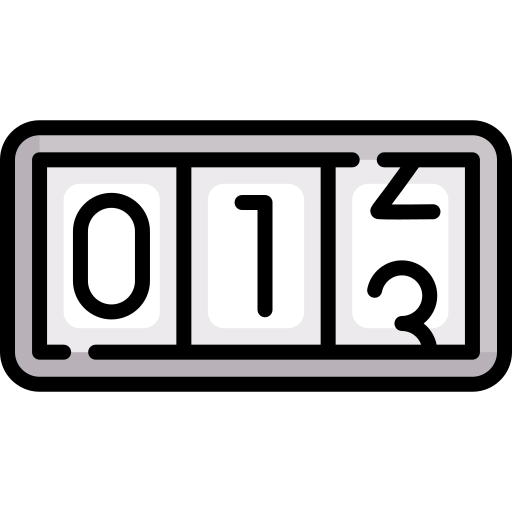}};
    \node at (4.2, 7.6) {\includegraphics[width = 0.25cm]{icons/shield.png}};
    \draw[dashed, orange, line width=1pt] (4,7.9) circle[radius=0.5cm];
    \node[font=\tiny] at (4, 7.2) {\textbf{MC}};

    \draw[->, gray, line width=2pt] (2, 6) -- (1, 7);

    \draw[->, gray, line width=2pt] (1.7, 7.8) -- (3.3, 7.8);
    \node[font=\tiny] at (2.5, 8.1) {round-cert};
    \node at (2.5, 8.5) {\includegraphics[width = 0.5cm]{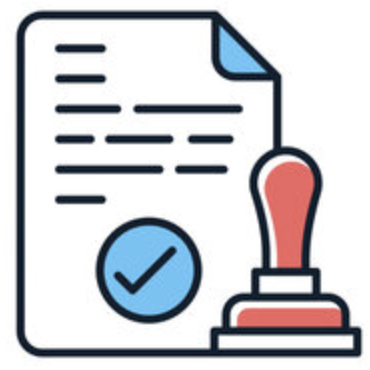}};

    \node at (5.1, 8.5) {\includegraphics[width = 0.55cm]{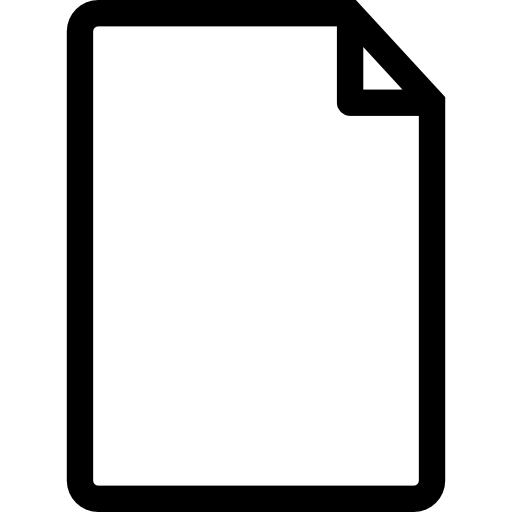}};
    \node at (5.3, 8.1) {\includegraphics[width = 0.2cm]{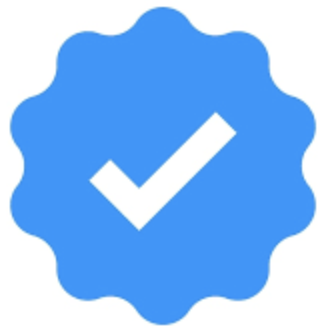}};
    \node[font=\tiny] at (5.1, 8.5) {$r\!:\!x$};
    \draw[->, gray, line width=2pt] (4.7, 7.8) -- (5.6, 7.8);
    \node[smallroundnode] (srn) at (6,7.8) {};

    \node[font=\tiny] at (6,7.5) {round-$x$};
    \node[font=\tiny] at (6,7.3) {vertex};

    \node at (7.3, 7.8) {\includegraphics[width = 0.7cm]{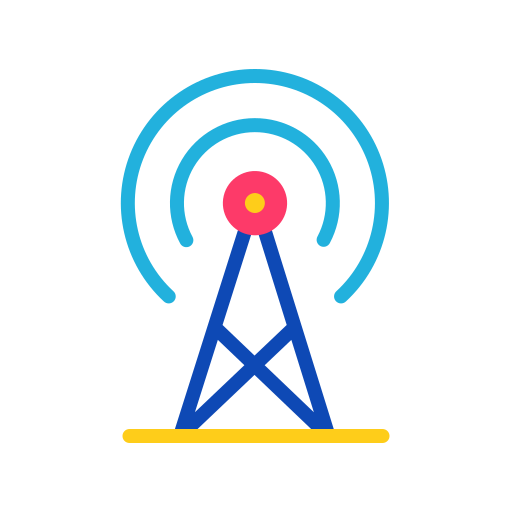}};
    \node at (7.9, 7.8) {\includegraphics[width = 0.5cm]{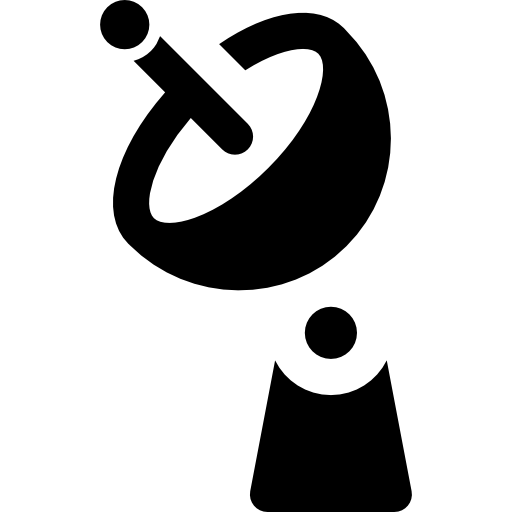}};
    \node[font=\tiny] at (7.5,7.1) {Dissemination};
    \draw[dashed, thick] (6.9,8.2) rectangle (8.2,7.3);
    \draw[->, gray, line width=2pt] (6.3, 7.8) -- (6.8, 7.8);
    \draw[->, gray, line width=2pt] (5.7, 7) -- (4.7,6);

    \node at (1.3, 0.85) {\includegraphics[width = 0.4cm]{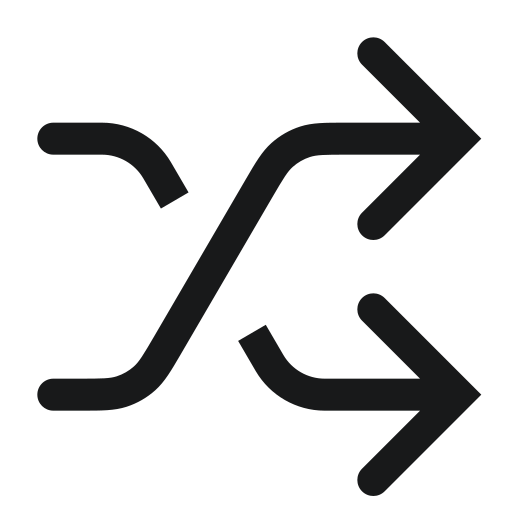}};
    \node at (1.5, 0.55) {\includegraphics[width = 0.25cm]{icons/shield.png}};
    \draw[dashed, red, line width=1pt] (1.3,0.85) circle[radius=0.5cm];

    \node[font=\tiny] at (1.3, 0.15) {\textbf{RNG}};
    \draw[->, gray, line width=2pt] (2.3, 2.4) -- (1.3, 1.4);
    \node[smallbluenode] (sr1) at (1.6, 2.1) {};
    \node[smallbluenode] (sr2) at (1.1, 2.1) {};

    \draw[->, gray, line width=2pt] (1.6, 1.25) -- (2.7, 2.35);
    \node[smallorangenode] (sb2) at (2.5, 1.55) {};
    \node[font=\tiny] at (2.5,1.25) {Leader};

    \node[font=\tiny] at (5.4, 0.35) {Execution};
    \node at (5.4, 0.95) {\includegraphics[width = 0.8cm]{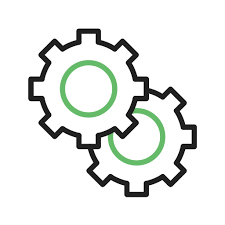}};
    
    \node[smallorangenode] (x1) at (4.3, 1.55) {};
    \node[smallrednode] (y1) at (3.6, 2.15) {};
    \node[smallroundnode] (y3) at (3.6, 1.55) {};
    \draw[edge] (y3) -- (x1);
    \draw[edge] (y1) -- (x1);
    \draw[->, gray, line width=2pt] (4, 2.4) -- (5, 1.4);
    \node[font=\tiny] at (4,1.1) {committed};
    \node[font=\tiny] at (4,0.9) {vertices};


    \node[font=\tiny] at (1.9,5.5) {$r=x-1$};
    \node[font=\tiny] at (3.3,5.5) {$r=x$};
    \node[font=\tiny] at (4.7,5.5) {$r=x+1$};
    \node[font=\tiny] at (6.1,5.5) {$r=x+2$};
    \node[font=\tiny] at (7.5,5.5) {$r=x+3$};
    \node[rednode] (11) at (1.9,5) {};
    \node[rednode] (21) at (1.9,4) {};
    \node[roundnode] (31) at (1.9,3) {};
    \node[roundnode] (12) at (3.3,5) {};
    \node[roundnode] (22) at (3.3,4) {};
    \node[orangenode] (32) at (3.3,3) {};
    \node[roundnode] (13) at (4.7,5) {};
    \node[roundnode] (23) at (4.7,4) {};
    \node[roundnode] (14) at (6.1,5) {};
    \node[roundnode] (24) at (6.1,4) {};
    \node[bluenode] (15) at (7.5,5) {};
    \node[bluenode] (25) at (7.5,4) {};
    
    \draw[edge] (11) -- (12);
    \draw[edge] (21) -- (12);
    \draw[edge] (31) -- (22);
    \draw[edge] (21) -- (22);
    \draw[edge] (21) -- (32);
    \draw[edge] (31) -- (32);

    \draw[edge] (12) -- (13);
    \draw[edge, color=myblue] (32) -- (13);
    \draw[edge, color=myblue] (32) -- (23);
    \draw[edge] (22) -- (23);
    \draw[edge, color=myblue] (13) -- (14);
    \draw[edge] (23) -- (14);
    \draw[edge] (13) -- (24);
    \draw[edge, color=myblue] (23) -- (24);
    \draw[edge, color=myblue] (14) -- (15);
    \draw[edge] (24) -- (15);
    \draw[edge] (14) -- (25);
    \draw[edge, color=myblue] (24) -- (25);

    \end{tikzpicture}
    \vspace{-3mm}
    \caption{Architecture Overview of \sysname.}
    \vspace{-2mm}
    \label{fig:architecture}
\end{figure}

\section{\sysname Overview} \label{sec:system_design}


Fig.~\ref{fig:architecture} gives an architectural overview of \sysname, an asynchronous DAG-based protocol that tolerates $f$ Byzantine faults among $n{=}2f{+}1$ replicas, while Fig.~\ref{fig:commit} illustrates a sample instantiation of the ordering layer.
Specifically, \sysname operates in two distinct layers: 

\begin{packeditemize}
    \item \textbf{Dissemination layer:} In this layer (\secref{subsec:dag_construction}), replicas construct their local views of DAG by a prescribed construction procedure. With the support of TEEs, replicas utilize the \textit{T-RBC} service (\secref{subsec:mic}) to reliably disseminate vertices. To ensure that replicas have consistent local views of the DAG, the \textit{T-RoundCert} service (\secref{subsec:rac}) is employed to validate references between vertices.  
    
    \item \textbf{Ordering layer:} In this layer (\secref{subsec:commit}), each replica independently interprets its local view of the DAG to deterministically establish an order of vertices. To ensure that all honest replicas derive a consistent order, they follow a four-round commit rule (\secref{sec:commitrule}). With the support of TEEs, we introduce the \textit{T-Coin} mechanism (\secref{subsec:rang}) to generate shared randomness that determines the vertex order consistently across replicas, which could eliminate the additional coordination required in prior DAG-based protocols.

\end{packeditemize}

\usetikzlibrary{decorations.pathreplacing, positioning}

\begin{figure}[t!]
\vspace{-4mm}
\centering
\hspace*{-0.35cm}
\scriptsize
\begin{tikzpicture}[
    scale=0.7,
    node distance=0.25cm and 0.44cm,
    roundnode/.style={circle, draw, fill=white, minimum size=6.5mm},
    rednode/.style={circle, draw=myred, fill=lightred, minimum size=6.5mm},
    bluenode/.style={circle, draw=myblue, fill=lightblue, minimum size=6.5mm},
    edge/.style={thin, Stealth-},
    rededge/.style={thick, Stealth-, draw=myred},
    yellowedge/.style={thick, Stealth-, draw=darkyellow},
    blueedge/.style={thick, Stealth-, draw=myblue}
]

\node[roundnode] (a1) at (0,-1) {};
\node[roundnode] (a2) [below=of a1] {};
\node[rednode] (a3) [below=of a2] {};
\node[above=0.03cm of a1] {$r=x$};
\node[below=0.25cm and 0.1cm of a3, text=myred] (desc) {Coin revealed};
\draw[->, bend left, draw=myred] (desc) to (a3);

\node[roundnode] (b1) [right=of a1] {};
\node[roundnode] (b2) [below=of b1] {};
\node[roundnode] (b3) [below=of b2] {};
\node[above=0.03cm of b1] {$r=x+1$};

\node[roundnode] (c1) [right=of b1] {};
\node[roundnode] (c2) [below=of c1] {};
\node[roundnode] (c3) [below=of c2] {};
\node[above=0.03cm of c1] {$r=x+2$};

\node[roundnode] (d1) [right=of c1] {};
\node[roundnode] (d2) [below=of d1] {};
\node[roundnode] (d3) [below=of d2] {};
\node[above=0.03cm of d1] {$r=x+3$};

\node[below=0.08cm of d3, shift={(0.05cm,-0.05cm)}] {\textit{Coin to elect the leader in the last round of every wave}};
\node[below=0.08cm of d3, shift={(0.05cm,-0.3cm)}] {\textit{(\ie, $r=x+7$ reveal the leader of wave $w+1$)}};

\node[roundnode] (e1) [right=of d1] {};
\node[bluenode] (e2) [below=of e1] {};
\node[roundnode] (e3) [below=of e2] {};
\node[above=0.03cm of e1] {$r=x+4$};
\node[below right=0.8cm and 0.1cm of e2, shift={(1.3cm,-0.3cm)}, text=myblue] (desc) {Coin revealed};
\draw[->, bend left, draw=myblue] (desc) to (e2);

\node[roundnode] (f1) [right=of e1] {};
\node[roundnode] (f3) [below=1.15cm of f1] {};
\node[above=0.03cm of f1] {$r=x+5$};

\node[roundnode] (g1) [right=of f1] {};
\node[roundnode] (g3) [below=1.15cm of g1] {};
\node[above=0.03cm of g1] {$r=x+6$};

\node[roundnode] (h1) [right=of g1] {};
\node[roundnode] (h3) [below=1.15cm of h1] {};
\node[above=0.03cm of h1] {$r=x+7$};

\draw[thin, decorate, decoration={brace, amplitude=10pt}] 
    ([shift={(-0.1,0.5)}]a1.north west) -- ([shift={(0.1,0.5)}]d1.north east)
    node[midway, above=10pt] {\textit{Wave w}};

\draw[thin, decorate, decoration={brace, amplitude=10pt}] 
    ([shift={(-0.1,0.5)}]e1.north west) -- ([shift={(0.1,0.5)}]h1.north east)
    node[midway, above=10pt] {\textit{Wave w+1}};

\draw[edge] (a1) -- (b1);
\draw[edge] (a1) -- (b2);
\draw[edge] (a1) -- (b3);
\draw[edge] (a2) -- (b1);
\draw[edge] (a2) -- (b2);
\draw[edge] (a2) -- (b3);
\draw[rededge] (a3) -- (b3);

\draw[edge] (b1) -- (c1);
\draw[edge] (b1) -- (c2);
\draw[edge] (b1) -- (c3);
\draw[edge] (b2) -- (c1);
\draw[edge] (b2) -- (c2);
\draw[edge] (b2) -- (c3);
\draw[rededge] (b3) -- (c3);

\draw[edge] (c1) -- (d1);
\draw[edge] (c1) -- (d2);
\draw[edge] (c1) -- (d3);
\draw[edge] (c2) -- (d1);
\draw[edge] (c2) -- (d2);
\draw[edge] (c2) -- (d3);
\draw[rededge] (c3) -- (d3);
\draw[yellowedge] (d3) -- (e2);

\draw[edge] (d1) -- (e1);
\draw[edge] (d1) -- (e2);
\draw[edge] (d1) -- (e3);
\draw[edge] (d2) -- (e1);
\draw[edge] (d2) -- (e2);
\draw[edge] (d2) -- (e3);
\draw[edge] (d3) -- (e3);

\draw[edge] (e1) -- (f1);
\draw[edge] (e1) -- (f3);
\draw[blueedge] (e2) -- (f1);
\draw[blueedge] (e2) -- (f3);
\draw[edge] (e3) -- (f3);

\draw[blueedge] (f1) -- (g1);
\draw[edge] (f1) -- (g3);
\draw[edge] (f3) -- (g1);
\draw[blueedge] (f3) -- (g3);
\draw[blueedge] (g1) -- (h1);
\draw[edge] (g1) -- (h3);
\draw[edge] (g3) -- (h1);
\draw[blueedge] (g3) -- (h3);

\end{tikzpicture}

\vspace{-4mm}
\caption{Example of Commit Rule in \sysname. If a leader has less than $f{+}1$ references from its fourth round (red in the figure), it is ignored; otherwise (blue in the figure), the algorithm searches the causal DAG to commit all preceding leaders (dark yellow and red in the figure) and then orders the remaining DAG by the predefined rule afterward.} 
\vspace{-4mm}
\label{fig:commit}
\end{figure}


\bheading{Consensus Flow.} 
We now present a high-level processing flow.

\vspace{1mm} \noindent \blackding{1} \textbf{Client Transaction Submission.} Clients submit transactions to replicas, which verify and mark the transactions as pending for inclusion into vertices.
  
\vspace{1mm} \noindent \blackding{2} \textbf{Vertex Construction.} \sysname progresses in a round-by-round manner, where each replica proposes one vertex per round. Upon entering a new round and collecting at least $f{+}1$ vertices from the previous round, a replica constructs a new vertex that includes a batch of pending transactions, a validity proof, and relevant metadata. The validity proof is generated by the \textit{T-RoundCert} service based on $f{+}1$ references to vertices from the preceding round.

\vspace{1mm} \noindent \blackding{3} \textbf{Vertex Dissemination.} Once the vertex is constructed, it is disseminated to all replicas through the efficient \textit{T-RBC} service.

\vspace{1mm} \noindent \blackding{4} \textbf{DAG Construction.} Upon receiving a vertex through the \textit{T-RBC} service, a replica first verifies the validity of the vertex. If any referenced vertices in the validity proof are missing, the replica waits until they are received and validated. Once all dependencies are resolved, the vertex is appended to the local DAG. 

\vspace{1mm} \noindent \blackding{5} \textbf{Wave Leader Election.}
Each replica interprets its local DAG to determine the order of vertices in a wave-by-wave manner. In the fourth round of each wave, replicas leverage the shared randomness provided by the \textit{T-Coin} service to collaboratively elect a leader vertex (i.e., a vertex from the first round) for that wave.

\vspace{1mm} \noindent \blackding{6} \textbf{Vertex Ordering and Commitment.} Once a leader vertex is elected and satisfies the commit rule—specifically, when at least $f{+}1$ vertices from the fourth round of the same wave have a path to it—the leader vertex, together with all vertices in its causal history, is committed in a predetermined order.

Specifically, we observed that reducing the system size to $n{=}2f{+}1$ using the \textit{T-RBC} service introduces a potential \textit{liveness} issue such that the leader vertex can never satisfy a traditional two-round commit rule, which drives us to develop the four-round wave and commit rule. This four-round design and the corresponding commit rule preserve both safety and liveness while maintaining low commit latency, especially under adversarial asynchronous network conditions (see Sec.~\ref{sec:correctness_analysis} for details).

\vspace{1mm} \noindent \blackding{7} \textbf{Transaction execution.} Replicas execute the transactions contained in committed vertices and return the results to the client.

Steps \blackding{2}--\blackding{4} and \blackding{5}--\blackding{6} constitute the Dissemination Layer (\secref{subsec:dag_construction}) and Ordering Layer (\secref{subsec:commit}), respectively.  

\section{Trusted Components within TEEs} \label{sec:teedesign}
\subsection{TEE-Assisted Reliable Broadcast}~\label{subsec:mic}  

\vspace{-4mm}
\noindent \textbf{Trusted Component - MC.} The \textit{Monotonic Counter (MC)} is a trusted component designed to prevent message equivocation in RBC design.
In each round, the \textit{MC} gives its vertex a unique, verifiable identifier implemented by a monotonically increasing counter in TEE, ensuring that each replica can generate at most one valid vertex per round.
Its interface is as follows:

\begin{packeditemize}
    \item $\langle counter\rangle\gets\Name{GetCounter}\textit{(round-cert, v)}$:
    Given a vertex \textit{v} and a round certificate \textit{round-cert} which certifies that the replica has received at least $f{+}1$ valid vertices from the previous round (see~\secref{subsec:rac}), this function returns a fresh counter value, attaches it to \textit{v} along with a cryptographic proof, and atomically increments the counter by one. The counter value is set as 0 at initialization.
 \end{packeditemize}

\bheading{Reliable Broadcast with \textit{MC}.} Traditional RBC protocols (\eg, Bracha’s RBC~\cite{brachaRBC}) require three rounds of all-to-all communication and $n{=}3f{+}1$ replicas to tolerate $f$ Byzantine faults, resulting in quadratic message complexity and significant communication overhead (\secref{subsec:dissect}).
In contrast, \sysname integrates \textit{MC} to construct a \textit{TEE-assisted RBC (T-RBC)} where non-equivocation can be guaranteed with the cryptographically signed \textit{MC} value in each vertex. Thus, without the need to collect support from most honest replicas, \textit{T-RBC} has linear communication complexity and a lower system size of $n{=}2f{+}1$.

Since \sysname’s RBC operates on top of a DAG, we further leverage the causality of the DAG to resolve potential issues in which a vertex is received only by part of the honest replicas. 
In the DAG, vertices reference vertices from earlier rounds. When vertices are missing from the causal history of a newly received vertex $v$, a replica proactively requests them by sending a catch-up request message to the sender of $v$, with linear message complexity. The complete workflow of T-RBC is detailed in \apxref{app:subsec:trbc-flow}.

We formally prove T-RBC’s correctness in \apxref{app:proof:tee_rbc}. Overall, T-RBC achieves linear communication complexity, one-step communication latency in the good case, and operates with a reduced system size of $n{=}2f{+}1$ replicas.

\subsection{TEE-Assisted Round Certifier}~\label{subsec:rac}

\vspace{-4mm}
\noindent \textbf{Trusted Component - RAC.} The \textit{Round Advancement
Certifier (RAC)} accelerates round certification and vertex validation in the Dissemination Layer.
When a replica generates a vertex, it must reference a quorum of at least $f{+}1$ vertices from the previous round (\secref{subsec:dag_construction}).
To validate the references, RAC checks whether each is delivered through \textit{T-RBC} by verifying the \textit{MC} value within the vertex.
Upon successful validation of $f{+}1$ references to vertices from the previous round, RAC issues and cryptographically signs a \textit{round-cert} that is encoded as an $n$-bit bitmask, justifying its eligibility to propose a vertex in the current round.

Upon receiving a vertex, a replica can verify its validity by verifying the attached round certificate, ensuring that the vertex refers to $f{+}1$ vertices of the previous round.

This allows the replica to replace the $n-f$ hash references of size $\theta(\kappa n)$ with a compact certificate of size $\theta(\kappa+n)$ and reduce the number of cryptographic verifications from $\theta(n)$ to $\theta(1)$, substantially reducing CPU overhead.
RAC provides the following interface:
\begin{packeditemize}
\item $\langle$\textit{round-cert}$\rangle \gets \Name{ValidateVertices}(\textit{vertexList})$:
Given a list of referenced vertices, this function verifies whether the vertices have been delivered through \textit{T-RBC}, ensures that there are at least $f{+}1$ vertices, and validates their metadata correctness. If verified, it returns a round certificate $\langle$\textit{round-cert}$\rangle$.
The certificate is an $n$-bit mask indicating which vertices from the previous round are referenced (bit $i = 1$ iff the $i$-th vertex is included). 
\end{packeditemize}

\subsection{TEE-Assisted Common Coin}~\label{subsec:rang}

\vspace{-4mm}
\noindent \textbf{Trusted Component - RNG.} In the Ordering Layer, the random number generator (RNG) replaces the cryptographic common coin with a lightweight, TEE-assisted alternative for leader vertex selection. Previous DAG protocols implement the common coin via threshold cryptography, which is costly~\cite{thresholdlatency}.
\textit{RNG} instead uses a shared seed inside each TEE to produce identical, verifiable leader vertices across correct replicas. At startup, a DKG protocol~\cite{Shamirsecret} seeds \textit{RNG} consistently across all TEEs.

The leader vertices are generated wave by wave, where each wave consists of four rounds, and the leader vertex is in the first round. To request the random leader vertex value of a wave, a valid round certificate $\langle$\textit{round-cert}$\rangle$ for the fourth round, issued by the \textit{RAC}, is required. This design ensures that the random leader vertex becomes available only after a quorum of at least $f{+}1$ replicas have proposed vertices in the fourth round.

The four-round commit rule determines commitment by checking whether a sufficient number of fourth-round vertices have paths to the leader vertex (\secref{sec:commitrule}). 
If Byzantine nodes were able to know the leader vertex in advance, the adversary could undermine liveness by deliberately excluding the leader vertex from their causal histories and manipulating message delivery times to reduce the likelihood of its commitment. 
\textit{RNG} thus preserves the security and unpredictability of the traditional common coin while reducing computation to a single trusted call.
Its correctness is formally proved in \apxref{app:proof:tee_coin}. The \textit{RNG} has the following interfaces:
\begin{packeditemize}
\item $\langle \rho_r \rangle \gets \Name{Rand}(\textit{round-cert})$:
Given a valid \textit{round-cert} for round~$r+3$, returns the corresponding leader vertex $\rho_r$ of round $r$ if the \textit{round-cert} is verified.
\end{packeditemize}

\section{Dissemination Layer} \label{subsec:dag_construction}
We first introduce the basic structure of the DAG, and then describe its construction, which forms the dissemination layer of \sysname.

\begin{figure}[t!]
    \vspace{-3mm}
    \begin{myprotocol}
    \setcounter{ALC@line}{0}

    \SPACE \textbf{\underline{Data Structures (for replica $p_i$)}}

    \SPACE\quad \textbf{Vertex $v$:}
    \SPACE\quad\quad $v.\textit{round}$ --- the round index of $v$ in the DAG
    \SPACE\quad\quad $v.\textit{source}$ --- the replica that broadcasts $v$
    \SPACE\quad\quad $v.\textit{txns}$ --- the batch of transactions
    \SPACE\quad\quad $v.\textit{strongEdges}$ --- a bitmask ($\textit{round-cert}$) indexed by vertices in round $v.\textit{round}-1$ representing \textit{strong edges}
    \SPACE\quad\quad $v.\textit{weakEdges}$ --- a list of hashes of vertices in rounds $< v.\textit{round}-1$ representing \textit{weak edges}
    \SPACE\quad\quad $v.\textit{counter}$ --- the signed \textit{MC} value

    \SPACE\quad \textbf{In-Memory Variables:}
    \SPACE\quad\quad $DAG_i$ --- the local view of DAG of $p_i$, a map from round number $r$ to the set of vertices known to replica $p_i$ in round $r$
    \SPACE\quad\quad $txnsToPropose$ --- a queue of client transactions that $p_i$ received and has not proposed

    \SPACE
    \FUNCTION{$\textsc{Path}$}{$v, u$}
        \RETURN true if there exists a path of $v$ to $u$ in $DAG_i$. 
    \ENDFUNCTION

    \SPACE
    \FUNCTION{$\textsc{StrongPath}$}{$v, u$}
        \RETURN true if there exists a path of $v$ to $u$ in $DAG_i$ such that all edges on the path are strong edges.
    \ENDFUNCTION


    \end{myprotocol}

    \vspace{-3mm}
    \caption{Data Structures and Basic Utilities for Replica $p_i$.}
    \vspace{-5mm}
    \label{algorithm:utilities}
\end{figure}

\subsection{DAG Vertex Structure}
\figref{algorithm:utilities} shows the pseudocode for the DAG data structure and its basic utilities. Each vertex in the DAG contains: (i) associated metadata (\eg, round number, source), (ii) a batch of transactions, (iii) references to vertices in previous rounds\footnote{We ignore the \textit{MC} value here as it is calculated during \textit{T-RBC}.}. The references are divided into two types:

\begin{packeditemize}
\item \textbf{Strong Edges:} a bitmask (\textit{round-cert}) indexed by vertices showing references to at least $f{+}1$ vertices from the immediately preceding round $v.round-1$.

\item \textbf{Weak Edges:} a list of hashes showing references to vertices from rounds lower than $v.round-1$. 
\end{packeditemize}


\begin{figure}[!t] \label{algorithm:dagconstuction}
    \vspace{-3mm}
    \begin{myprotocol}
    \setcounter{ALC@line}{6}
    \TITLE{DAG Construction}{running at each replica$_i$}
    \SPACE \textbf{Local variables:}\quad $r \GETS 0$; $buffer \GETS \{\}$
    \SPACE Suppose each replica has a genesis vertex $v_0$

    \SPACE
    \EVENT[Receive $v$ from T-RBC]{$Deliver_i(v, round,p_k)$} \label{line:r_deliver}
    \IF[Verify the vertex]{\Name{VerifyVertex($v$)}} \label{line:vertex_validity}
    \IF{$\forall v' \in v.strongEdges \cup v.weakEdges: v' \in \bigcup_{k\geq1}DAG_i[k]$} \label{line:vertices_in_DAG} 
    \STATE $DAG_i[v.round] \GETS DAG_i[v.round] \cup \{v\}$
    \ELSE
        \STATE \Name{Defer} $v$ until any new vertex committed
    \ENDIF
    \ENDIF
    \ENDEVENT

    \SPACE
    \EVENT{$\lvert DAG_i[r] \rvert \geq f{+}1$} \label{line:advance_round}
    \SPACE \COMMENT{The quorum is adjusted to $f{+}1$ out of $2f{+}1$}
    \IF{\textcolor{black}{$r>1\wedge(r-1)\ mod\ 4 = 0$}} 
    \STATE \Name{WaveReady} ($\dfrac{r-1}{4}$) \COMMENT{Each wave contains 4 rounds}
    \ENDIF
    \STATE $\textit{round-cert} \GETS RAC.\Name{ValidateVertices}(DAG_i[r])$ \label{line:RAC}
    \STATE $r \GETS r + 1$;\quad $v \GETS$ \Name{CreateNewVertex}($r, \textit{round-cert}$) \label{line:generate_vertex}
    \STATE $\textsc{TBcast}_i(v,r)$ \label{line:rbc}
    \ENDEVENT
    
    \SPACE
    \FUNCTION{CreateNewVertex}{$r, \textit{round-cert}$}
    \STATE $v.txns$ \GETS ~$txnsToPropose$
    \STATE $v.strongEdge$ \GETS ~$DAG_i[r-1]$ \label{strongedge}
    \STATE \Name{SetWeakEdges}($v$,$r$)
    \RETURN $v$
    \ENDFUNCTION
    
\SPACE

    \FUNCTION{SetWeakEdges}{$v$,$r$} \label{weakedge}
    \STATE Add every prior vertex not reachable in DAG as weak edges
    \ENDFUNCTION
    \end{myprotocol}
    \vspace{-3mm}
    \caption{Pseudocode for DAG Construction.} 
    \vspace{-3mm}
    \label{algorithm:dag_construction}
\end{figure}

\subsection{DAG Construction}

\sysname proceeds in a round-by-round manner. Once a replica $p_i$ receives $f{+}1$ valid vertices from round $r$, it advances to round $r{+}1$, forming a new vertex $v$ referencing these vertices using strong edges and including a \textit{round-cert} of round $r$, and then broadcasts $v$, incrementally constructing the DAG.

\figref{algorithm:dag_construction} illustrates the DAG construction in \sysname, including vertex creation, verification, and integration into the DAG. Replicas begin with the genesis vertices and progress round by round. A replica advances to the next round once it has received at least $f{+}1$ vertices in the current round (line~\ref{line:advance_round}).

To ensure correctness, \sysname employs the \textit{T-RoundCert} mechanism. Before moving to the next round, a replica $p_i$ invokes \textit{RAC} to validate quorum criteria (line~\ref{line:RAC}), specifically that $f{+}1$ vertices from distinct replicas have been delivered. Upon successful validation, \textit{RAC} issues a round certificate (\textit{round-cert}), authorizing $p_i$ to advance. Then, $p_i$ generates a new vertex (line~\ref{line:generate_vertex}) and disseminates it to all replicas via \textit{T-RBC} (line~\ref{line:rbc}). The TEE-assisted services ensure that all vertices meet the vertex structure requirements and are delivered reliably, preserving the causal structure of the DAG.

Each replica maintains its own local view of the DAG. Upon receiving a vertex, the replica fetches all missing referenced vertices before adding the new vertex (line~\ref{line:vertices_in_DAG}). 
Although temporary divergences may occur due to network asynchrony, \textit{T-RBC} guarantees that all non-faulty replicas eventually receive the same set of vertices (Lemma~\apxlmref{lm:DAGview}{app:reliable_diss}). 
The differences in delivery order do not affect correctness because the DAG’s topological order is determined by vertex references, as shown in the example below. 

\begin{figure}[t]
\vspace{-3mm}
\begin{myprotocol}
\setcounter{ALC@line}{29}
\TITLE{\sysname}{running at each replica $p_i$}
\SPACE \textbf{Local variables:}
\SPACE $decidedWave$ \GETS $0$; $deliveredVertices, leadersStack$ \GETS ~\{\}

\SPACE

\EVENT{On receiving a block $b$ from clients}
\STATE $txnsToPropose$.\Name{Enqueue}($b$)
\ENDEVENT

\SPACE
\FUNCTION{WaveReady}{$w$} \label{line:wavereadystart}
\STATE $v$ \GETS ~\Name{GetWaveVertexLeader}($w$)
\IF{$v= \ \perp \vee \ \lvert \{v' \in DAG_i[round(w,4)]: \Name{StrongPath}(v',v)\}\rvert < f{+}1$} \label{f+1commit}
\RETURN
\ENDIF
\STATE $leadersStack$.\Name{Push}($v$) \label{line:pushleader1}
\FOR{wave $w'$ from $w$ - 1 down to $decidedWave$ + 1} \label{line:recursivecheck}
\STATE $v' \GETS \Name{GetWaveVertexLeader}(w')$
\IF{$v'\neq\ \perp \wedge\ \Name{StrongPath}(v,v')$}
\STATE  $v \GETS v'$;\quad $leadersStack$.\Name{Push}($v'$) \label{line:pushleader2}
\ENDIF
\ENDFOR
\STATE $decidedWave \GETS w$
\STATE \Name{OrderVertices}($leadersStack$)
\ENDFUNCTION \label{line:wavereadyend}

\SPACE


\FUNCTION{GetWaveVertexLeader}{$w$}
\STATE $j \GETS RNG.\Name{Rand}_i(v.$\textit{round-cert}$)$ \label{line:rand}
\IF{$\exists v \in DAG_i[round(w,1)] ~s.t.\ v.source = p_j$}
\RETURN $v$
\ENDIF
\RETURN $\perp$
\ENDFUNCTION

\SPACE


\FUNCTION{OrderVertices}{$leadersStack$} \label{line:ordervertices_begin}
\WHILE{$\lnot leadersStack$.\Name{IsEmpty}()}
\STATE $v \GETS leadersStack$.\Name{Pop}() \label{line:popleader}
\STATE $ToDeliver \GETS \{v' \in \bigcup_{r>0} DAG_i[r] ~\lvert$\\ ~~~~~~~~~~~~~~~~~~~~~~~ \Name{Path}$(v,v') \wedge v' \notin deliveredVertices\}$
\FOR{\textbf{every} $v' \in ToDeliver$ in predefined order} \label{line:predefined_order}

\STATE \textbf{output} $\Name{CommitBlock}_i(v')$
\STATE $deliveredVertices \GETS deliveredVertices ~\bigcup~ \{v'\}$
\ENDFOR
\ENDWHILE \label{line:ordervertices_end}
\ENDFUNCTION 


\end{myprotocol}
\vspace{-3mm}
\caption{Pseudocode of \sysname's Byzantine Consensus.} \label{algorithm:main}
\vspace{-3mm}
\end{figure}

\section{Ordering Layer} \label{subsec:commit}
\sysname requires no additional communication for transaction commitment, consistent with existing DAG-based protocols~\cite{DAGRider, narwhal, Bullshark}. Vertices are organized into \textit{waves}, each consisting of $k$ \textit{rounds} of vertex generation.
Within each wave, the rounds collectively form a \textit{common core} problem—determining how many vertices from the first round can be safely committed. To ensure liveness, at least one vertex must satisfy the commitment condition. Moreover, the number of such vertices determines the probability that a randomly chosen leader vertex can be successfully committed, which directly impacts the expected latency. The optimal number of rounds per wave, $k$, is derived analytically in \secref{sec:bound}.

At the final round of each wave, \sysname invokes a lightweight \textit{T-Coin} to randomly select a leader vertex among those generated in the first round. Each replica independently verifies whether the leader vertex belongs to the \textit{common core}. If the condition holds, all replicas deterministically order the vertices within the wave and commit the corresponding transactions. The complete commit rule, including how replicas apply the coin outcome to resolve ties and maintain consistent views, is detailed in \secref{sec:commitrule}.

\subsection{Dissecting Liveness with $f{+}1$ Quorums} \label{sec:bound}
We now derive the optimal number of rounds $k$ that minimizes the expected commit latency.
First, as said above, the rounds in a wave are abstracted as a \textit{$k$-iteration common core problem}, which generalizes the original common core formulation used in prior DAG consensus protocols (e.g., Tusk~\cite{narwhal}, where $k=2$).
Notably, the original common core was designed for standard Byzantine settings, where the threshold is $2f{+}1$ among $3f{+}1$ replicas.
In contrast, \sysname leverages TEE-assisted trust to reduce the threshold to $f{+}1$ among $2f{+}1$ replicas.
This change invalidates a direct reuse of existing results from the original common core formulation.

\subsubsection{$k$-Iteration Common Core Abstraction} \label{sec:core}
We first use $x_i$ to denote the input of replica $p_i$, \ie, the vertex it generates in the first round of a wave in \sysname.
Let $N$ denote the total number of participants in the common core instance, equal to the number of replicas in \sysname.
Throughout, we fix $N = 2f{+}1$ as in \secref{sec:systemmodel}.
The protocol proceeds in $k$ conceptual rounds, where each round builds upon the aggregation results of the previous one. 

\begin{packedenumerate}  
    \item \textbf{Initial Broadcast ($r = 1$).}  
    Each party broadcasts its input $x_i$ using the Reliable Broadcast protocol (RBC).

    \item \textbf{Recursive Aggregation (for $r = 2, 3, \ldots, k$).}  
    Each party $i$ maintains a set $A_i^r$ representing its aggregation result at round $r$.

    \begin{enumerate}
        \item For $r = 2$: define $A_i^2 = \{(j, x_j)\}.$ Once $|A_i^2| \ge f{+}1$, send $A_i^2$ to all parties.
        
        \item For $r > 2$: upon receiving $A_j^{r-1}$ from party $j$, accept it after receiving all broadcasts $x_k$ for every $(k, x_k) \in A_j^{r-1}$.  
        After accepting $f{+}1$ such sets, compute and broadcast $A_i^r = \bigcup_j A_j^{r-1}.$
    \end{enumerate}

    \item \textbf{Output.}  
    After completing $k$ aggregations, output $U_i = A_i^k.$
\end{packedenumerate}

The $k$-iteration common core protocol satisfies three properties:

\begin{packeditemize}
    \item \textbf{Common Core.}
    There exists a core set $S^*$ ($|S^*| > 0$) such that every non-faulty party includes $S^*$ in its output set.

    \item \textbf{Validity.}
    If a non-faulty party includes a pair $(j, x_j)$ in its output and party $j$ is non-faulty, then $x_j$ must be $j$’s original input.

    \item \textbf{Agreement.}
    All parties that include a pair for some sender $j$ must agree on its value.
    Formally, if two non-faulty parties include $(j, x)$ and $(j, x')$ in their outputs, then $x = x'$.
\end{packeditemize}

The agreement and validity properties can be satisfied by the RBC protocol (\apxref{app:build_block}). 
The number of rounds $k$ determines the size of $S^*$, and the maximum size of $S^*$ is bounded by $f{+}1$, since the $f$ adversaries can maliciously avoid generating vertices in certain rounds.

\bheading{Bounds of $|S^*|$.} To ensure system liveness, we must guarantee that $|S^*| \geq 1$.
Since only the leader vertex is selected from $S^*$, a wave can be committed only if this condition holds. If $|S^*| = 0$ under some adversarial scenario, the expected latency diverges to infinity, indicating that no wave can ever be committed.

\bheading{Expected Latency.} The size of $S^*$ determines the probability of successful commitment, further influencing the expected commit latency.
We now derive the expected commit latency of \sysname.

\begin{lemma}[Expected Latency]\label{lm:expected_latency}
    The expected latency of committing a wave is $ \frac{k \times N}{|S^*|}$, where $\frac{|S^*|}{N}$ denotes the expected probability of committing a single wave.
\end{lemma}

\begin{proof}
In the $k$-iteration common core abstraction, each round involves communication and aggregation among replicas.
The randomly elected leader (\secref{subsec:rang}) in a wave can be committed only if it belongs to $S^*$.
Hence, the probability of successfully committing a wave is $p = \frac{|S^*|}{N}$.
Given that each wave proceeds sequentially through $k$ rounds, the total expected latency (measured in the number of communication rounds) is:

\vspace{-3mm}
\[
\text{Expected Latency} = k \times \frac{1}{p} = \frac{k \times N}{|S^*|}.
\]
\vspace{-7mm}

\end{proof}

\subsubsection{Theoretical Analysis.}~\label{sec:bounds}
We now determine the number of rounds $k$ in \sysname that guarantees the common-core property (\ie, liveness) and yields low commit latency. Before that, we first introduce some notations for analysis. 
We use $R_1, \dots, R_k$ to denote the vertices in the rounds of a wave. 
Besides, for a vertex $x\in R_1$, we use $supp_k(x)$ to denote the number of round-$k$ vertices that are (causally) carried from $x$.
As shown in \apxref{app:bridge}, any lower bound on the number of first-round vertices satisfying $supp_k(x)\ge f{+}1$ in the DAG directly implies the same lower bound on $|S^\ast(k)|$, so we can equivalently reason in the DAG domain.

We next analyze feasibility by incrementally trying $k{=}2$---the wave shape adopted by early DAG protocols---to larger $k$.


\bheading{Liveness Violation for $k{=}2$.}
The appendix~\cite{fides_appendix} demonstrates that an asynchronous adversary can schedule a two-round wave so that no first-round vertex reaches $\mathrm{supp}_2(x)\!\ge\! f{+}1$, i.e., $|S^\ast(2)|{=}0$. 
Hence, the per-wave success probability $p_2{=}0$, and liveness cannot be guaranteed in adversarial networks.

\bheading{Liveness Exists, but Not Enough for $k{=}3$.}
The appendix~\cite{fides_appendix} shows that \textit{exactly two} first-round vertices reach the threshold in the worst case, so $|S^\ast(3)|{=}2$. We have
\[
p=\frac{2}{2f{+}1} \quad\Rightarrow\quad
\mathbb{E}[\textnormal{rounds}]=3\cdot\frac{2f{+}1}{2}=3f+\tfrac{3}{2}.
\]
Liveness holds, but the expected commit time grows linearly with $f$, so there is \textit{no constant-round guarantee}.
We thus have the theorem:
\begin{theorem}[Minimum Rounds ($k$)]\label{thm:min_rounds}
The $k$-iteration common core property holds when $k \ge 3$.
\end{theorem}

\begin{proof}[Sketch]
By the appendix proof~\cite{fides_appendix} and the DAG$\Rightarrow$Common-Core bridge, an adversary can realize $S^\ast(2)=\varnothing$; thus $k{=}2$ fails liveness. 
For $k{=}3$, exactly two first-round vertices achieve $\mathrm{supp}_3(\cdot)\!\ge\! f{+}1$, hence $|S^\ast(3)|{=}2>0$ via the bridge. 
Therefore, a non-empty common core exists if and only if $k\ge3$.
\end{proof}

\bheading{Optimal Constant-Time Threshold for $k{=}4$.}
With four rounds, our worst-case construction shows that \(|S^\ast(4)|\ge f{+}1\) always holds.
Combined with the upper bound in the appendix~\cite{fides_appendix}, this yields tightness: \(|S^\ast(4)|=f{+}1\) in the worst case.
Then
\[
p=\frac{f{+}1}{2f{+}1}\quad\Rightarrow\quad
\mathbb{E}[\textnormal{rounds}]=4\cdot\frac{2f{+}1}{f{+}1}\le 8,
\]
establishing a \textit{constant} worst-case expected commit latency.

\bheading{No Benefit in the Worst Case for $k{>}4$.}
An adversary can always suppress \(f\) proposers in round~1, forcing \(|R_1|=f{+}1\). Lemma~\apxlmref{lm:lge4-upper-fplus1}{app:subsec:liveness} caps the worst-case common core at \(|S^\ast(k)|\le f{+}1\), independent of \(k\). 
Hence, increasing the wave length beyond four cannot enlarge the worst-case core and only \textit{increases} the expected commit latency (Lemma~\apxlmref{lm:four-rounds-optimal}{app:subsec:liveness}), which yields the following theorem:

\begin{theorem}[Optimal Number of Rounds] \label{thm:optimal_k}
In \sysname, the optimal number of rounds per wave is $k = 4$.
\end{theorem}

\vspace{-4mm}
\begin{proof}[Sketch]
With a global common coin, $p= \frac{f{+}1}{2f{+}1}$, so $\mathbb{E}[\mathrm{rounds}]=4\cdot\frac{2f{+}1}{f{+}1}\le 8$ (Lemma~\apxlmref{lm:four_round_commit_exp}{app:subsec:liveness}), \ie, constant-round commitment. 
Further for any $k>4$, the same upper bound $|S^\ast(k)|\le f{+}1$ holds while $k$ increases, thus the worst-case expected latency strictly worsens (Lemma~\apxlmref{lm:four-rounds-optimal}{app:subsec:liveness}). 
Therefore, $k=4$ is optimal among constant-$k$ designs.
\end{proof}


\vspace{-3mm}
\bheading{Conclusion:} We summarize the choice of round number \(\boldsymbol{k}\) as:
\begin{itemize}[leftmargin=10pt]
    \item $k=2$: $|S^\ast| = 0$ — no liveness guarantee.
    \item $k=3$: $|S^\ast| = 2$ — ensures liveness, but commit may require $\Theta(f)$ rounds.
    \item $k=4$: $|S^\ast| = f{+}1$ — ensures liveness and constant-round commit.
    \item $k>4$: $|S^\ast| = f{+}1$ — additional rounds only increase latency without improving commit.
\end{itemize}
Therefore, we choose $k=4$ as the optimal rounds in a wave, achieving the best worst-case commit latency.

\subsection{Commit Rule under $k=4$ Regime} \label{sec:commitrule}
Having fixed $k=4$ under this quorum, we now instantiate the ordering layer. 
\sysname operates in four-round waves, and commitment is derived wave by wave directly from the DAG with no extra consensus messages---consistent with prior DAG-BFT designs. 
The commit rule process in \sysname, as outlined in \figref{algorithm:main}, begins with a DKG protocol \cite{Shamirsecret} at startup to set up the initial state of the TEE. 
A key improvement of \sysname is the utilization of the lightweight \textit{T-Coin} to generate globally consistent random values for leader election. 
This coin introduces negligible computational and communication overhead, ensuring efficient leader election.

Specifically by Theorem~\ref{thm:optimal_k}, we adopt a wave with four consecutive rounds to preserve liveness and optimal commit latency (as proven in \apxref{app:subsec:liveness}). 
Conceptually, during the first round, replicas propose vertices that encapsulate their entire causal history. 
The second and third rounds involve replicas voting on these proposals by referencing them in subsequent vertices to enhance the connectivity of the DAG. 
The fourth round uses the \textit{T-Coin} to retrospectively elect a leader vertex from the first round of the wave (line~\ref{line:rand}). 
A leader vertex is committed if it is referenced by at least $f{+}1$ vertices in the fourth round (line~\ref{f+1commit}). 
Once a leader vertex is committed, all its preceding causal history is transitively committed in a predefined order (line~\ref{line:predefined_order}, \eg, depth-first search).



Since replicas may have divergent DAG views, not all commit a leader each wave.
\sysname employs a recursive mechanism (line~\ref{line:recursivecheck}):
\begin{packedenumerate}
    \item Once a leader is committed in wave $w$, it becomes the candidate for transaction ordering.
    \item The system recursively checks the preceding waves to identify the most recent wave $w'$ with a committed leader.
    \item For waves between $w'$ and $w$, paths between the current candidate leader and the leaders of earlier waves are transitively committed in ascending order ($w'+1, w'+2, \ldots, w$).
\end{packedenumerate}
\figref{fig:commit} illustrates a recursive committing process in \sysname.
Consequently, an elected wave leader vertex $l$ of wave $w$ is committed by a non-faulty replica $p$ through either of the following mechanisms:
\begin{packedenumerate}
    \item \textbf{Direct Commit:} The leader $l$ is directly committed when, in the replica's local view, it is referenced by at least $f{+}1$ vertices from the fourth round of wave $w$.
    \item \textbf{Indirect Commit:} The leader $l$ is indirectly committed if there exists a path from $l_s$ to $l$, where $l_s$ is the directly committed leader vertex of a wave $w_s > w$.
\end{packedenumerate}



\sysname achieves better fault tolerance (\ie, tolerating $f$ faults in an $n{=}2f{+}1$ setting) and enhanced scalability, with safety and liveness formally proven in \apxref{app:correctproof}.
The appendix~\cite{fides_appendix} establishes that \sysname can expect to commit a leader vertex approximately every \textbf{four} rounds under random message delays, and every \textbf{eight} rounds under an asynchronous adversary.

\section{Correctness Analysis} \label{sec:correctness_analysis}
We here provide a sketch of \sysname's correctness, while deferring full proofs to \apxref{app:correctproof}.
Before that, we first prove the correctness of two building blocks, \ie, \textit{T-RBC} and \textit{T-Coin}.

\bheading{Correctness Analysis of Building Blocks.} 
\sysname utilizes two standard primitives: \textit{Reliable Broadcast} and \textit{Global Common Coin}.
We further instantiate them as TEE-assisted Reliable Broadcast \textit{(T-RBC)} and TEE-assisted Common Coin \textit{(T-Coin)}.
\begin{packedenumerate}
    \item For \textit{T-RBC}, we prove the RBC guarantees, \ie, Validity, Integrity, and Agreement, even under Byzantine faults~\cite{brachaRBC}. 
    
    \item For \textit{T-Coin}, we also prove the Global Common Coin properties, \ie, Agreement, Termination, and Unpredictability~\cite{DAGRider}.
    
\end{packedenumerate}
Full proofs of these building blocks appear in \apxref{app:build_block}.


\bheading{Safety Sketch Analysis of \sysname.} Safety follows from two consistency ingredients (\apxref{app:subsec:safety}).
The proof proceeds as follows:

\begin{packedenumerate}
    \item \textit{Unique leader per wave.} Any two correct replicas select the same leader for the same wave. This holds because, after converging to an identical DAG view (\textit{Reliable Dissemination}), correct replicas feed identical inputs and seeds into \textsc{RNG},  thereby deterministically electing the same leader (Lemma~\apxlmref{clm:uniquewaveleader}{app:subsec:safety}). 
    \item \textit{Canonical commit order.} All correct replicas commit leader vertices in the same ascending order of waves.  This is guaranteed because the algorithm commits leaders via a LIFO stack and pops them in increasing wave order (Lemma~\apxlmref{lm:sameorder}{app:subsec:safety}).
\end{packedenumerate}

Together, \sysname guarantees safety: replicas always agree on the elected leaders and commit them in a consistent order. Hence, the causal history of each leader is committed in the same order across all replicas, ensuring that the complete sequence of committed transactions is identical (Theorem~\apxlmref{thm:safety}{app:subsec:safety}).


\bheading{Liveness Analysis of \sysname.}
Liveness in \sysname follows from two key properties (with full arguments in \apxref{app:subsec:liveness}), bolstered by our earlier analysis in Sec.~\ref{sec:bound}.
The outline is as follows:

\begin{packedenumerate}
\item \textit{Non-zero per-wave commit probability via $k=4$.} Any wave always contains at least $f{+}1$ first-round vertices that meet the support threshold (Sec.~\ref{sec:bound}). 
Thus, each wave has a strictly positive probability of committing in both adversarial and random networks, which ensures progress in expectation.
\item \textit{Eventual inclusion of all vertices.} 
When a wave leader is committed, the protocol delivers its entire causal history in a deterministic order.
By utilizing weak edges and repeated commitment across waves, every vertex will be included in some committed leader’s causal history. 
Hence, every correct replica eventually commits every vertex (Lemma~\apxlmref{lm:vertex_commit}{app:subsec:liveness}).

\end{packedenumerate}
Together, these properties establish liveness: every wave progresses with a non-zero constant probability, and all vertices are eventually incorporated into the causal histories of committed leaders to ensure their final commitment (Theorem~\apxlmref{thm:Liveness}{app:subsec:liveness}).

\section{Evaluation} \label{sec:evaluation}
We compare the performance of \sysname with several state-of-the-art BFT consensus protocols: Tusk~\cite{narwhal} (an async DAG protocol), Bullshark~\cite{Bullshark} (an async DAG protocol with a partial sync fast path), Mysticeti~\cite{babel2024mysticeti} (a partial sync DAG BFT protocol), \new{Shoal++~\cite{shoal++} (a partial sync DAG protocol with pipelined anchors),} RCC~\cite{gupta2021rcc} (a multi-leader BFT protocol), Damysus~\cite{Damysus} (a TEE-assisted BFT protocol), Achilles~\cite{Achilles} (a TEE-assisted rollback-resilient protocol) and HybridSet~\cite{TrustedHardwareAssisted} (a TEE-assisted leaderless BFT protocol).
We measure throughput and latency under varying conditions (\eg, faulty nodes, network latency, batching). 
Our experiments aim to answer the following questions:
\begin{packeditemize}
    \item \textbf{Q1:} How does \sysname scale with larger $f$, batch sizes, and workload compared to prior DAG protocols? (\secref{subsec:scalability})

    \item \textbf{Q2:} How well does \sysname sustain performance under injected delays and leader failures? (\secref{subsec:resiliency})

    \item \new{\textbf{Q3:} How does \sysname behave under adversarial conditions, including crash faults, Byzantine attacks, and asynchrony? (\secref{subsec:resiliency})}

    \item \textbf{Q4:} What is the breakdown of \sysname's end-to-end latency, and how do individual and combined TEE components contribute to the overall performance? (\secref{subsec:breakdown})
    
\end{packeditemize}

\begin{figure*}[t]
    \vspace{-4.5mm}
    \centering
    \includegraphics[width=1\linewidth]{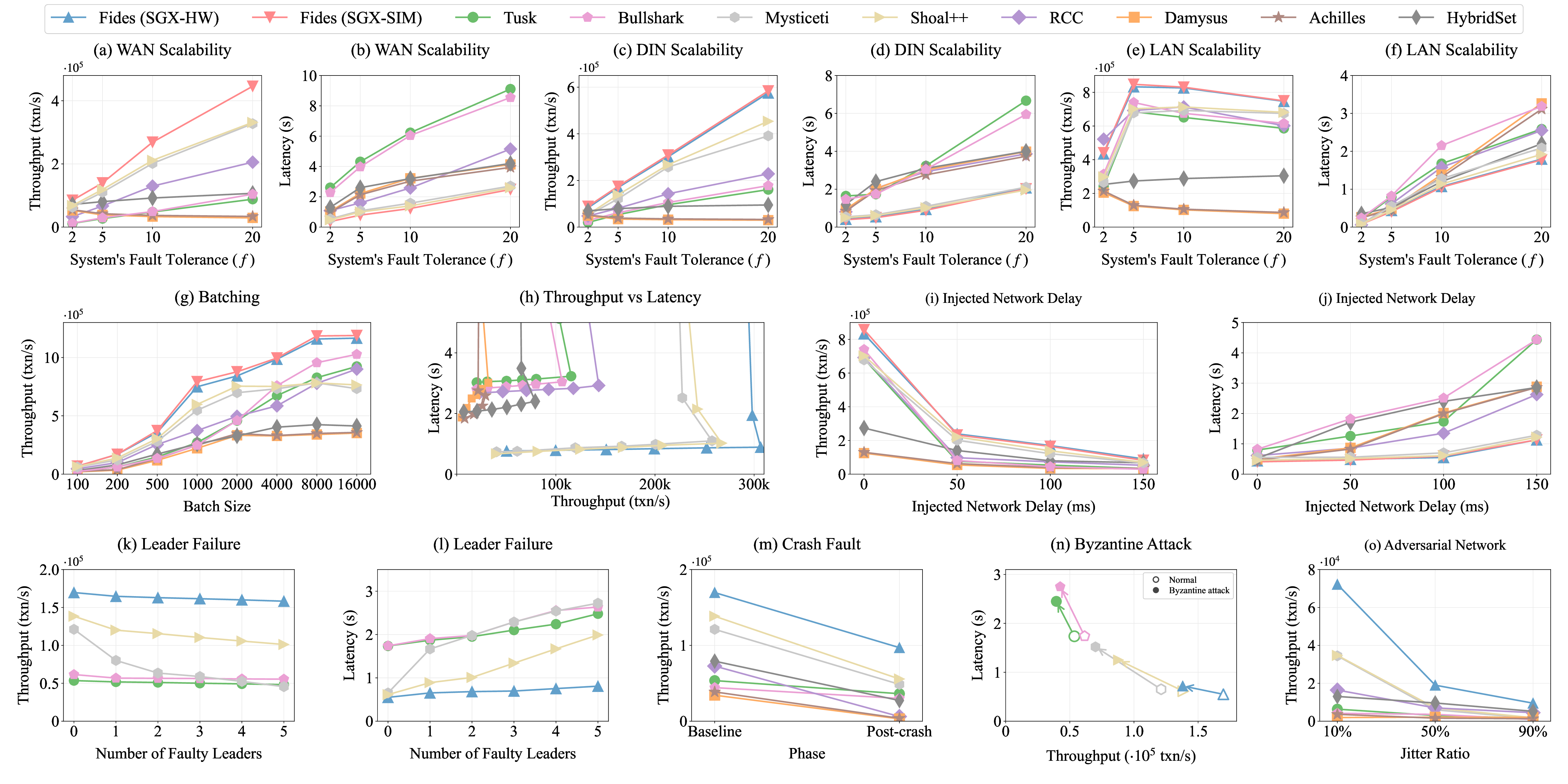}
    \vspace{-9mm}
    \captionsetup{labelfont={bf}}
    \caption{\new{Results for Scalability and Resiliency Evaluation.}}
    \vspace{-5mm}
    \label{fig:scalability}
\end{figure*}

\subsection{Implementation and Evaluation Setup}
\bheading{Implementation.} We built \sysname atop the open-source Apache ResilientDB platform~\cite{resdb, resdbdemopaper, resdbpaper}, a scalable and global blockchain infrastructure. 
Our implementation is developed in C++~\cite{sourcecode} and leverages Apache ResilientDB for durable data storage.
We use TCP to implement reliable point-to-point communication. 
The core trusted components (\ie, \textit{MC}, \textit{RAC}, and \textit{RNG}) run within Intel SGX enclaves~\cite{intelsgx} in each replica, integrated via the Open Enclave SDK~\cite{openenclave2022}.
\new{All protocols share ResilientDB's networking, storage, and cryptographic stack, differing only in the consensus layer. Reported measurements include all TEE-related overheads if applicable.}

\bheading{Evaluation Setup.}
We conducted experiments on Alibaba Cloud Elastic Compute Service~\cite{alibaba-confidential-computing}. 
SGX-enabled runs used \texttt{ecs.g7t.xlarge} instances in the Hong Kong region, which support Intel SGX; baseline (non-SGX) runs used \texttt{ecs.g7.xlarge} instances. 
Both instance types provide 4 vCPUs and 16 GB of memory, and all VMs ran Ubuntu 22.04.
Each process ran on a dedicated VM configured with 4 vCPUs and 16 GB RAM, running Ubuntu 22.04.

In the LAN environment, all machines were deployed within the same data center in Hong Kong. 
For WAN experiments, due to the limited availability of SGX-enabled instances, we adopt two complementary settings:
\begin{packeditemize} 
\item \textbf{Wide Area (Geo-Distributed) Network (WAN):} We deploy replicas across four data centers located in the United States, UAE, Singapore, and Germany, with inter-datacenter latencies ranging from 80 ms to 270 ms. 
Each replica runs trusted components in \textit{simulated SGX} mode (\textit{SGX-SIM})~\cite{sgxsimulate}\new{, as SGX-enabled instances are unavailable in some regions; we justify this in \apxref{sec:sgx-overhead}.}

\item \textbf{Delay-Injected Network (DIN):} We enable \textit{SGX Hardware encryption} (\textit{SGX-HW}) but emulate WAN conditions within a region by injecting specific end-to-end delays between machines. A default delay of 100 ms is used to represent a typical WAN RTT. 
\end{packeditemize}
The WAN deployment enables us to evaluate the impact of real network delays in wide-area settings. Meanwhile, the DIN environment allows for the measurement of the actual performance of SGX under WAN-like latency conditions.

We performed experiments in three network environments: LAN, WAN, and DIN. 
In each setup, each node exclusively ran one instance of \sysname, 
with the number of clients matching the number of consensus processes to ensure that replicas were fully saturated.
Clients generated a saturating request load for replicas.  
The benchmark was a key-value service, where each request involved a randomly generated key-value insertion with a fixed size of 50 B.
We performed three runs per condition and plotted the median value.

\subsection{Scalability Evaluation} \label{subsec:scalability}

\subsubsection{Varying Fault Tolerance.} \label{subsubsec:faulttolerance} \label{evaluation:fault_tolerance}
We evaluate performance across varying fault-tolerance levels. 

\bheading{Performance under WAN.} In WAN environments, where network latencies are higher, scalability is a critical challenge for consensus protocols. 
\figref{fig:scalability}(a) and (b) depict throughput and latency as the system's fault tolerance $f$ increases from 2 to 20 in the WAN setting.
Across the entire range, \sysname demonstrates superior scalability:
Its throughput remains highest across all $f$, exceeding Tusk by over $4\times$ and doubling RCC at higher $f$.
In contrast, other protocols either saturate early or suffer sharply rising latency as faults increase.
\new{Shoal++ and Mysticeti are the closest DAG competitors, yet at $f{=}20$ \sysname still outperforms both by $\sim$35\%, owing to its reduced quorum size ($f{+}1$ out of $2f{+}1$ vs.\ $2f{+}1$ out of $3f{+}1$) and lower message complexity.}
Damysus and Achilles exhibit significantly lower throughput because their leader-centric communication patterns suffer especially in cross-region scenarios.
HybridSet performs slightly better than the other two but still lags behind \sysname, as it relies on a hybrid binary agreement to decide on each proposal.


\bheading{Performance under DIN.}
To evaluate the overhead introduced by TEE, particularly in \textit{SGX-HW} mode, we conducted tests in DIN environments. 
\figref{fig:scalability}(c) and (d) illustrate the scalability results in this scenario.
In DIN environments, \sysname-HW and \sysname-SIM perform almost identically.
The negligible performance gap between the HW and SIM modes demonstrates that the real hardware-based encryption in SGX-HW mode introduces minimal computational overhead.
Both of them significantly outperform all other protocols---consistent with results shown under WAN.


\bheading{Performance under LAN.} \figref{fig:scalability}(e) and (f) illustrate the results in LAN, where network latencies are negligible.
Even at $f{=}20$, \sysname maintains $\sim$746k txn/s throughput with $\sim$1.80 s latency, while competing protocols fall to \new{80k--681k} txn/s and \new{1.9--3.2} s latency. 
Although \sysname's RBC optimization yields less dramatic gains in LAN, it still scales much more gracefully than alternatives.





\subsubsection{Varying Batch Size.} \label{subsubsec:batch}
\figref{fig:scalability}(g) illustrates the impact of batch size on performance with a fault tolerance of $f{=}5$.
All protocols improve with larger batches before saturating, but \sysname consistently leads: 
at batch size = 4,000, \sysname achieves $\sim$985k txn/s, while others lie in the range \new{328k--758k} txn/s. 
As batch size grows further, throughput growth slows, yet \sysname retains its lead consistently.

\subsubsection{Varying Workload.}
\figref{fig:scalability}(h) varies workload at $f{=}10$.
All protocols improve until saturation, with \sysname consistently leading.
For instance, near saturation, \sysname reaches $\sim$300k~txn/s at $\sim$0.9~s latency, while the next-best \new{competitors (Shoal++ and Mysticeti) achieve $\sim$267k and $\sim$258k txn/s} but with latency above 1.0 s, and others drop below 150k txn/s.

\subsection{Resiliency Evaluation} \label{subsec:resiliency}
We evaluate \sysname's resiliency by examining \new{the impact of network latencies, leader failures, and adversarial network conditions}.

\subsubsection{Injecting Network Delay.} \label{subsubsec:delay}
We evaluate the performance of all protocols under network delays injected from 0 to 150 ms.
\figref{fig:scalability}(i,j) shows \sysname (HW and SIM) maintains higher throughput and lower latency than all baselines across every delay setting.

\subsubsection{Varying Leader Failures.} \label{subsubsec:failure}
We evaluated \new{five} DAG-based protocols (\sysname, Tusk, Bullshark, Mysticeti\new{, Shoal++}) under $f{=}5$ in the DIN environment.
We focus on these \new{five} protocols because they share a comparable DAG architecture, enabling a consistent and fair failure definition across systems.
Here, a failed leader is defined as a replica whose first-round vertex never gathers enough support (\ie, never reaches the commit quorum), and hence is never committed.


\figref{fig:scalability}(k) shows that \sysname delivers the best performance, with gradual declines for all as the number of failures increases from 0 to 5, whereas Tusk, Bullshark, \new{Shoal++,} and especially Mysticeti drop off more sharply.
In terms of latency (\figref{fig:scalability}(l)), although Mysticeti performs well in the good case, it degrades quickly once failures occur.
Meanwhile, \sysname maintains the lowest latency and exhibits more moderate growth compared to others, showcasing its robustness and resiliency in handling node failures.

\subsubsection{\new{Dynamic Crash Faults.}} \label{subsec:adversarial}
\new{Since \sysname's TEE-assisted design rules out equivocation and forgery, we focus on omission, crash faults, and network asynchrony. 
In this setting, we crash $f$ replicas after 10\,s. 
As shown in \figref{fig:scalability}(m), all protocols experience performance degradation after the crashes, but \sysname remains the best, sustaining the highest post-crash throughput of $\sim$97k txn/s.
\sysname benefits from its lightweight design: TEE-assisted dissemination and a four-round commit path reduce coordination and communication overhead under failures.
}

\subsubsection{\new{Byzantine Delivery Attack.}}
\new{We further add a DAG-specific Byzantine attack: up to $f$ Byzantine replicas send each vertex to only $f{+}1$ replicas and try to suppress leader-supporting references.
This creates worst-case local DAG views without equivocation or forgery.
\figref{fig:scalability}(n) shows \sysname remains the best protocol, sustaining 138.0k txn/s with 0.72\,s latency under the attack.
\sysname's advantage comes from its TEE-assisted dissemination and four-round commit rule, which improve DAG connectivity and reduce message overhead even when Byzantine replicas suppress references.}

\subsubsection{\new{Adversarial Delay.}}
\new{We fix the mean network delay at 500\,ms and increase the jitter ratio from 10\% to 90\% to emulate adversarially induced network instability.
\figref{fig:scalability}(o) shows that timeout-sensitive baselines such as Mysticeti, Shoal++, and Bullshark degrade sharply, falling to around 2k txn/s at 90\% jitter, whereas \sysname's fully asynchronous design maintains the highest throughput ($\sim$10k txn/s).}

\begin{figure*}[t!]
    \vspace{-4mm}
    \centering
    \includegraphics[width=1\linewidth]{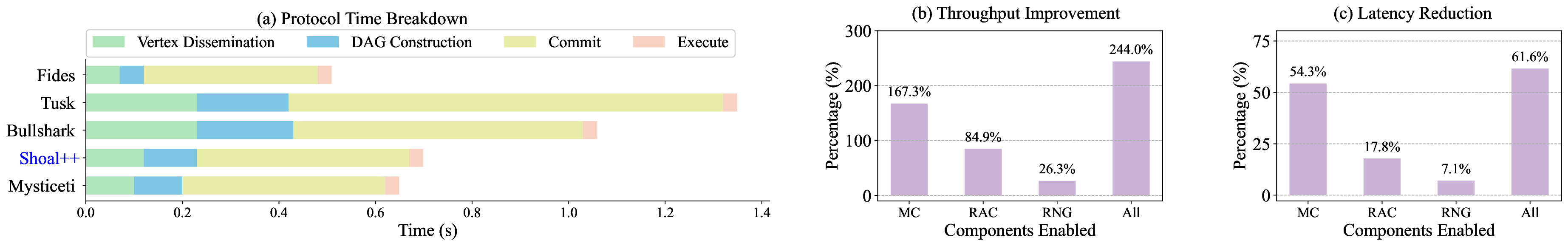}
    \vspace{-8mm}
    \captionsetup{labelfont={bf}}
    \caption{\new{System Breakdown by Protocol Time and Trusted Components.}}
    \vspace{-4mm}
    \label{fig:breakdown}
\end{figure*}

\subsection{System Breakdown} \label{subsec:breakdown}
\subsubsection{Latency Breakdown.} 
We provide a breakdown of the \new{five} DAG-based protocols in a DIN environment with $f{=}5$. 
For fair comparison, transaction queuing latency is excluded to mitigate the impact of clients. 
The latency is divided into four stages:  
\begin{packeditemize}
\item \textbf{Vertex Dissemination.} 
\sysname disseminates a vertex in 0.07 s, outperforming Tusk (0.23 s) and Bullshark (0.23 s) while remaining comparable to Mysticeti (0.1 s), enabled by its optimized T-RBC.

\item \textbf{DAG Construction.} 
From confirmation to local insertion into DAG, \sysname takes 0.05 s---a 75\% reduction versus Tusk and Bullshark---by streamlining the vertex validation.

\item \textbf{Commitment.} \sysname achieves a commit time of 0.36 s, approximately 50\% faster than other protocols.

\item \textbf{Execution.} All protocols have similar execution times of about 0.03 s, as no optimizations on execution are applied.
\end{packeditemize}

\figref{fig:breakdown}(a) shows the average time spent in each stage. The breakdown results indicate that \sysname achieves a minimum end-to-end latency of 0.5 s, driven primarily by its accelerated T-RBC-assisted dissemination, DAG construction, and commitment.
\new{Notably, \sysname achieves comparable latency to Mysticeti and Shoal++ despite using a certified DAG.
These approaches converge to similar end-to-end latency through different mechanisms: Mysticeti and Shoal++ remove per-vertex certification (but require partial synchrony for liveness), while \sysname makes certification cheap by replacing multi-round RBC with T-RBC's single broadcast and using RAC to reduce certificate validation overhead.}

\subsubsection{Trusted Component Breakdown.}\label{subsec:component_breakdown}  
To analyze the performance improvement brought by each trusted component, we ran four configurations: three cases with exactly one component enabled \textit{(MC, RNG, RAC)}, and an \textit{All} configuration with all components active under the $f{=}10$ setting in DIN.
We measured throughput and latency for each configuration and then calculated the relative improvement compared to the baseline (\ie, \sysname but with no trusted component active). 
In all configurations, \sysname preserves safety and liveness.
Disabling a trusted component only affects performance because the protocol falls back to corresponding untrusted compensations (e.g., extra communication or heavier cryptographic operations to tolerate Byzantine behavior).

\figref{fig:breakdown}(b) and (c) summarize these changes.
Enabling the \textit{MC} component delivers the largest improvement, increasing throughput by 167.3\% while reducing latency by 54.3\%.  
\textit{RAC} and \textit{RNG} deliver additional throughput improvements of 84.9\% and 26.3\%, and latency reductions of 17.8\% and 7.1\%, respectively.
When all three components are enabled simultaneously, their effects combine synergistically: throughput increases by 244.0\% and latency decreases by 61.6\%.
This \textit{All} configuration demonstrates the substantial performance gains achievable through our trusted-component architecture, while still preserving transaction confidentiality.

\section{Discussion} \label{sec:discussion}

\noindent\new{\textbf{Hybrid fault model and TEE availability.} \sysname{} can be extended to a hybrid fault model with $n{=}2f{+}2c{+}1$ and quorum $f{+}c{+}1$, tolerating $c$ honest replicas with \textit{crashed TEEs}.
Under systemic TEE compromise, \sysname{} falls back to the classical $n{=}3f{+}1$ with quorum $2f{+}1$, analogous to FlexiTrust~\cite{FlexiTrust}.
Beyond fault counts, MC and RAC require only TEE \textit{integrity}~\cite{trInc, hybster, minbft}, while RNG-seed leakage breaches \textit{confidentiality}, weakening liveness but not safety.}

\noindent\new{\textbf{Deployment considerations.} 
\sysname{} is not tied to Intel SGX: its trusted components form an abstract interface with three primitives realizable on any TEE, including enclave-style platforms (Intel SGX/TDX) and VM-based platforms such as AWS Nitro Enclaves~\cite{aws-nitro-enclaves}.
While TEEs introduce operational complexity and vendor dependence, these are acceptable in \sysname{}'s target setting of consortium systems, where participants can agree on hardware requirements and enforce policies such as remote attestation, mandatory microcode updates, and timely patching. 
TEE availability is also increasingly practical, with major cloud providers such as Alibaba Cloud, Azure, and AWS offering TEE-enabled instances~\cite{alibaba-confidential-computing, azure-confidential-computing, aws-nitro-enclaves}.
}

\section{Related Work}
We survey prior work on concurrent protocols, DAG-based BFT consensus and hardware-assisted BFT protocols, with a particular focus on TEE-assisted designs. While a growing body of work explores TEE-assisted BFT, very few efforts have investigated how TEEs can be effectively integrated into DAG-based BFT consensus.

\bheading{Concurrent / multi-leader BFT protocols.}
Many classic and streamlined BFT protocols are leader-driven: in each view (round), a designated leader proposes a block and replicas vote to commit it.
Despite many recent advances~\cite{hotstuff-2, hotstuff-1, carry}, leader-based protocols can still suffer from a \textit{leader bottleneck}, where a single leader must disseminate proposals and aggregate votes.
As the replica set grows, this workload can limit throughput and increase latency.
Concurrent (multi-leader) designs mitigate this bottleneck by enabling multiple leaders (or multiple instances) to make progress in parallel.
For example, Mir-BFT~\cite{mirbft} supports parallel leaders that independently propose request batches.
Later multi-BFT systems further multiplex multiple leader-based consensus instances in parallel~\cite{gupta2021rcc, ISS, spotless, Ladon2025, dqbft}, but typically require an additional mechanism (e.g., a global ordering layer) to reconcile outputs into a single ledger.
Orthrus~\cite{Orthrus2025} and Hydra~\cite{hydra} further reduce confirmation latency by exploiting transaction-level concurrency, \eg, by relaxing strict serialization via partial ordering or by removing the global ordering requirement when possible.
More recently, DAG-based BFT protocols offer a complementary path to scalability by allowing many replicas to disseminate and reference batches concurrently.
\sysname exemplifies this direction as a DAG-based BFT protocol that alleviates the single-leader bottleneck via DAG concurrency.

\bheading{DAG-based BFT protocols.} 
Recently, DAG-based BFT protocols~\cite{DAGRider, narwhal, Bullshark, shoal, Hashgraph, shoal++, babel2024mysticeti, jovanovic2024mahi, malkhi2023bbca, Fino, autobahn, gradeddag, LightDAG, sailfish} 
link transactions into a DAG, enabling greater scalability and higher throughput compared to linear-chain approaches.
HashGraph~\cite{Hashgraph} is among the first protocols to incorporate an asynchronous consensus mechanism into a DAG, which separates the network communication layer from the ordering logic.
DAG-Rider~\cite{DAGRider} advances asynchronous BFT consensus by combining reliable broadcast and a round-by-round DAG structure, which optimizes resilience, round complexity, and amortized communication complexity. 
Narwhal and Tusk~\cite{narwhal} introduce a modular approach that decouples data availability from transaction ordering, while Bullshark~\cite{Bullshark} adapts Narwhal/Tusk to improve performance in partially synchronous environments. 
In contrast, partially synchronous DAG-based protocols, such as \cite{shoal, shoal++, malkhi2023bbca, babel2024mysticeti, sailfish}, achieve low latency in well-behaved networks but do not fully address challenges under asynchronous conditions.
\new{Recent variants further explore this design space: Shoal++~\cite{shoal++} pipelines anchors and aggregates votes to lower commit latency, Sailfish~\cite{sailfish} optimizes DAG commit latency, GradedDAG~\cite{gradeddag} adopts graded reliable broadcast to streamline DAG construction, and Mahi-Mahi~\cite{jovanovic2024mahi} extends Mysticeti's uncertified DAG with an asynchronous global coin to achieve asynchronous liveness.}
\sysname builds on these asynchronous DAG protocols and leverages TEEs to further improve scalability in adversarial or unpredictable network settings.

\bheading{Hardware-assisted BFT protocols.}
Hardware-assisted BFT protocols~\cite{trInc, hybster, Damysus, FlexiTrust, TeeRollup, OneShot, TrustedHardwareAssisted} leverage trusted hardware to improve the performance and security of BFT consensus. TrInc~\cite{trInc} introduces a non-decreasing counter to prevent equivocation in distributed systems, reducing the fault tolerance requirement from $n{=}3f{+}1$ to $n{=}2f{+}1$, and simplifying the consensus process from three phases to two. 
Hybster~\cite{hybster} combines traditional Byzantine fault tolerance with a trusted crash-only subsystem, achieving over 1 million operations per second in an Intel SGX-based prototype. 

Damysus~\cite{Damysus} employs the \textit{Checker} and \textit{Accumulator} trusted components to streamline BFT and reduce communication phases.
In cases where the trustworthiness of replicas and their TEEs is uncertain, FlexiTrust~\cite{FlexiTrust} adapts the replication threshold between $2f{+}1$ and $3f{+}1$. 
OneShot~\cite{OneShot} presents a view-adapting BFT protocol that uses TEEs for efficient fault tolerance, while Achilles~\cite{Achilles} adopts rollback-resilient recovery and chained commit rules to achieve linear message complexity and low end-to-end transaction latency. 
Trusted Hardware-Assisted Leaderless BFT~\cite{TrustedHardwareAssisted} (referred to as HybridSet in the experiments) eliminates the need for a leader, mitigating leader-induced vulnerabilities. 
In contrast, \sysname goes a step further by leveraging TEE's guarantees to offload expensive tasks in DAG-based protocols into enclaves, thus achieving significant performance improvements.

\section{Conclusion} \label{sec:conclusion}
We presented \sysname, among the first asynchronous DAG-based BFT consensus protocols that integrate TEEs to tolerate a minority of Byzantine replicas. 
\sysname adopts a customized TEE-assisted \textit{T-RBC} and reconstructs the DAG by relaxing the vertex references from $2f{+}1$ to $f{+}1$.
More importantly, this reduced quorum introduces a liveness pitfall for standard two-round DAG commit rules under adversarial asynchrony; \sysname resolves it with a four-round commit rule that restores liveness and achieves optimal constant expected commit latency.
Together, the three trusted components in \sysname minimize network communication complexity and reduce reliance on cryptography.
Comprehensive evaluations demonstrate that \sysname consistently outperforms state-of-the-art protocols in both throughput and latency, validating its practicality for real-world systems.

\normalem
\begin{acks}
This work is partially funded by NSF Award Number 2245373.
\end{acks}

\bibliographystyle{unsrt}
\bibliography{bib}

\clearpage
\appendix
\section{Building Blocks} \label{app:build_block}
\subsection{Reliable Broadcast (RBC)} \label{app:rbc} 
RBC is a fundamental primitive in distributed systems to ensure the consistent delivery of messages in the presence of Byzantine faults. 
In RBC, a sender replica $p_i$ first broadcasts a message $m$ in a round $r\in\mathbb{N}$ by invoking $\textsc{Bcast}(m, r)$. By running the protocol, a replica $p_k$ may output $Deliver (m, r, i)$, where $m$ is the broadcast message, $r$ is the round number, and $i$ is the identifier of the sender replica. RBC guarantees the following properties: 
\begin{packeditemize}
    \item \textbf{Validity:} If a correct replica $p_j$ invokes $\textsc{Bcast}(m, r)$, then every correct replica $p_i$ eventually invokes $Deliver(m, r, j)$.

    \item \textbf{Integrity:} A replica delivers a message $m$ at most once, and only if $m$ was broadcast by replica $p_j$ via $\textsc{Bcast}(m, r)$.

    \item \textbf{Agreement:} If a correct replica $p_i$ invokes $Deliver(m, r, j)$, then all correct replicas eventually invoke $Deliver(m, r, j)$.
\end{packeditemize}

Asynchronous DAG-based protocols rely on RBC to deliver blocks, accounting for 62.3\% of the latency, as shown in \secref{sec:motivation}.
Thus, we propose an efficient \textit{TEE-assisted Reliable Broadcast (T-RBC)} with linear message complexity, one step of message broadcast (instead of two), and tolerance of a minority of Byzantine nodes (\secref{subsec:mic}).

\subsubsection{T-RBC Protocol Flow.} \label{app:subsec:trbc-flow}
\figref{algorithm:teerbc} gives the detailed pseudocode and message pattern of T-RBC in \sysname. The primary routine \textsc{TBCast} runs on every broadcast and, in the good case, incurs only \textbf{a single broadcast}. The auxiliary routine \textsc{TCatchup} incurs two additional message rounds and is triggered only when a replica observes a missing vertex in the causal history.

\begin{figure}[h!]
    \vspace{-2mm}
    \begin{myprotocol}
    \setcounter{ALC@line}{0}
    \SPACE \textbf{\underline{TEE-assisted Reliable Broadcast (T-RBC) at replica $p_i$}}

    \FUNCTION{$\textsc{TBCast}$}{$v$}
        \STATE \textbf{Step 0: Monotonic Counter Value Generation}
        \STATE\quad $v.\textit{counter} \leftarrow \textsc{MC.GetCounter}(\textit{round-cert}, v)$
        \SPACE\quad \COMMENT{TEE generates a unique signed counter value}

        \STATE \textbf{Step 1: Broadcast}
        \STATE\quad As a sender, broadcast $\langle \textsc{TBCast} ,v \rangle$.
        \STATE\quad As a receiver, deliver $v$ upon receiving $\langle \textsc{TBCast} ,v \rangle$ with a valid counter value.
    \ENDFUNCTION

    \SPACE

    \FUNCTION{$\textsc{TCatchup}$}{$v', p_i$}
        \STATE \textbf{Step 2: Request for the missing vertex}
        \STATE\quad \textbf{Upon observing} a missing vertex $v'$ in the causal history of $v$ from $p_j$,
        replica $p_i$ actively sends $\langle \textsc{Request}, v' \rangle$ to $p_j$.

        \STATE \textbf{Step 3: Send the vertex $v'$.}
        \STATE\quad \textbf{Upon receiving} a valid $\langle \textsc{Request}, v' \rangle$ from a replica $p_i$, replica $p_j$ will then send out the full vertex $\langle \textsc{TCatchup} ,v' \rangle$ which $p_j$ requested.
        \STATE\quad $p_i$ \textbf{delivers} $v'$ upon receiving a valid $\langle \textsc{TCatchup} ,v' \rangle$.
    \ENDFUNCTION

    \SPACE

    \begin{tikzpicture}[>=stealth, thick, scale=0.95]
        \draw (0,0) -- (2.7,0);
        \node[left] at (0,0) {R1};
        \draw (0,-0.6) -- (2.7,-0.6);
        \node[left] at (0,-0.6) {R2};
        \draw (0,-1.2) -- (2.7,-1.2);
        \node[left] at (0,-1.2) {R3};

        \draw[dotted] (2.7,0) -- (3.6,0);
        \draw[dotted] (2.7,-0.6) -- (3.6,-0.6);
        \draw[dotted] (2.7,-1.2) -- (3.6,-1.2);

        \draw (3.6,0) -- (6.8,0);
        \draw (3.6,-0.6) -- (6.8,-0.6);
        \draw (3.6,-1.2) -- (6.8,-1.2);

        \draw[<->, blue, thick] (0.3,0.3) -- (1.2,0.3) node[midway, above] {\small Step 0};
        \draw[<->, blue, thick] (1.2,0.3) -- (2.7,0.3) node[midway, above] {\small Step 1};
        \draw[<->, blue, thick] (2.7,0.3) -- (3.6,0.3) node[midway, above] {\small ...};
        \draw[<->, blue, thick] (3.6,0.3) -- (5.1,0.3) node[midway, above] {\small Step 2};
        \draw[<->, blue, thick] (5.1,0.3) -- (6.6,0.3) node[midway, above] {\small Step 3};

        \node at (0.75, 0) {\includegraphics[width=0.55cm]{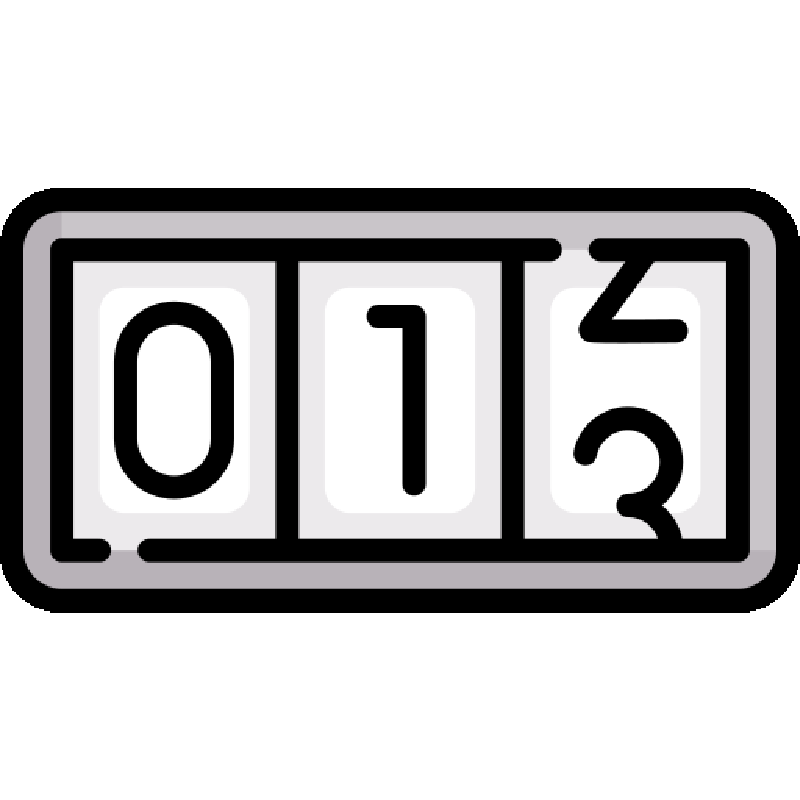}};
        \node at (1.05, -0.17) {\includegraphics[width=0.25cm]{icons/shield.png}};

        \draw[->] (1.2,0) -- (2.7,-0.6);
        \draw[->, red, dashed] (1.2,0) -- (2.7,-1.2);

        \coordinate (M) at ($(1.2,0)!0.6!(2.7,-1.2)$);
        \node[red, font=\bfseries] (fail) at (1.1,-0.9) {Failed};
        \draw[->, red] (fail) -- (M);

        \draw[->] (3.6,-1.2) -- (5.1,-0.6);

        \draw[->] (5.1,-0.6) -- (6.6,-1.2);
    \end{tikzpicture}

    \end{myprotocol}

    \vspace{-2mm}
    \caption{TEE-assisted Reliable Broadcast (\textit{T-RBC}) with \textit{MC}.}
    \vspace{-5mm}
    \label{algorithm:teerbc}
\end{figure}

\subsection{Common Coin} \label{app:commoncoin}
The common coin (also known as the global random coin) enables random decision-making among a set of participants, playing an essential role in asynchronous consensus~\cite{honeybadger,DAGRider,Dumbo-MVBA}. 
In \sysname, the common coin is utilized in every wave $w \in \mathbb{N}$. A replica invokes the coin by calling \Name{coinLeader}$(w)$ to select $p_i$ as the leader of wave $w$.
The common coin provides the following guarantees:

\begin{packeditemize}
\item \textbf{Agreement:} If two correct replicas invoke $\Name{coinLeader}(w)$ to obtain $p_i$ and $p_k$, respectively, then $p_i = p_k$.

\item \textbf{Termination:} Every correct replica invoking the function will eventually receive a result when the function has been invoked by more than $f$ replicas. 

\item \textbf{Unpredictability:} If no more than $f$ replicas have invoked \Name{coinLeader}$(w)$, the outcome is computationally indistinguishable from a random value, except with negligible probability $\epsilon$.
\end{packeditemize}

Traditional common-coin implementations \cite{thresholdlatency} that use resource-intensive cryptographic techniques are expensive, as indicated in \secref{sec:motivation}. 
Thus, we leverage TEEs to realize a more efficient \textit{TEE-assisted Common Coin (T-Coin)} (\secref{subsec:rang}). 

\section{Correctness Analysis of Building Blocks}
\subsection{TEE-assisted Reliable Broadcast (T-RBC)} \label{app:proof:tee_rbc}

\begin{theoremapx}
    The TEE-assisted Reliable Broadcast in \sysname satisfies \textbf{Validity}.
\end{theoremapx}
\begin{proof}
Suppose a correct initiator $p_j$ invokes \textsc{TBCast}$(v)$ at round $r$.
By calling the Monotonic Counter \textit{MC} inside the TEE, $p_j$ obtains a unique TEE-signed counter for $(p_j,r)$, ensuring that at most one valid message can be produced for that round.
Any replica delivers $v$ upon receiving a $\langle\textsc{TBCast},v\rangle$ accompanied by a valid \textit{MC} certificate.

If a Byzantine adversary selectively delivers the $\langle\textsc{TBCast},v\rangle$ message to only a subset of correct replicas, any such receiver will reference $v$ in its next-round vertex, making $v$’s metadata visible system-wide.
By the causality property of DAG, other correct replicas will eventually observe $v$ and thus trigger \textsc{TCatchup} to fetch the certified payload and then deliver $v$.
Hence, if a correct initiator invokes $\textsc{TBCast}(v)$, all correct replicas eventually $Deliver(v)$, satisfying \textit{Validity}.
\end{proof}

\begin{theoremapx}
    The TEE-assisted Reliable Broadcast in \sysname satisfies \textbf{Integrity}.
\end{theoremapx}
\begin{proof}
By design, \textit{MC} enforces at-most-one valid counter per sender and round: only the authorized initiator $v.source$ can produce a unique valid counter for round $v.round$.
If an adversary attempts to forge the vertex $v$ or create multiple conflicting messages for the same round, such messages will fail the TEE validation. 
Because no two messages share the same counter and replicas reject uncertified messages, each vertex is delivered at most once and only if genuinely broadcast by $v.source$. 
This guarantees \textit{Integrity}.
\end{proof}

\begin{theoremapx}
    The TEE-assisted Reliable Broadcast in \sysname satisfies \textbf{Agreement}.
\end{theoremapx}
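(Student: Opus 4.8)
The plan is to lean on two things that are already in hand: the Integrity property just proved, which guarantees that for a fixed round $r$ and source $p_j$ at most one vertex $v$ carrying a valid \textit{MIC} certificate $\phi$ can ever exist (so a correctly certified vertex is unforgeable and unique), and the eventual-delivery guarantee of the asynchronous network. The argument then shows that one correct replica's delivery is enough to ``pull along'' every other correct replica, so no quorum-intersection step is needed.

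First I would fix a correct replica $p_i$ that invokes $R\_Deliver_i(v, r, p_j)$. By Step~3 of the protocol in \figref{algorithm:teerbc}, $p_i$ does so only after receiving at least one $\langle echo, v\rangle$ message and running \Name{VerifyVertex}$(v)$ successfully, i.e.\ after confirming that $v.counter$ holds a valid certificate $\phi$ for round $r$ issued by $p_j$'s enclave. The key observation is that this certified vertex is self-authenticating: it is irrelevant whether the echo reached $p_i$ from a correct replica, from $p_j$ directly, or relayed through a Byzantine replica — once $p_i$ holds $v$ together with a valid $\phi$, that object is, by Integrity, exactly the object every honest replica will accept for $(r, p_j)$ and the only one it can accept.

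Next I would invoke the amplification behaviour of a correct replica: upon delivering $v$ (equivalently, upon accepting a valid $\langle echo, v\rangle$), $p_i$ (re-)broadcasts $\langle echo, v\rangle$ to all replicas. Because $p_i$ is correct, this broadcast is eventually received by every correct replica $p_k$; each such $p_k$ then runs \Name{VerifyVertex}$(v)$, which succeeds since $\phi$ is valid, and by Step~3 invokes $R\_Deliver_k(v, r, p_j)$. Uniqueness follows immediately from Integrity: no correct replica can deliver any $v' \neq v$ with the same round $r$ and source $p_j$, so all correct replicas deliver the \emph{same} vertex. Chaining these steps establishes Agreement.

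I expect the main obstacle to be pinning down precisely why a single valid echo suffices for a correct replica to deliver and re-broadcast. In classical Bracha-style reliable broadcast this step needs a quorum of $2f+1$ echoes plus an intersection argument to rule out two honest replicas locking onto conflicting values; here the \textit{MIC} monotonic counter must be shown to shoulder that entire burden. Concretely, I would need to argue carefully that (a) the counter binding prevents equivocation even across the asynchronous interleavings a Byzantine host can induce, relying on the counter value being synchronized with the DAG round inside the enclave, and (b) a correct replica genuinely forwards $\langle echo, v\rangle$ after accepting it (this must be part of the specification, not merely Step~2's echo-on-proposal), so that the ``one correct deliverer $\Rightarrow$ all correct deliver'' propagation cannot stall after a Byzantine relay goes silent.
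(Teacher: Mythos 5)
Your overall route is the same as the paper's: fix one correct deliverer $p_i$, use the \textit{MIC} certificate $\phi$ to argue the delivered vertex is the unique self-authenticating object for $(r,p_j)$, and then argue that every other correct replica eventually receives a valid $\langle echo,v\rangle$ and delivers the same $v$. Where you diverge is precisely the step you flag in point (b), and your instinct there is correct --- that is the load-bearing and fragile step. The paper closes the gap by asserting that the echo $p_i$ received ``must have'' come from a correct replica, and that all $f+1$ correct replicas have therefore seen and echoed $v$; since a correct echoer broadcasts to everyone in Step~2, propagation follows. You instead close it by positing an amplification rule: a correct replica re-broadcasts $\langle echo,v\rangle$ upon accepting one. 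Neither mechanism is literally in the protocol of \figref{algorithm:teerbc} as written (Step~2 echoes only on receipt of a \emph{proposal}, and Step~3 delivers on a single echo with no re-echo), and the scenario you implicitly worry about is real: a Byzantine initiator can obtain a valid $\phi$ from its enclave, hand the certified vertex only to a Byzantine accomplice, and have that accomplice echo it to a single correct replica, which then delivers while no other correct replica ever hears of $v$. Your version is the more honest of the two because you name the missing re-echo as an assumption rather than silently assuming the echo's sender is correct; but as a standalone proof it is conditional on that amplification rule being added to the specification, so you should either state it as an explicit protocol step or prove that a valid echo can only originate from a correct replica (which the protocol as given does not support). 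Your reliance on Integrity for uniqueness of the delivered value is fine and matches the paper.
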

\begin{proof}
Assume some correct replica $p_i$ has $Deliver_i(v)$.
Either $p_i$ received a valid $\langle\textsc{TBCast},v\rangle$ directly, or it obtained $v$ via \textsc{TCatchup} after seeing $v$’s metadata in DAG references.
In both cases, $v$’s metadata is propagated in subsequent vertices, so every correct replica will eventually observe it and invoke \textsc{TCatchup} to retrieve the certified payload.
By Integrity, there is at most one valid payload per $(\textsf{source},\textsf{round})$, so all correct replicas deliver the same $v$.
Thus \textbf{Agreement} holds.
\end{proof}

\subsection{TEE-assisted Common Coin (T-Coin)} \label{app:proof:tee_coin}
\begin{theoremapx}
    The \textit{TEE-assisted Common Coin} in \sysname satisfies \textbf{Agreement}.
\end{theoremapx}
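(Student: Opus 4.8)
The plan is to reduce \textbf{Agreement} to the determinism of the pseudo-random sequence held inside every honest \textit{RANG} enclave, together with the fact that all these enclaves are initialized with the \emph{same} seed. First I would recall from the \textbf{Initialization} of \textit{RANG} (\secref{subsec:rang}) that a single Distributed Key Generation run fixes one shared seed $s$, and that \Name{SetSeed}$(s)$ installs this $s$ in the enclave of every replica; since the seed resides inside the TEE, no adversary can alter it, so the enclaves of all honest replicas agree on $s$.

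Next I would unfold what $\Name{ChooseLeader}_i(w)$ actually computes: it calls $\textit{RANG}.\Name{Rand}$ with a \textit{round-cert} for $\text{round}(w,3)$, and \textit{RANG}, after checking that this certificate attests the participation of $f+1$ replicas, deterministically maps the wave index $w$ to a fixed position in its pseudo-random stream and returns $\mathrm{PRNG}(s,w)\bmod n$ (together with a signature $\phi$). The crucial structural point to establish is that the returned value is a pure function of the pair $(s,w)$: because \textit{RANG} keys its output by the wave number rather than by how many times it has been invoked, two honest replicas that both pass the eligibility check for wave $w$ obtain exactly the same integer, hence select the same replica. This immediately yields $p_1=p_2$.

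Finally I would argue that Byzantine behaviour cannot break this. A faulty replica may invoke \textit{RANG} with certificates for other waves, but since every output is indexed by its own wave number, such calls cannot perturb the value associated with $w$; and a faulty replica cannot fabricate a different (wave, value) pair because any value delivered by \Name{ChooseLeader} carries the TEE signature $\phi$, which an honest replica verifies before using it. The main obstacle I anticipate is exactly this ``statelessness'' argument: I must pin down that the enclave's mapping from $w$ to a stream position is itself deterministic and independent of the call history (otherwise interleaved or replayed invocations by Byzantine replicas could desynchronize honest replicas), which is precisely the guarantee provided by the sequence-number-to-round-number mapping described in \secref{subsec:rang}.
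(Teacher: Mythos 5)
Your proposal is correct and follows essentially the same route as the paper's own proof: all honest enclaves hold the identical DKG-established seed, \textit{RANG} computes the coin deterministically from that seed and the wave index, so every correct replica passing the eligibility check for wave $w$ obtains the same value and hence the same leader. Your extra care about the output being a pure function of $(s,w)$ independent of call history (via the sequence-number-to-round-number mapping) is a useful explicit elaboration of a point the paper's proof leaves implicit under ``same shared seed and input parameters,'' but it does not change the argument.
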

\begin{proof}
    When at least $f{+}1$ replicas invoke $\Name{coinLeader}(w)$ for the same wave $w$, they rely on the same shared seed and input parameters within their \textit{RNG} components. Since the \textit{RNG} operates deterministically based on these inputs and the shared cryptographic seed, it produces the same output at all correct replicas. Thus, the selected leader is identical across all correct replicas, satisfying the \textbf{Agreement} property. 
\end{proof}

\begin{theoremapx}
    The \textit{TEE-assisted Common Coin} in \sysname satisfies \textbf{Termination}.
\end{theoremapx}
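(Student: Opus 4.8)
The plan is to show that once at least $f+1$ replicas invoke $\Name{ChooseLeader}(w)$, every correct replica that invokes the function eventually obtains a result. The key observation is that $\Name{ChooseLeader}_i(w)$ ultimately reduces, via \textit{RANG}.\Name{Rand}, to supplying a valid \textit{round-cert} for round $\mathit{round}(w,3)$ (the final round of wave $w$) that attests to the participation of $f+1$ distinct replicas. So termination hinges on two facts: (i) such a \textit{round-cert} eventually becomes available to every correct replica, and (ii) given a valid \textit{round-cert}, \textit{RANG} always returns a value rather than blocking or rejecting.

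First I would argue (i). When $f+1$ replicas invoke $\Name{ChooseLeader}(w)$, among them at least one is correct (since at most $f$ are Byzantine); more usefully, I would trace back why those replicas invoked the function at all: in the pseudocode, $\Name{WaveReady}(w)$ is triggered by the event $\lvert DAG[r]\rvert \ge f+1$ with $r = \mathit{round}(w,3)+1$, meaning $f+1$ vertices of round $\mathit{round}(w,3)$ have been delivered. By the \textbf{Agreement} property of the TEE-Assisted Reliable Broadcast (proved earlier in this appendix), if one correct replica delivers these vertices, all correct replicas eventually deliver them; by the DAG-construction rules they are then added to every correct replica's local DAG once their (finite) causal histories are in place, which by induction on rounds and the same Agreement property also completes in finite time. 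Hence every correct replica eventually has $f+1$ vertices in round $\mathit{round}(w,3)$ and can invoke \Name{ValidateVertices} to obtain a valid \textit{round-cert} for that round. Then I would invoke fact (ii): \textit{RANG}.\Name{Rand} is a deterministic enclave computation that, upon verifying a valid \textit{round-cert}, always outputs the PRNG value modulo $n$ together with its signed certificate $\phi$ --- it never waits on the network and never aborts on a valid certificate. Composing these, every correct replica invoking $\Name{ChooseLeader}(w)$ eventually receives a result, which is \textbf{Termination}.

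The main obstacle I expect is establishing (i) rigorously without circularity: the argument that "$f+1$ round-$\mathit{round}(w,3)$ vertices are eventually delivered to \emph{every} correct replica" must be grounded in the premise (some replica got far enough to invoke $\Name{ChooseLeader}(w)$) plus the Reliable Broadcast \textbf{Agreement} and \textbf{Validity} properties and the finiteness of each vertex's causal history, rather than assumed. I would handle this by an induction on the round number: a vertex at round $r$ delivered by some correct replica has all strong/weak-edge predecessors at rounds $< r$, each of which (by Agreement) is eventually delivered by, and (by the induction hypothesis together with the asynchronous-but-finite-delay network model) eventually added to the DAG of, every correct replica; the base case is the common genesis vertices. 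A secondary subtlety is the eligibility gating of \textit{RANG}: I must confirm that the \textit{round-cert} a correct replica naturally produces when it reaches round $\mathit{round}(w,3)$ satisfies exactly the predicate \textit{RANG} checks (the round-to-sequence-number mapping condition), so that correct replicas are never spuriously denied a coin value --- this follows from the construction in \secref{subsec:rang}, where the mapping is defined to accept precisely the final round of each wave.
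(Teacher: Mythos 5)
Your proposal is correct and follows essentially the same route as the paper's proof: the invocation by $f+1$ replicas implies the existence of the $f+1$ quorum vertices, reliable-broadcast Agreement delivers them to every correct replica, and \textit{RANG} then responds to any valid quorum proof. Your version merely fills in details the paper leaves implicit (the induction showing delivered vertices actually enter each local DAG, and the check that a correct replica's \textit{round-cert} satisfies \textit{RANG}'s eligibility predicate), so it is a more rigorous rendering of the same argument rather than a different one.
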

\begin{proof}
    If at least $f{+}1$ replicas invoke $\Name{coinLeader}(w)$, sufficient valid \textit{MC} values are generated to construct the required quorum proof. The reliable broadcast mechanism ensures that these \textit{MC} values, embedded in the corresponding vertices, are eventually delivered to all correct replicas. Consequently, any correct replica invoking $\Name{coinLeader}(w)$ can collect the necessary quorum proof and obtain a response from the \textit{RNG} component, thus ensuring \textbf{Termination}.
\end{proof}

\begin{theoremapx}
    The \textit{TEE-assisted Common Coin} in \sysname satisfies \textbf{Unpredictability}.
\end{theoremapx}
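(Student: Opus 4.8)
The plan is to reduce \textbf{Unpredictability} to two ingredients: the pseudorandomness of the PRNG that \textit{RANG} drives, and the confidentiality and unforgeability guarantees of the TEE. The coin value for wave $w$ is, by construction, an output of a PRNG that is evaluated \emph{inside} an enclave; its internal state (the DKG-established seed together with its advancement counter) never leaves the enclave, and the only way any party, honest or Byzantine, can observe the $w$-th output is through a successful call to \textit{RANG}.\Name{Rand} with a valid \textit{round-cert} for the final round of wave $w$. Moreover, the seed was produced by a DKG (Shamir secret sharing) whose threshold exceeds $f$, so the adversary's $\le f$ shares leave it information-theoretically hidden. Hence it suffices to establish (i) that no \textit{RANG}.\Name{Rand} call for wave $w$ can succeed while fewer than $f+1$ replicas have invoked \Name{ChooseLeader}$(w)$, and (ii) that, conditioned on every value legitimately revealed so far (the coins of strictly earlier waves) and on the adversary's entire view, the $w$-th PRNG output is computationally indistinguishable from uniform over $\{1,\dots,n\}$.

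For (i), I would argue from the soundness of the \textit{RAC} gate and the round-binding of \textit{MIC}. A \Name{Rand} call for wave $w$ is accepted only if accompanied by a \textit{round-cert} for the third round of wave $w$, and by the soundness of \textit{RAC} such a certificate attests to $f+1$ reliably delivered vertices from \emph{distinct} replicas in that round, each carrying a valid \textit{MIC} attestation bound to that round number. By TEE unforgeability, only a vertex's source enclave can produce its \textit{MIC} attestation for a given round, and by the round-binding check inside \textit{RANG} (its sequence-number-to-round mapping) a stale certificate from an earlier wave is rejected; therefore an adversary controlling at most $f$ replicas cannot by itself fabricate a \textit{round-cert} and cannot replay one. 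The quorum of $f+1$ distinct delivered round-$(w,3)$ vertices that any party must exhibit to unlock the coin is exactly the same quorum an honest replica exhibits when it invokes \Name{ChooseLeader}$(w)$; consequently, while fewer than $f+1$ replicas have participated in the third round of wave $w$ (equivalently, have invoked the coin for $w$), no such certificate exists in the system, so \emph{no} party has queried \textit{RANG} for wave $w$, and the $w$-th output remains inside the enclaves, unseen.

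For (ii), I would give the standard reduction: an adversary that distinguishes the $w$-th coin from uniform with non-negligible advantage — while fewer than $f+1$ replicas have invoked \Name{ChooseLeader}$(w)$, so that by (i) the only PRNG outputs exposed to it are those of strictly earlier waves — yields a distinguisher against the PRNG (or, folding in the DKG, a seed-recovery attacker). The simulator runs the PRNG-security game, feeds the adversary the game's prefix of outputs as the earlier-wave coins, and forwards the challenge value as the $w$-th coin; advantage transfers directly. The overall distinguishing probability is then bounded by the PRNG's advantage plus the probability of forging a TEE attestation (for \textit{MIC}, \textit{RAC}, or the \textit{RANG} signature) plus the probability of reconstructing the DKG seed from $f$ shares, all negligible, giving the stated $\epsilon$.

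I expect the main obstacle to be step (i): cleanly pinning down the implication ``a valid \textit{round-cert} for wave $w$'s final round exists'' $\Longrightarrow$ ``$f+1$ replicas have invoked \Name{ChooseLeader}$(w)$.'' This rests on simultaneously invoking \textit{MIC} unforgeability (no spoofed round-$(w,3)$ vertices), \textit{RAC} soundness (the certificate genuinely reflects $f+1$ distinct delivered vertices), the round-binding check in \textit{RANG} (no cross-wave or replayed certificate), and the honest-replica invariant that exhibiting such a quorum for round $(w,3)$ is precisely what invoking the wave-$w$ coin entails — and on ruling out the degenerate attacks (certificate replay, stitching vertices across rounds) that these very mechanisms are designed to block.
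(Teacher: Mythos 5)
Your proposal is correct and follows essentially the same approach as the paper's proof: the quorum-proof gate on \textit{RANG} prevents any party from querying the coin for wave $w$ before $f+1$ replicas have participated, and the TEE-protected seed/state makes the unqueried output computationally indistinguishable from random. Your version is simply a more rigorous elaboration — the explicit reduction to PRNG security and the careful accounting of \textit{MIC}/\textit{RAC} unforgeability are left implicit in the paper's two-sentence argument, but no new idea is needed.
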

\begin{proof}
    As long as fewer than $f{+}1$ replicas have invoked \Name{coinLeader}$(w)$, the quorum proof required by the \textit{RNG} component cannot be formed. Without this proof, the output of the \textit{RNG} remains inaccessible and unpredictable to any adversary. Additionally, the internal state of the \textit{RNG} is securely protected within the TEE, preventing external entities from predicting its output. Therefore, the outcome remains computationally indistinguishable from random, satisfying the \textbf{Unpredictability} property.
\end{proof}

\section{Correctness Analysis for Reliable Dissemination} \label{app:reliable_diss}
\begin{claim} \label{claim:v_agreement}
    If a correct replica $p_i$ adds a vertex $v$ to its DAG, then eventually, all correct replicas add $v$ to their DAGs.
\end{claim}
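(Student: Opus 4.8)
The plan is to prove the claim by strong induction on the round number $r = v.round$, with the \textbf{Agreement} property of the TEE-Assisted Reliable Broadcast (established in \secref{proof:tee_rbc}) serving as the mechanism that propagates every locally delivered vertex to all correct replicas. The base case is $r=0$: the slot $DAG_i[0]$ is initialized at every replica to the same predefined set of $f+1$ genesis vertices, so the claim holds trivially for round-$0$ vertices.

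For the inductive step, assume the claim holds for every vertex whose round is strictly less than $r$, and suppose a correct replica $p_i$ adds a vertex $v$ with $v.round=r$ to $DAG_i$. By inspecting the DAG-construction handler for $R\_Deliver_i(v,r,p_k)$, two conditions must have held at $p_i$ at the moment of insertion: (i) $\Name{VerifyVertex}(v)$ returned true, and (ii) every referenced vertex, i.e.\ every $v' \in v.strongEdges \cup v.weakEdges$, was already present in $DAG_i$. From these I would extract three facts. First, $p_i$ reached this handler only by invoking $R\_Deliver_i(v,r,p_k)$, so by \textbf{Agreement} of reliable broadcast every correct replica $p_j$ eventually invokes $R\_Deliver_j(v,r,p_k)$. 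Second, $\Name{VerifyVertex}$ is a deterministic predicate over the TEE-issued certificate embedded in $v.counter$, hence it returns true at every correct replica on the same $v$. Third, strong edges point to round $r-1$ and weak edges to rounds $< r-1$, so every referenced $v'$ satisfies $v'.round < r$; by the induction hypothesis, every correct replica eventually adds each such $v'$ to its DAG.

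Combining these, fix any correct replica $p_j$. Eventually $p_j$ has delivered $v$ and, for each referenced $v'$, has inserted $v'$ into $DAG_j$. It remains to note that $p_j$ re-evaluates $v$'s insertion guard after the last missing dependency arrives: this is the role of the \Name{Defer}-until-a-new-vertex-is-committed branch, which revisits pending vertices as the DAG grows. Once all of $v$'s dependencies are present, the guard $\forall v' \in v.strongEdges \cup v.weakEdges : v' \in \bigcup_{k\geq 1} DAG[k]$ is satisfied, and since $\Name{VerifyVertex}(v)$ also holds, $p_j$ adds $v$ to $DAG[r]$. This closes the induction and proves the claim.

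I expect the main obstacle to be the bookkeeping around deferred re-processing: one must argue that a vertex delivered ``early'' (before its causal predecessors are in the DAG) is guaranteed to be revisited, which in turn leans on the re-processing discipline together with the protocol's ongoing progress (commits recur, per the liveness lemmas in \secref{sec:security_analysis}). The remaining ingredients — well-foundedness of the induction (edges strictly decrease the round number, so there is no circular dependency) and determinism of $\Name{VerifyVertex}$ — are immediate from the structural invariants of the DAG, so once the deferral argument is pinned down the rest follows mechanically.
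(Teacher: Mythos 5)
Your proof takes essentially the same route as the paper's: induction on the round number, with the base case handled by the shared genesis vertices and the inductive step combining the \textbf{Agreement} property of the TEE-assisted reliable broadcast, the determinism of \Name{VerifyVertex}, and the induction hypothesis applied to the (strictly lower-round) referenced vertices. Your additional attention to the deferred re-processing of early-delivered vertices is a point the paper glosses over, but it refines rather than changes the argument.
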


\begin{proof}
We prove the claim by induction on the rounds.

\iheading{Base Case}: In the initial round, all correct replicas share the same initial state, so the claim holds trivially.

\iheading{Inductive Step}: Assume that for all rounds up to \( r - 1 \), any vertex added by a correct replica is eventually added to the DAGs of all other correct replicas. 

Now consider a vertex \( v \) added by a correct replica \( p_i \) in round \( r \).  When \( v \) is reliably delivered to $p_i$ (line \ref{line:r_deliver}), $p_i$ conducts validation, including verifying $v$'s validity (line \ref{line:vertex_validity}) and ensuring all referenced vertices by $v$ are already in the $DAG$ (line \ref{line:vertices_in_DAG}).

By the \textit{Agreement} property of reliable broadcast (\secref{app:rbc}), \( v \) will eventually be delivered to all correct replicas.
Since \( v \) passed these checks at \( p_i \), and all correct replicas have consistent DAGs for earlier rounds by the induction hypothesis, \( v \) will pass validation at all correct replicas.
Thus, all correct replicas will eventually add \( v \) to their DAGs.
\end{proof}

\begin{lemma} \label{lm:DAGview}
All correct replicas will eventually have an identical DAG view.
\end{lemma}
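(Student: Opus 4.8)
The plan is to reduce the lemma to Claim~\ref{claim:v_agreement} plus the observation that the DAG of a correct replica contains \emph{only} vertices that were reliably delivered and validated. First I would establish the ``subset in both directions'' structure: for any two correct replicas $p_i$ and $p_j$, I want to show that every vertex in $DAG_i$ eventually appears in $DAG_j$, and symmetrically. The first inclusion is essentially immediate from Claim~\ref{claim:v_agreement}: if $v \in DAG_i$ for correct $p_i$, then eventually $v \in DAG_j$ for every correct $p_j$. By symmetry of the argument (swapping the roles of $i$ and $j$), the reverse inclusion also holds. Hence for every correct pair, $DAG_i$ and $DAG_j$ converge to the same vertex set.

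Next I would argue that convergence of the vertex \emph{sets} implies convergence of the full DAG \emph{structure} (vertices together with their edges). The key point is that each vertex $v$ carries its own strong and weak edges as part of its content (see the \emph{struct vertex} definition in Fig.~\ref{algorithm:utilities}), and by the \textbf{Integrity} property of TEE-Assisted Reliable Broadcast (Theorem~A.2) every correct replica that delivers $v$ delivers the \emph{same} $v$ --- the TEE's monotonic counter certificate $\phi$ uniquely binds the content to $(v.source, v.round)$, so no two correct replicas can hold different versions of $v$. Therefore once the vertex sets agree, the induced edge sets agree as well, and the topological order they define is identical across replicas. This also handles the concern, raised in \secref{subsec:dag_construction}, that vertices may be \emph{delivered} in different orders at different replicas: the deferral mechanism (line~\ref{line:vertices_in_DAG}) only delays insertion until all referenced vertices are present, and by the induction already carried out in Claim~\ref{claim:v_agreement} those references are eventually all present at every correct replica, so no vertex is deferred forever.

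Finally I would combine these to conclude: fix any correct replica's DAG as a reference; by the pairwise convergence argument every other correct replica's DAG eventually equals it as a labelled graph, so ``eventually identical'' holds for the whole set of correct replicas simultaneously (a finite set of ``eventually'' statements is still ``eventually''). The main obstacle --- and the place I would be most careful --- is making precise the interplay between the liveness direction (every delivered vertex is eventually added, which needs Claim~\ref{claim:v_agreement} and the fact that its causal references are themselves eventually delivered, i.e. that the ``defer'' branch cannot loop indefinitely) and the safety direction (no \emph{extra} or \emph{inconsistent} vertices, which needs \textbf{Integrity}). Everything else is routine bookkeeping on top of the already-proven reliable-broadcast properties and Claim~\ref{claim:v_agreement}.
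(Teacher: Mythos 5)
Your proposal is correct and follows essentially the same route as the paper: both reduce the lemma to Claim~\ref{claim:v_agreement} (every vertex added by a correct replica is eventually added by all correct replicas) and then observe that identical vertex sets plus the deterministic insertion rules yield identical DAGs. Your additional explicit steps---invoking the \textbf{Integrity} property to pin down that all replicas hold the same vertex content (and hence the same edges), and noting that the deferral at line~\ref{line:vertices_in_DAG} cannot persist forever---are just careful elaborations of what the paper compresses into ``same set of messages, same deterministic rules.''
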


\begin{proof}
From Claim~\ref{claim:v_agreement}, any vertex added by a correct replica will eventually be added to the DAGs of all other correct replicas.
Despite network asynchrony and variations in message delivery times, Claim \ref{claim:v_agreement} guarantees that all correct replicas receive the same set of messages. Since all correct replicas apply the same deterministic rules to incorporate these vertices into their DAGs, the DAGs of all correct replicas will converge to the same state over time.
Thus, all correct replicas will eventually have identical DAG views.
\end{proof}

\section{Correctness Analysis for \sysname} \label{app:correctproof}
\subsection{Safety Proof} \label{app:subsec:safety}
\begin{lemma}\label{clm:uniquewaveleader}
When electing the leader vertex of wave $w$, for any two correct replicas $p_i$ and $p_j$, if $p_i$ elects $v_i$ and $p_j$ elects $v_j$, then $v_i = v_j$.
\end{lemma}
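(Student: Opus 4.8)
The plan is to reduce the claim to three ingredients already in hand: the \textbf{Agreement} property of the TEE-Assisted Common Coin (\secref{proof:tee_coin}), the \textbf{Integrity} property of the TEE-Assisted Reliable Broadcast (\secref{proof:tee_rbc}), and the non-equivocation guarantee provided by \textit{MIC}. First I would unfold the election procedure: a correct replica $p_i$ determines its wave-$w$ leader via $\Name{GetWaveVertexLeader}(w)$, which calls $\Name{ChooseLeader}_i(w) = \textit{RANG}.\Name{Rand}(w,\cdot)$ to obtain a replica identifier $p_\ell$, and then returns the vertex $v \in DAG_i[round(w,1)]$ with $v.source = p_\ell$ if such a vertex is present in its local DAG (and $\perp$ otherwise). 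The hypothesis states that both $p_i$ and $p_j$ elect actual vertices $v_i$ and $v_j$ (not $\perp$), so each has found a matching vertex locally.

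Next I would invoke \textbf{Agreement} of the common coin: since $p_i$ and $p_j$ are both correct, $\Name{ChooseLeader}_i(w)$ and $\Name{ChooseLeader}_j(w)$ return the same replica $p_\ell$. Therefore $v_i$ and $v_j$ are both delivered vertices with $v_i.source = v_j.source = p_\ell$ and $v_i.round = v_j.round = round(w,1)$. It remains to show that these two vertices must coincide.

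The crux is showing that $p_\ell$ can have at most one vertex of round number $round(w,1)$ that is ever delivered to a correct replica. Here \textit{MIC} does the work: the monotonic counter inside $p_\ell$'s TEE issues exactly one counter value per round, and by construction of $\Name{verifyVertex}$ and $R\_Deliver$, a correct replica accepts a vertex $v$ only if it carries a valid \textit{MIC} certificate $\phi$ binding that counter value to $v.round$. Thus even a Byzantine $p_\ell$ cannot obtain valid certificates for two distinct vertices of the same round — this is exactly the \textbf{Integrity} statement proven in \secref{proof:tee_rbc}, namely that a vertex is delivered at most once and only if it was genuinely broadcast by its source with the unique certificate for that round. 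Applying this to $v_i$ and $v_j$, both being delivered vertices of source $p_\ell$ and round $round(w,1)$, forces $v_i = v_j$, completing the proof.

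I expect the third step to be the main obstacle: one must be careful that the single-phase reliable broadcast, which lacks the all-to-all echo rounds of Bracha's protocol, still rules out a Byzantine leader getting two different vertices accepted by two different correct replicas. I would address this by citing the already-established \textbf{Integrity} theorem directly rather than re-deriving it — since no two distinct vertices can carry the same certificate $\phi$ and $p_\ell$'s enclave emits only one $\phi$ per round, any two delivered candidate leader vertices for wave $w$ are necessarily the same vertex.
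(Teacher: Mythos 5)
Your proposal is correct, and it reaches the conclusion by a somewhat different decomposition than the paper. The paper's proof routes everything through Lemma~\ref{lm:DAGview}: it argues that correct replicas eventually hold identical DAGs, that the shared seed then makes the inputs and output of \textit{RANG} identical, and from there it simply asserts that ``the vertex proposed by $p_j$ in round $w$ will be the same at all correct replicas,'' leaving the non-equivocation of the elected leader implicit in the identical-DAG claim. You instead cite the already-proven \textbf{Agreement} property of the TEE-assisted common coin as a black box to get the same leader identity $p_\ell$ at both replicas, and then make the remaining step explicit: a possibly Byzantine $p_\ell$ cannot get two distinct round-$round(w,1)$ vertices delivered to two correct replicas, because \textit{MIC} issues a single certificate per round and the \textbf{Integrity} property of the reliable broadcast rules out a second accepted vertex. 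Your version is more modular and closes a small gap the paper glosses over (the leader may be faulty, so ``the vertex proposed by $p_\ell$'' is not a priori unique without invoking non-equivocation); the paper's version is shorter because it leans on the stronger, already-established statement that the DAGs themselves converge. Both arguments are sound, and the one mild caveat for yours is that the coin's Agreement property is itself proven by essentially the same shared-seed argument the paper inlines here, so the two proofs ultimately rest on the same facts arranged differently.
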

\begin{proof}

To prove that $v_i=v_j$, we need to ensure that all correct replicas use the same inputs when calling the random function \textit{RNG} (e.g., the vertices from the corresponding round in line \ref{line:rand}) to elect the leader of the wave $w$. From Lemma \ref{lm:DAGview}, we know that all correct replicas will eventually have identical DAGs. 
When a correct replica attempts to elect the leader for wave $w$, it does so after it has advanced to the necessary round and has incorporated all required vertices from previous rounds. Since the DAGs are identical and the random seed in \textit{RNG} is the same across replicas, the inputs to the \textit{Rand} function will be identical. Therefore, the output $p_j$ selected as the leader will be the same at all correct replicas.  Consequently, the vertex $v$ proposed by $p_j$ in round $w$ will be the same at all correct replicas, so $v_i=v_j$.
\end{proof}

\begin{lemma}\label{lm:leadervertexpath}
    If the leader vertex $v$ of wave $w$ is committed by a correct replica $p_i$, then for any leader vertex $v'$ committed by any correct replica $p_j$ in a future wave $w'>w$, there exists a path from $v'$ to $v$.
\end{lemma}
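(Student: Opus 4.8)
The plan is to prove Lemma~\ref{lm:leadervertexpath} by induction on the gap $w' - w$, leveraging the commit rule (the $f+1$ strong-path support requirement) together with the intersection of quorums under the $n = 2f+1$ threshold, and the causal-history structure of the DAG. First I would set up the base case $w' = w + 1$: since $v$ is committed by a correct replica, some correct replica observed at least $f+1$ vertices in $round(w,3)$ that have a strong path to $v$. Meanwhile, the leader vertex $v'$ of wave $w+1$ lives in $round(w+1,1)$, and by the strong-edge requirement in the DAG construction (\secref{subsec:dag_construction}), $v'$ references at least $f+1$ vertices from $round(w,3)$ via strong edges. Because any two sets of size $f+1$ drawn from $n = 2f+1$ replicas intersect in at least one replica, and (by equivocation-freedom from the \textit{MIC}, Claim-style Integrity) that replica contributes the \emph{same} vertex to both sets, there is a vertex $u \in round(w,3)$ that both has a strong path to $v$ and is strongly referenced by $v'$. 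Concatenating gives a (strong) path from $v'$ to $v$.

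Next I would handle the inductive step. Assume the claim holds for all committed leader vertices of waves strictly between $w$ and $w'$, and consider $v'$ committed in wave $w'$. Here I would split on how $v'$ came to be committed. If $v'$ was \emph{directly committed}, the same quorum-intersection argument as the base case applies between $round(w'-1,3)$-support of the previous committed leader and the strong references of $v'$ — but more carefully, I should use the recursive committing mechanism: when a replica directly commits $v'$ in wave $w'$, it walks back through \textsc{WaveReady} (lines~\ref{line:recursivecheck}--\ref{line:pushleader2}) and pushes every intervening leader vertex $v''$ for which $\Name{StrongPath}(v',v'')$ holds. The key sub-claim is that the most recent wave $w''$ (with $w < w'' < w'$, or $w'' = w$ itself) whose leader $v''$ is on this chain satisfies $\Name{StrongPath}(v', v'')$, and then by the induction hypothesis (or the base case if $w'' = w$) there is a path from $v''$ to $v$; composing with the strong path from $v'$ to $v''$ finishes it. The reason such a $w''$ must exist and chain down to $v$ is again quorum intersection: $v'$'s $f+1$ strong references in $round(w'-1,3)$ must intersect the $f+1$ support-vertices of whichever leader was committed before $w'$, forcing a common vertex and hence forcing the path to propagate downward; I would make this precise by essentially re-deriving the standard DAG-Rider/Bullshark ``leaders are totally ordered on a chain'' argument, adapted to the three-round wave and the $n=2f+1$ threshold.

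The main obstacle I anticipate is the three-round wave structure interacting with the strong-path requirement: unlike the two-round Bullshark wave where the voting round directly precedes the next leader round, here the leader of wave $w$ sits in $round(w,1)$, its support is counted in $round(w,3)$, and the leader of wave $w+1$ sits in $round(w{+}1,1) = $ the round immediately after $round(w,3)$. I need to verify that strong edges chain correctly across this boundary — i.e., that a strong path from a $round(w,3)$ vertex down to $v \in round(w,1)$ genuinely exists (not merely a weak path), which relies on every vertex in rounds $(w,2)$ and $(w,3)$ on the relevant chain having been built with $\geq f+1$ strong edges to the immediately preceding round, as guaranteed by \secref{subsec:dag_construction}. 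A secondary subtlety is ensuring the quorum-intersection step is sound when some of the $f+1$ vertices come from Byzantine replicas: here I would invoke the \textit{MIC}-based non-equivocation (the Integrity argument in \secref{proof:tee_rbc}) so that even a Byzantine replica cannot contribute two different round-$r$ vertices to two different correct replicas' views, which is exactly what keeps the intersection argument valid with only $2f+1$ replicas rather than $3f+1$.

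Finally, I would remark that Lemma~\ref{lm:leadervertexpath} is the linchpin for Safety: once every committed leader lies on a single strong-path chain, the recursive \textsc{OrderVertices} procedure deterministically linearizes each leader's causal history in a fixed predefined order, so two correct replicas committing the same sequence number must output the same transaction, which will be spelled out in the Safety theorem following this lemma.
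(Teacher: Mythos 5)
Your base case is exactly the paper's key step: the $f{+}1$ third-round supporters of $v$ and the $f{+}1$ strong references of any round-$(w{+}1,1)$ vertex must intersect in a common vertex of round $(w,3)$ (with \textit{MIC}-enforced non-equivocation making the intersection meaningful at $n=2f+1$), yielding a path down to $v$. Your concern about strong edges chaining across the three-round wave boundary and about Byzantine contributions to the quorum are both legitimate and worth making explicit.

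However, your inductive step has a genuine gap. You induct on the wave gap $w'-w$ and route the path from $v'$ through \emph{intermediate committed leaders}, but intermediate waves need not have committed leaders at all (the coin may elect a vertex with insufficient support, or no vertex from the elected replica may exist in that round), and $p_j$ may directly commit $v'$ without having committed anything in between — indeed, establishing that $p_j$ will eventually commit $v$ is the whole point of this lemma, so presupposing a chain of committed intermediate leaders is close to circular. Your fallback, ``quorum intersection between $v'$'s strong references in $round(w'{-}1,3)$ and the support set of the previously committed leader,'' does not go through either, because those two sets live in different rounds when $w' > w+1$. The fix is simpler than your plan: observe that the quorum-intersection argument of your base case applies not just to the leader of wave $w{+}1$ but to \emph{every} vertex in round $(w{+}1,1)$, since each such vertex carries $f{+}1$ strong references into round $(w,3)$. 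Once every vertex of one full round has a path to $v$, a trivial induction on rounds (each later vertex strongly references at least one vertex of the preceding round, all of which have paths to $v$) shows that every vertex in every subsequent round has a path to $v$; in particular so does $v'$, wherever it sits. This is the route the paper takes, and it dispenses entirely with intermediate leaders and the recursive commit mechanism, which belong to the ordering argument (Lemmas~\ref{lm:ascendingcommit} and~\ref{lm:sameorder}) rather than to this reachability claim.
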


\begin{proof}
A leader vertex $v$ in wave $w$ is committed only if at least $f{+}1$ third-round vertices in wave $w$ reference $v$. Since each vertex in the DAG must reference at least $f{+}1$ vertices from the previous round, quorum intersection ensures that all vertices in the first round of wave $w+1$ have paths to $v$. By induction, this property extends to all vertices in all rounds of waves $w'>w$. Thus, every leader vertex $v'$ in waves $w'>w$ has a path to $v$, proving the lemma.
\end{proof}

\begin{lemma}\label{lm:ascendingcommit}
    A correct replica $p_i$ commits leader vertices in ascending order of waves.
\end{lemma}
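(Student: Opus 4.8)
The plan is to prove Lemma~\ref{lm:ascendingcommit} by analyzing the structure of the \Name{WaveReady} function (\figref{algorithm:main}, lines~\ref{line:wavereadystart}--\ref{line:wavereadyend}) and tracking how \texttt{decidedWave} evolves. The key observation is that the ordering of committed leader vertices is entirely determined by the order in which they are pushed onto \texttt{leadersStack} and then popped in \Name{OrderVertives}. Since a stack reverses insertion order, I must show that within a single invocation of \Name{WaveReady}$(w)$ the leaders are pushed in \emph{descending} wave order, and that across successive invocations the waves handled are monotonically increasing; together these give ascending commit order overall.

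First I would examine a single call to \Name{WaveReady}$(w)$ that actually commits something. The directly-committed leader $v$ of wave $w$ is pushed first (line~\ref{line:pushleader1}); then the \textbf{for} loop (line~\ref{line:recursivecheck}) iterates $w' = w-1, w-2, \dots, \texttt{decidedWave}+1$, pushing each qualifying earlier leader $v'$ onto the stack. Hence the stack, from top to bottom, holds leaders of waves in \emph{increasing} order $\le w$ (the wave-$w$ leader is at the bottom, the wave-$(\texttt{decidedWave}+1)$ leader near the top). When \Name{OrderVertives} pops the stack (line~\ref{line:popleader}), it processes these leaders from \emph{smallest} wave to \emph{largest} wave, i.e., in ascending wave order, and commits each leader's causal history before moving to the next. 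So within one invocation the commit order is ascending in waves.

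Next I would argue the cross-invocation monotonicity: after \Name{WaveReady}$(w)$ returns (having committed), it sets $\texttt{decidedWave} \GETS w$ (just before \Name{OrderVertives}). Any later invocation \Name{WaveReady}$(w_s)$ that commits must have $w_s > w$: by Lemma~\ref{clm:uniquewaveleader} each wave has a unique leader, by the recursive mechanism every wave $\le \texttt{decidedWave}$ has already been decided, and the protocol invokes \Name{WaveReady} with strictly increasing wave numbers as rounds advance (line in DAG construction calling \Name{WaveReady}$(\frac{r-1}{3})$). Moreover the loop in the later invocation only descends to $\texttt{decidedWave}+1 = w+1$, so it never re-pushes any leader of a wave $\le w$. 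Thus all leaders committed in the earlier invocation belong to waves $\le w$ and all leaders committed in the later invocation belong to waves $> w$. Combining this with the within-invocation ascending order established above, the full sequence of leader vertices committed by $p_i$ is in strictly ascending wave order.

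The main obstacle I anticipate is handling the subtlety that a single invocation of \Name{WaveReady}$(w)$ may commit \emph{several} leaders at once (those of intermediate waves between the old \texttt{decidedWave} and $w$ that have a \Name{StrongPath} from $v$), while intermediate waves \emph{without} such a path are simply skipped and never committed at all. I need to confirm this is consistent with ``ascending order'' --- it is, since skipping waves still leaves the committed ones sorted --- and I should be careful to note that a skipped wave-$w''$ leader is never committed \emph{later} either, because future invocations only look back as far as $\texttt{decidedWave}+1 = w+1 > w''$. This requires appealing to the fact (Lemma~\ref{lm:leadervertexpath}, via quorum intersection) that once $v$ is committed, any later directly-committed leader has a path to $v$, so the recursion correctly chains through all decided waves; but a wave whose leader lacks a \Name{StrongPath} at the time it is examined is permanently dropped by design, which I would state explicitly as the intended (safe) behavior rather than a gap.
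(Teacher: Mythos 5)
Your proof is correct and follows essentially the same approach as the paper's: the directly committed leader and its qualifying predecessors are pushed onto the stack in descending wave order, so the LIFO pop in \Name{OrderVertives} yields ascending wave order. Your additional cross-invocation argument (that \texttt{decidedWave} is updated to $w$ so later invocations only commit leaders of waves $> w$, and that skipped waves are permanently dropped) makes explicit a point the paper's proof leaves implicit, and is a welcome strengthening rather than a divergence.
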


\begin{proof}
In the algorithm, when a leader vertex $v$ of wave $w$ is committed (either directly or indirectly), the replica pushes $v$ onto a stack (line \ref{line:pushleader1} or line \ref{line:pushleader2}). The algorithm then recursively checks for earlier uncommitted leader vertices in waves $w'$ from $w-1$ down to the last decided wave (line \ref{line:recursivecheck}).
If such a leader $v'$ exists and there is a path from $v$ to $v'$, $v'$ is also pushed onto the stack. Since the stack is a LIFO (Last-In-First-Out) structure when popping vertices to commit (Line \ref{line:popleader}), the leaders are committed in order from the earliest wave to the current wave $w$. This ensures that leader vertices are committed in ascending order of waves.
\end{proof}

\begin{lemma}\label{lm:sameorder}
All correct replicas commit leader vertices in the same ascending order of waves.
\end{lemma}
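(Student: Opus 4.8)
The plan is to reduce the statement to the three preceding lemmas. By Lemma~\ref{lm:ascendingcommit}, the sequence of leader vertices committed by any single correct replica is already sorted by wave index, so it suffices to prove that all correct replicas commit the same set of (wave, leader-vertex) pairs; the common ordering then follows automatically. I would split this into a consistency part and a completeness part.

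For consistency, fix a wave $w$ and suppose two correct replicas $p_i$ and $p_j$ each commit a leader vertex for wave $w$. Since \Name{GetWaveVertexLeader}$(w)$ first calls \Name{ChooseLeader}$_i(w)$, Lemma~\ref{clm:uniquewaveleader} guarantees both replicas obtain the identical leader replica, and hence (by the eventual DAG agreement of Lemma~\ref{lm:DAGview}) the identical vertex. Therefore no two correct replicas can commit different vertices for the same wave, and the relative order of any two leaders both of them commit is the same, namely ascending by wave.

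For the completeness part --- if a correct replica $p_i$ commits the leader vertex $v$ of wave $w$, then every correct replica eventually commits $v$ --- I would argue from the commit mechanism. If $v$ was directly committed, then at least $f+1$ round-$(w,3)$ vertices reach $v$ through a strong path in $p_i$'s DAG; by Lemma~\ref{lm:DAGview} these vertices and edges eventually appear at every correct replica, and since $2(f+1) > n = 2f+1$, quorum intersection forces every round-1 vertex of wave $w+1$, and inductively every vertex in every later round, to have a strong path to $v$. If $v$ was only indirectly committed by $p_i$, then $v$ still counts as committed by a correct replica, so Lemma~\ref{lm:leadervertexpath} applies and yields a path from every subsequently committed leader down to $v$. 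In either case, when a correct replica $p_j$ runs \Name{WaveReady} for a wave $\ge w$ --- and it keeps doing so, because in the asynchronous model with $f+1$ honest proposers the DAG grows without bound --- it either directly commits $v$, or the recursive walk in \Name{WaveReady} (lines~\ref{line:recursivecheck}--\ref{line:pushleader2}) discovers $v$ on the strong-path chain below the leader it is currently committing. Hence the committed set coincides across correct replicas, and combined with Lemma~\ref{lm:ascendingcommit} they commit leader vertices in the same ascending order of waves.

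The main obstacle is the completeness direction, and in particular the indirect-commit case: one must ensure that a leader which one replica picks up only transitively is nonetheless guaranteed to lie on the strong-path chain that every other replica's later direct commit walks down. This is where Lemma~\ref{lm:leadervertexpath} and the quorum-intersection argument have to be combined carefully, and where one must also be precise that the path furnished by Lemma~\ref{lm:leadervertexpath} is in fact a \emph{strong} path, since the recursive commit check uses \Name{StrongPath} rather than \Name{Path}.
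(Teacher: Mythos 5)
Your proof is correct and follows essentially the same route as the paper's: both arguments turn on Lemma~\ref{lm:leadervertexpath} to show that a leader committed by one correct replica necessarily lies on the (strong) path below every later committed leader, and then combine this with Lemma~\ref{lm:ascendingcommit} to obtain the common ascending order. Your version is in fact somewhat more careful than the paper's --- it makes the per-wave uniqueness of the elected leader explicit via Lemma~\ref{clm:uniquewaveleader} and Lemma~\ref{lm:DAGview}, and it correctly flags that Lemma~\ref{lm:leadervertexpath} must be read as furnishing a \emph{strong} path (which its quorum-intersection proof does establish, even though its statement only says ``path''), since the recursive check in \Name{WaveReady} tests \Name{StrongPath} rather than \Name{Path}.
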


\begin{proof}
We assume that a correct replica $p_i$ commits leader vertex $v$ of wave $w$ and another correct replica $p_j$ commits leader vertex $v'$ of wave $w'$, where $w < w'$.  By Lemma~\ref{lm:leadervertexpath}, there must exist a path from $v'$ to $v$ in the DAG.  This path implies that $v$ should have been indirectly committed by $p_j$. Thus, if a leader vertex is committed by one correct replica, then no other replica can skip committing it. 

Combining this observation with Lemma~\ref{lm:ascendingcommit}, all correct replicas commit leader vertices in the same ascending order of waves. 
\end{proof}

\begin{theorem}\label{thm:safety}
    \sysname guarantees the \textbf{Safety} property.
\end{theorem}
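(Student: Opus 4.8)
The plan is to reduce \textbf{Safety} to the statement already established in Lemma~\ref{lm:sameorder}, namely that all correct replicas commit the same leader vertices in the same ascending order of waves, and then argue that the deterministic \Name{OrderVertices} routine (\figref{algorithm:main}) expands this common sequence of leaders into one and the same totally ordered sequence of transaction blocks at every correct replica. Safety then follows because the $sn$-th committed transaction is a fixed position in a sequence that is identical everywhere.

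First I would fix notation: let $v^{(1)}, v^{(2)}, \ldots$ be the sequence of committed leader vertices, which by Lemma~\ref{lm:sameorder} is identical at all correct replicas. I would then show that whenever a correct replica commits $v^{(k)}$ and runs \Name{OrderVertices}, the set $verticesToDeliver$ it computes for $v^{(k)}$ --- the vertices reachable from $v^{(k)}$ that have not yet been delivered --- is the same set for every correct replica. The key sub-claim is that the entire causal history of $v^{(k)}$ is already present in the local DAG of any replica at the moment it adds $v^{(k)}$: by the DAG-construction rule (line~\ref{line:vertices_in_DAG}) a vertex is inserted only after all vertices it references via strong \emph{and} weak edges are already in the DAG, so by induction the reachable set of $v^{(k)}$ is causally closed and fully materialized. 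Hence this reachable set depends only on $v^{(k)}$'s references, which are immutable once reliably delivered, and not on when or at which replica the commit occurs. Combined with an induction on $k$ showing that the set of previously delivered vertices agrees across correct replicas, $verticesToDeliver$ is identical at all correct replicas.

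Next I would use the fact that \Name{OrderVertices} iterates over $verticesToDeliver$ in a \emph{fixed predefined order} (line~\ref{line:predined_order}, e.g.\ a deterministic depth-first traversal keyed on round and source), the same function at all replicas; therefore the causal history of $v^{(k)}$ is appended to the global log in the same order everywhere. The \textit{TRAD}.\Name{Decrypt} step (line~\ref{line:decrypt}) is a deterministic decryption under the shared session key held inside the TEE, so it yields the same block contents at every replica. Carrying this out for $k = 1, 2, \ldots$ in turn shows all correct replicas output the same sequence of committed blocks, hence assign the same sequence number to the same transaction; in particular, if $p_i$ commits $tx_1$ and $p_j$ commits $tx_2$ with the same $sn$, then $tx_1 = tx_2$, which is \textbf{Safety}.

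The main obstacle I anticipate is the ``same $verticesToDeliver$'' step: one must rule out that two replicas commit the same leader $v^{(k)}$ while holding transiently different DAGs, so that one sees a vertex reachable from $v^{(k)}$ that the other has not yet inserted. This is exactly what the causal-closure argument (line~\ref{line:vertices_in_DAG}) handles, but it also leans on the earlier lemmas --- Lemma~\ref{lm:DAGview} so that views ultimately coincide, Lemma~\ref{clm:uniquewaveleader} so that the per-wave leader choice agrees, and Lemma~\ref{lm:leadervertexpath} so that leaders of skipped waves, when later committed indirectly, are slotted into precisely the right place and no replica interleaves two causal histories differently. Weak edges require a brief separate mention since they reach back more than one round, but the same closure rule applies to them verbatim, so the argument goes through unchanged.
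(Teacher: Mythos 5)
Your proposal is correct and follows essentially the same route as the paper's own proof: it reduces Safety to Lemma~\ref{lm:sameorder} (identical ascending leader order), invokes Lemma~\ref{lm:DAGview} for identical DAG views, and uses the determinism of \Name{OrderVertices} and the predefined traversal order to conclude identical transaction sequences and hence identical $sn$ assignments. Your additional causal-closure argument for why the $verticesToDeliver$ sets coincide is a more careful spelling-out of what the paper compresses into ``identical DAGs plus deterministic commit rules,'' but it is not a different approach.
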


\begin{proof}

The commit procedure, as defined in the function \textsc{OrderVertices} (lines \ref{line:ordervertices_begin}--\ref{line:ordervertices_end}), is responsible for committing vertices.

By Lemma~\ref{lm:sameorder}, all correct replicas commit leader vertices in the same ascending order of waves. Furthermore, each committed leader's causal history is delivered in a predefined, deterministic order (line \ref{line:predefined_order}). This ensures that the sequence of vertices---and, by extension, the sequence of transactions contained within these vertices---is identical for all correct replicas.

Additionally, Lemma \ref{lm:DAGview} establishes that all correct replicas maintain identical DAGs. Since all correct replicas apply the same deterministic commit rules, the assignment of sequence numbers $sn$ to transactions is consistent across all replicas. Consequently, if two correct replicas $p_i$ and $p_j$ commit transactions $tx_1$ and $tx_2$ with sequence number $sn$, it follows that $tx_1=tx_2$. 
Thus, \sysname satisfies the \textbf{Safety} property.
\end{proof}

\subsection{Liveness Proof} \label{app:subsec:liveness}

\subsubsection{Preliminaries (aligned with Sec.~\ref{sec:core}).}
We adopt the $k$-iteration common-core abstraction of Sec.~\ref{sec:core}.
Let $N{=}2f{+}1$ replicas with up to $f$ Byzantine.
In a wave of $k$ rounds, each replica $i$ outputs a set $U_i$ after $k$ aggregation rounds, and the \emph{common core} is
\[
S^\ast(k)\;:=\;\bigcap_{i\in \{1,\dots,N\}} U_i .
\]
At the end of the wave, \sysname invokes a \textit{T-Coin} to elect a leader from the first round.
Let $\mathcal{L}$ denote the leader domain. 
We consider a \emph{global} coin that elects uniformly from all $N$ IDs (replicas).
A wave commits if the elected leader lies in $S^\ast(k)$.
Hence, the per-wave success probability is
\[
p_k\;=\;\frac{|S^\ast(k)|}{|\mathcal{L}|}\qquad(|\mathcal{L}|=N).
\]
Therefore the expected commitment latency (in \emph{rounds}) is
\begin{equation}\label{eq:exp-lat}
\mathbb{E}[\textnormal{rounds}]\;=\;k\cdot\frac{1}{p_k}\;=\;k\cdot\frac{|\mathcal{L}|}{|S^\ast(k)|}.
\end{equation}

We keep the DAG-side notation as follows:
round sets $R_1,\dots,R_\ell$, parent sets $P_{t+1}(\cdot)$ of size $f{+}1$, and
for $x\in R_1$ the carrier sets $U_x^{(t)}$ and final support
$supp_\ell(x):=|\{z\in R_\ell:\ P_\ell(z)\cap U_x^{(\ell-2)}\neq\varnothing\}|$.

\subsubsection{DAG $\Rightarrow$ Common-Core (bridge).} \label{app:bridge}
Set $k=\ell$. 
If a first-round vertex $x\in R_1$ is supported by at least $f{+}1$ final-round ($l$-th) vertices in the DAG, i.e., $supp_\ell(x)\ge f{+}1$, then $x$ appears in at least $f{+}1$ replicas' $k$-round aggregate $U_i$; hence 
\[
x\in S^\ast(k)\ :=\ \bigcap_{i\in \{1,\dots,N\}} U_i.
\]
Therefore, any DAG-side lower bound on 
$\#\{x\in R_1:\ supp_\ell(x)\ge f{+}1\}$ 
lifts verbatim to a lower bound on $|S^\ast(k)|$, and the per-wave success probability with a global random coin is 
$p_k=|S^\ast(k)|/N$.
In what follows, we work entirely in the DAG view (via $supp_\ell(\cdot)$ and the constructions) to present counting arguments and adversarial schedules; by the bridge above, these results directly yield the required common-core properties and liveness bounds.

\subsubsection{Proof by DAG.}
Before proceeding with the proof, we summarize the key symbols used throughout.

\begin{itemize}
    \item $\ell$: the number of rounds in a single wave.
    \item $R_i$: the disjoint vertex sets in round-$i$ of the DAG. By DAG construction rules, $R_i\in [f{+}1,2f{+}1], for\ i\in\{1,\dots,\ell\}$.
    \item $n_i$: the number of vertices in round $i$ ($n_i = |R_i|, for\ i\in\{1,\dots,\ell\}$).
    \item $P_{t+1}(w)\subseteq R_t$: The parent (referenced) set of $w\in R_{t+1}$, with $|P_{t+1}(w)|=f{+}1$ (all parents are distinct).
    \item $U_x^{(t)}$: the round-$t$ carrier set of $x\in R_1$, \ie, the set of round-$t$ vertices that support $x$.
    Define recursively as:
          \[
          \begin{aligned}
            \quad \quad U^{(2)}_x:&=\{u\in R_2:\ x\in P_2(u)\} \\
            \quad \quad U^{(t)}_x:&=\{w\in R_{t}: P_{t}(w)\cap U^{(t-1)}_x\neq\varnothing\}\ \ (t > 2).
          \end{aligned}
          \]
    \item $supp_{\ell}(x):$ The number of final-round ($l$-th) vertices that support (link to) $x\in R_1$.
          For $\ell$ rounds, $supp_\ell(x):=|\,\{z\in R_\ell:\ P_\ell(z)\cap U^{(\ell-1)}_x\neq\varnothing\}\,|$.
    \item \textit{Heavy / Light} vertex at round 1: a vertex $x\in R_1$ is \emph{heavy} if $|U_x^{(2)}|\ge n_2{-}f$
          (equivalently, $supp(x)=n_l$), and \textit{light} otherwise ($|U_x^{(2)}|\le m{-}f{-}1$).
          We write $H:=\{x\in R_1:\ |U_x^{(2)}|\ge n_2-f\}$ and $L:=R_1\setminus H$.
\end{itemize}

\bheading{Adversarial Network Model (Worst Case).}
We first prove the adversarial cases.

\begin{lemma}\label{lm:two_round_commit}
    In a \textbf{two-round} wave ($\ell=2$) with $N{=}2f{+}1$ replicas, \textbf{no} first-round vertex is guaranteed to reach $supp_2(x)\ge f{+}1$ whenever $f\ge2$ under an asynchronous adversary.
\end{lemma}

\begin{proof}
Assume the asynchronous adversary exists. 
In each round, consider any set $S$ of $f{+}1$ vertices in round 2 with $n_1\in [f+1,2f+1]$. 
Hence, the total number of links from $S$ to the first-round vertices is $(f+1)\cdot (f+1)$.

However, if we suppose each first-round vertex receives at most $f$ incoming links from $S$, the aggregate capacity would be $n_1 \cdot f$.
The difference, $(f+1)^2 - n_1 \cdot f$, evaluates to $-f^2+f+1$ in the worst-case scenario ($n_1 = 2f+1$).
Since this value is non-positive for all $f \ge 2$, the adversary can strategically construct a DAG where every vertex in round 1 has an indegree from $S$ of at most $f$. 
Consequently, under adversarial scheduling, the protocol may be forced to stall in every wave, as no vertex $x \in R_1$ can satisfy the required support threshold $supp_2(x) \ge f+1$.
\end{proof}

\begin{lemma} \label{lm:two_round_commit_exp}
    When employing a \textbf{two-round} commit strategy, liveness cannot be guaranteed. 
\end{lemma}

\begin{proof}

By Lemma~\ref{lm:two_round_commit}, under an asynchronous adversary and $f\ge2$, every $x\in R_1$ satisfies $supp_2(x)\le f$, so no vertex reaches the threshold $f{+}1$.
By the DAG$\Rightarrow$Common-Core bridge, this implies $S^\ast(2)=\varnothing$.

From the common-core preliminaries, a wave commits iff the coin-elected leader lies in $S^\ast(k)$; hence, for $k=2$ the per-wave success probability is
$p_2=\frac{|S^\ast(2)|}{|\mathcal L|}=0$ (Eq.~\eqref{eq:exp-lat}), yielding unbounded expected commit latency.
An asynchronous adversary can realize this in every wave, so progress cannot be guaranteed.
Therefore, liveness does not hold under a \textbf{two-round} commit strategy.
\end{proof}

\begin{lemma} \label{lm:heavy_suppf+1}
    In any wave with $\ell \ge 3$, every \textbf{heavy} vertex $x\in R_1$ satisfies $supp_\ell(x)=n_\ell\ge f{+}1$.
\end{lemma}

\begin{proof}
For $x\in R_1$ put $d_x:=|U^{(2)}_x|$, \ie, the number of round-2 vertices linked to $x$. 
Call $x$ \emph{heavy} if $d_x\ge n_2-f$; 
then $|R_2\setminus U_x^{(2)}|\le f$, so every $(f{+}1)$-subset of $R_2$ intersects $U^{(2)}_x$.

Consider any $y\in R_3$. 
Since $|P_3(y)|=f{+}1$ and $P_3(y)\subseteq R_2$, we have
$P_3(y)\cap U^{(2)}_x\neq\varnothing$, hence $y\in U^{(3)}_x$.
Therefore $U^{(3)}_x=R_3$ and $supp_3(x)=n_3\ge f{+}1$.

For any $t\ge3$, if $U^{(t)}_x=R_t$, then for any $z\in R_{t+1}$ we have $P_{t+1}(z)\subseteq R_t$, so trivially $P_{t+1}(z)\cap U^{(t)}_x\neq\varnothing$ and
$z\in U^{(t)}_x$. 
By induction, $U^{(t+1)}_x=R_{t+1}$ for all $t=3,\dots,\ell$.
Thus $supp_\ell(x)=|R_\ell|=n_\ell\ge f{+}1$.
\end{proof}

\begin{lemma} \label{thm:3layer-worst-all-in-one}
In any \textbf{three-round} wave semantics ($\ell=3$), at least two vertices $x\in R_1$ are \emph{heavy} and satisfy $supp(x)\ge f{+}1$. 
Moreover, this lower bound is attained when $n_1=n_2=2f{+}1$ (for any $n_3\in[f{+}1,2f{+}1]$).
\end{lemma}

\begin{proof}
For $x\in R_1$, put $d_x:=|\{u\in R_2:\ x\in P_2(u)\}|$ and call $x$
\emph{heavy} if $d_x\ge n_2-f$. By Lemma~\ref{lm:heavy_suppf+1}, any \emph{heavy} $x$ satisfies
$supp_3(x)=n_3\ge f{+}1$.

Let $H$ be the set of \emph{heavy} vertices.
Since each $u\in R_2$ has $f{+}1$ parents,
\[
\sum_{x\in R_1} d_x=\sum_{u\in R_2}|P_2(u)|=n_2(f{+}1).
\]
With $d_x\le n_2$ for \emph{heavy} vertices and $d_x\le n_2-f-1$ for \textbf{light},
\[
n_2(f{+}1)\ \le\ |H|\cdot n_2+(n_1-|H|)\cdot(n_2-f-1)=n_1(n_2-f-1)+|H|(f{+}1),
\]
thus we have
\[
|H|\ \ge\ n_1+n_2-\frac{n_1\,n_2}{f+1}.
\]

For fixed $n_2 \in[f{+}1,2f{+}1]$, the derivative of $n_1$ equals $1-\tfrac{n_2}{f+1}<0$, hence the right-hand side is minimized at the largest $n_1$, \ie, $n_1=2f{+}1$.
Plugging $n_1=2f{+}1$ yields
\[
|H|\ \ge\ (2f{+}1)-\frac{f\,n_2}{f+1},
\]
which is strictly decreasing in $n_2$, so the minimum over $n_2\in[f{+}1,2f{+}1]$ is attained
at $n_2=2f{+}1$:
\[
|H|\ \ge\ (2f{+}1)-\frac{f(2f{+}1)}{f+1}
\;=\;2-\frac{1}{f+1}.
\]
Taking ceilings gives $|H|\ge 2$ for all admissible $(n_1,n_2)$, and the \emph{minimum} is
attained at $(n_1,n_2)=(2f{+}1,2f{+}1)$ (independently of $n_3$).
\end{proof}

\begin{lemma}\label{lm:three_round_commit}
    In the \textbf{three-round} wave semantics ($\ell=3$), exactly two vertices in the first round are guaranteed to reach $supp_3(x)\ge f{+}1$ under an asynchronous adversary.
\end{lemma}

\begin{proof}

\bheading{Lower bound.}
From Lemma~\ref{thm:3layer-worst-all-in-one}, we get the lower bound of $|H|\ge 2$ for any set of $(n_1,n_2,n_3)$. 
Therefore, in every legal construction, at least two $x\in R_1$ reach $supp(x)=f{+}1$.

\smallskip
\bheading{Tightness via explicit construction.}
We next exhibit a legal construction with exactly two vertices reaching the threshold.

\paragraph{Step 1 (choose $P_2$).}
Pick two distinct \emph{heavy} vertices $h_1,h_2\in R_1$ and enforce
\[
\{h_1,h_2\}\ \subseteq\ P_2(u),\quad \forall\,u\in R_2.
\]
Thus $|U_{h_1}^{(2)}|=|U_{h_2}^{(2)}|=|R_2|=2f+1\ge f{+}1$.
Let $L:=R_1\setminus\{h_1,h_2\}$ be the set of the remaining $2f-1$ \textbf{Light} vertices.
For each $u\in R_2$ there remain exactly $f{-}1$ light-parent slots to fill so that $|P_2(u)|=f{+}1$. 
We will fill these using only a prescribed family of $f{+}1$ \emph{windows} on $R_2$ of size $f$ each, so that every light $x\in L$ is assigned to a window and all its descendants lie inside that window.
Formally, label $R_2$ cyclically as $\{0,1,\dots,2f\}$ and define
\[
S_i:=\{\,i,i{+}1,\dots,i{+}f{-}1\,\}\pmod{2f{+}1},
\quad i\in\{0,2,4,\dots,2f\}
\]
($f{+}1$ windows, each of size $f$).
Choosing the window in this way ensures that every column $u\in R_2$ is contained in either \(\lfloor f/2\rfloor\) or \(\lceil f/2\rceil\) of the windows \(S_i\), yielding near-uniform coverage.
This makes it easy to set the fractional column sums to $f$ or $f{-}1$ and lift to an integral solution by total unimodularity later.  

Further, we partition $L$ into $f{+}1$ groups $G_i$ so that $|G_i|=2$ for $i=0,2,\dots,2f-4$ (that is, for $f{-}2$ windows), $|G_i|=1$ for the remaining three windows; this uses all $2f-1$ light vertices.
Assign each light $x\in G_i$ to a target degree $d_x\in\{f,f{-}1\}$ as follows: 
for every window $G_i$ with two light vertices, both lights get $d_x=f$; among the three windows with only one light vertex, two of them get $d_x=f$ and one gets $d_x=f{-}1$.
Then
\[
\sum_{x\in L} d_x
=(f{-}2)\cdot 2f+2\cdot f+(f{-}1)
=(2f{+}1)(f{-}1),
\]
which equals the total number of light descendants required across $R_2$.

Now we have a bipartite $b$-matching system, and we define variables $a_{x,u}\in\{0,1\}$ for $x\in L, u\in R_2$ with the constraints

\begin{equation}\label{eq:bmatching}
\left\{
\begin{aligned}
&\sum_{u\in R_2} a_{x,u} \;=\; d_x && (\forall\, x\in L),\\
&\sum_{x\in L} a_{x,u} \;=\; f-1 && (\forall\, u\in R_2),\\
&a_{x,u} \;=\; 0 && \text{if } u\notin S_{i(x)},\\
&a_{x,u} \;\in\;\{0,1\}
\end{aligned}
\right.
\end{equation}

where $i(x)$ is the index of the window that $x$ is assigned to.

By the previous rule, the vertices in $R_2$ linked to any $x \in R_1$ are designated by $S_{i(x)}$, thus consecutive.
The constraint matrix of \eqref{eq:bmatching} has the \textbf{\emph{consecutive-ones}} property on rows (each row has $1$’s only within the interval $S_{i(x)}$), hence it is \textbf{totally unimodular} (see~\cite{schrijver_theory}). 
Therefore, any feasible fractional solution is integral; in particular, the system admits an integer solution whenever it admits a fractional one.
A fractional solution is immediate: for each $x\in L$ put
\(
a_{x,u}=d_x/f\ \text{for }u\in S_{i(x)}
\)
and $a_{x,u}=0$ otherwise. 
Then row-sums are $d_x$ by construction.
For the column-sums, by suitably permuting which windows are the two-light vs.\ one-light window, one achieves that for every $u$ the total 
\(\sum_{x: u\in S_{i(x)}} d_x = f\,(f{-}1)\),
hence the fractional column-sum equals
\(\sum_x a_{x,u}=(1/f)\cdot f(f{-}1)=f{-}1\).
By \emph{total unimodularity}, there is an \emph{integral} solution to \eqref{eq:bmatching}. 
Taking $P_2(u)$ to be $\{h_1,h_2\}\cup 
\{x\in L:\ a_{x,u}=1\}$ for each $u$ completes Step~1.
Thus for every light $x$ we have $|U_x^{(2)}|=d_x\le f$ and, moreover, \(U_x^{(2)}\subseteq S_{i(x)}\) is confined to a window of size $f$.

\paragraph{Step 2 (choose $P_3$).}
We first consider the worst but admissible case of having only $f{+}1$ vertices in $R_3$.
Use the $f{+}1$ vertices of $R_3$ to index the $f{+}1$ windows ($S_i$):
for each $i\in\{0,2,\dots,2f\}$ pick a distinct $v_i\in R_3$ and set
\[
P_3(v_i)\ :=\ R_2\setminus S_i.
\qquad (\text{\textbf{Note}: }|P_3(v_i)|=2f{+}1-f=f{+}1),
\]

\bheading{Verification.}
For the two \emph{heavy} vertices $h_1,h_2$ we have $|U_{h_j}|=2f{+}1\ge f{+}1$, hence 
\[
supp_3(h_j)=|R_3|=f{+}1.
\]

For any light $x\in G_i$, we have $U_x^{(2)}\subseteq S_i$ by Step~1, hence \(P_3(v_i)\cap U_x^{(2)}=\varnothing\). 
Therefore $x$ is \emph{not} hit by some $v_i$ and
\[
supp_3(x)\ \le\ |R_3|-1\ =\ f.
\]
Thus exactly only the two vertices $h_1,h_2$ satisfy $supp_3(\cdot)=f{+}1$.

Therefore, combining the lower and tightness yields exactly two vertices in the first round that are guaranteed to reach $supp_3(\cdot) \ge f{+}1$, which completes the proof.
\end{proof}

\begin{lemma} \label{lm:three_round_commit_exp}
    When employing a \textbf{three-round} commit strategy ($\ell=3$), \sysname commits a vertex leader every \textbf{$3f+2$} rounds in the DAG under an asynchronous adversary in expectation.
\end{lemma}

\begin{proof}
Consider any wave $w$. By Lemma \ref{lm:three_round_commit}, at least \textbf{two} $x\in R_1$ satisfy $supp_2(x) \ge f{+}1$.
The common coin (Sec.~\ref{app:commoncoin}) ensures that the adversary cannot predict the coin's outcome before the leader is revealed. 
By the DAG$\Rightarrow$Common-Core bridge, this implies $|S^\ast(3)|=2$.
From the common-core preliminaries, a wave commits iff the coin-elected leader lies in $S^\ast(k)$; hence for $k=3$ the per-wave success probability is
$p_3=\frac{|S^\ast(3)|}{|\mathcal L|}=\frac{2}{2f+1}$.
By Eq.~\eqref{eq:exp-lat}, the committed latency is 
\[
\mathbb{E}[\textnormal{rounds}]\;=\;k\cdot\frac{1}{p_k}\;=3\cdot\frac{2f+1}{2}\;=3f+\frac{3}{2}.
\]
Therefore, in expectation, when employing a \textbf{three-round} commit strategy, \sysname commits a vertex leader every $\lceil 3f+\frac{3}{2} \rceil = 3f+2$ rounds in the DAG.    
\end{proof}

\begin{lemma} \label{lm:four_round_commit}
    When employing \textbf{four-round} wave semantics ($\ell=4$), at least $f{+}1$ vertices $x\in R_1$ are guaranteed to satisfy $supp_4(x)\ge f{+}1$.
\end{lemma}

\begin{proof}
For any $y\in R_3$ we define its \emph{gap} as $S(y):=R_2\setminus P_3(y)$, so $|S(y)|=f$.

\paragraph{Step 1 (a single protected class consumes $f{+}1$ third-layer vertices).}
Say that $y$ \emph{avoids} $x$ iff $U^{(2)}_x\subseteq S(y)$. 
For a fixed window
$S\subseteq R_2$ of size $f$, define the class
\[
\mathcal{C}_2(S):=\{x\in R_1:\ U^{(2)}_x\subseteq S\}.
\]
All $x\in\mathcal{C}_2(S)$ are avoided exactly by those $y$ with $S(y)=S$.

To make any $x$ \emph{eligible} to be avoided by some $z\in R_4$,
we must have $|R_3\setminus U^{(3)}_x|\ge f{+}1$, \ie
\begin{equation}\label{eq:avoid-cond}
A^{(3)}_x\ :=\ |\{y\in R_3:\ P_3(y)\subseteq R_2\setminus U^{(2)}_x\}|\ \ge\ f{+}1.
\end{equation}

Crucially, these $f{+}1$ avoiders cannot be combined across different gaps: 
to cover the entire class $\mathcal{C}_2(S)$, one needs $f{+}1$ vertices $y\in R_3$ all with the
\emph{same} gap $S$. 
Since $|R_3|\le 2f{+}1<2(f{+}1)$, at most one window $S^\star$ can receive $\ge f{+}1$ such vertices; hence at most one class $\mathcal{C}_2(S^\star)$ can be fully protected (\ie, every $x\in\mathcal{C}_2$ is possible to be avoided by (not linked to) vertex in $R_4$).

\paragraph{Step 2 (there are $\boldsymbol{f{+}1}$ vertices outside any single window).}
Fix any size-$f$ set $S\subseteq R_2$ and consider the columns $R_2\setminus S$.
Each $u\in R_2\setminus S$ must choose $f{+}1$ parents in $R_1$, so the total number of \emph{outside-$S$} parent edges equals $(n_2-f)(f{+}1)$. 
Every $x\in\mathcal{C}_2(S)$ has $U^{(2)}_x\subseteq S$ and thus contributes \emph{zero} edge outside $S$.
Thus, all these ``outside-$S$'' parent selections must be drawn from vertices outside $\mathcal{C}_2(S)$, so the number of distinct vertices outside $\mathcal{C}_2(S)$ satisfies:
\begin{equation}\label{eq:outside-count}
|\{x\in R_1:\ U^{(2)}_x\nsubseteq S\}|\ \ge\
\frac{(n_2-f)(f{+}1)}{n_2-f}\ =\ f{+}1.
\end{equation}
Therefore, there must be at least $f{+}1$ $R_1$ vertices outside $\mathcal{C}_2(S)$.

\paragraph{Step 3 (outside vertices cannot be protected and must be hit by round 4).}
Choose the third layer so that all its gaps equal one fixed window $S^\star$
(\ie\ $S(y)=S^\star$ for all $y\in R_3$). 
Then the class $\mathcal{C}_2(S^\star)$ has $A^{(3)}_x=|R_3|\ge f{+}1$, hence $\eqref{eq:avoid-cond}$ holds for all $x\in\mathcal{C}_2(S^\star)$. 
By Step~1, no other class can satisfy $\eqref{eq:avoid-cond}$.
In particular, every $x\notin\mathcal{C}_2(S^\star)$ has $A^{(3)}_x\le f$, so $|R_3\setminus U^{(3)}_x|\le f$ and \emph{no} vertex $z\in R_4$ can avoid $x$ (since $P_4(z)$ must have size $f{+}1$).
Consequently, all $z\in R_4$ hit every $x\notin\mathcal{C}_2(S^\star)$ and $supp_4(x)=|R_4|=n_4\ge f{+}1$.
Applying \eqref{eq:outside-count} to $S^\star$ guarantees at least $f{+}1$ such $x$, proving the lower bound.

\paragraph{Step 4 (tightness).}
Since we have at most $f$ Byzantine replicas, it is possible for them to make $n_1=f{+}1$.
Then, every $u\in R_2$ must choose all vertices of $R_1$, so each $x\in R_1$ is \emph{heavy} and by Lemma~\ref{lm:heavy_suppf+1} we have $supp_4(x)=n_4\ge f{+}1$. 
Thus, the lower bound is attained.

This proves the lemma: in the \textbf{four-round} wave semantics ($\ell=4$), at least $f{+}1$ vertices $x\in R_1$ are guaranteed to satisfy $supp_4(x)\ge f{+}1$.
\end{proof}

\begin{lemma}\label{lm:four_round_commit_exp}
When employing a \textbf{four-round} commit strategy ($\ell=4$), \sysname commits a vertex leader in at most \textbf{8} rounds in the DAG under an asynchronous adversary in expectation.
\end{lemma}

\begin{proof}
Consider any wave $w$. By Lemma~\ref{lm:four_round_commit}, at least $f{+}1$ first-round vertices satisfy $supp_4(x)\ge f{+}1$.
By the DAG$\Rightarrow$Common-Core bridge, this gives $|S^\ast(4)|\ge f{+}1$.
From the common-core preliminaries, a wave commits iff the coin-elected leader lies in $S^\ast(k)$; hence for $k=4$ the per-wave success probability is
$p_4=\frac{|S^\ast(4)|}{|\mathcal L|}\ \ge\ \frac{f{+}1}{2f{+}1}.$.
By Eq.~\eqref{eq:exp-lat}, the committed latency is 
\[
\mathbb{E}[\textnormal{rounds}]\;=\;k\cdot\frac{1}{p_k}\;=4\cdot\frac{2f+1}{f+1}\;\le 8.
\]
therefore, in expectation, when employing a \textbf{four-round} commit strategy, \sysname commits a vertex leader every \textbf{8} rounds in the DAG.  
\end{proof}

\begin{lemma} \label{lm:lge4-upper-fplus1}
Fix $N{=}2f{+}1$ in any admissible DAG. 
For any wave length $\ell\ge4$, in the worst case
\[
\Bigl|\{\,x\in R_1:\ \mathrm{supp}_{l}(x)\ge f{+}1\,\}\Bigr|\ =\ f{+}1.
\]
\end{lemma}

\begin{proof}
(\emph{Upper bound.})
In the worst case the adversary controls $f$ replicas and can suppress all of their round-1 proposals, so $n_1\in[f{+}1,\,2f{+}1]$ can be driven down to $|R_1|=f{+}1$. 
Consequently,
\[
\bigl|\{\,x\in R_1:\ \mathrm{supp}_{l}(x)\ge f{+}1\,\}\bigr|
\ \le\ n_1\ =\ f{+}1,
\]
so at most $f{+}1$ first-round vertices can reach the threshold (at round $\ell$) under this worst case.

\smallskip
(\emph{(Tightness at $\ell=4$ and persistence for $\ell>4$).})
By Lemma~\ref{lm:four_round_commit}, in any $\ell=4$ wave we \emph{always} have at least $f{+}1$ vertices $x\in R_1$ with $supp_4(x)\ge f{+}1$.
Combining with the upper bound above yields equality at $\ell=4$:
\[
\Bigl|\{\,x\in R_1:\ supp_{4}(x)\ge f{+}1\,\}\Bigr|\;=\;f{+}1
\quad\text{(worst case)}.
\]

Any wave semantics with more than four rounds ($\ell>4$) yields at least as many qualifying vertices per wave, i.e., $|S^\ast(\ell)|\ge f{+}1$.
However, an adversary can always force $n_1=f{+}1$ independently of $\ell$, so no larger worst-case guarantee is achievable for any $\ell>4$.
Therefore, the worst-case cap remains $f{+}1$ for all $\ell\ge4$.
\end{proof}

\begin{lemma}\label{lm:four-rounds-optimal}
Consider waves of fixed constant length $\ell$ in a DAG-BFT protocol with $n{=}2f{+}1$ replicas, where each round-$r\ge2$ vertex references exactly $f{+}1$ parents in round $r{-}1$.
Then $\ell=4$ is optimal among constant-round wave designs: it is the smallest constant that guarantees constant-round $8$ commitment, and any $\ell>4$ does not improve the worst-case guarantee.
\end{lemma}

\begin{proof}
By Lemmas~\ref{lm:two_round_commit_exp} and \ref{lm:three_round_commit_exp}, waves of length $2$ or $3$ can not guarantee constant-round commitment.
Further, by Lemma~\ref{lm:four_round_commit_exp}, waves of length $4$ \emph{do} guarantee constant-round commitment (in particular, $8$ rounds in expectation in the worst case).

Moreover, by Lemma~\ref{lm:lge4-upper-fplus1}, for any $\ell>4$ the adversary can still only force $|R_1|=f{+}1$, so the worst-case number of round-1 vertices that can reach the threshold is capped by $f{+}1$. 
Increasing $\ell$ therefore does not improve the worst-case guarantee but only adds commit latency.

Hence $\ell=4$ is the optimal constant length of wave that guarantees constant-round commitment under an $n{=}2f{+}1$ system.
\end{proof}

\bheading{Random-Delay Network Model (Stochastic Case).}
We next prove the liveness of \sysname within the random-delay network model.

\begin{lemma}[Commit probability under random delays]\label{lem:four-round-prob}
Let $N{=}2f{+}1=:n$ and consider a wave of length $\ell=4$ with $n_1=n_2=n_3=n_4=n$.
Assume each round-$r$ vertex independently chooses exactly $f{+}1$ parents
uniformly at random from $R_{r-1}$.
Then
\[
p_4(f)\ \ge\ 0.9419, \quad\text{for all } f\ge 1,
\]
with \(p_4(f)\) strictly increasing in \(f\) and \(p_4(f)\!\to\!1\) as \(f\!\to\!\infty\).
Consequently, the expected rounds-per-commit satisfies
\[
\mathbb{E}[\textnormal{rounds per commit}]
\;=\;\frac{4}{p_4(f)}
\;<\;4.25,
\]
and \(\mathbb{E}[\textnormal{rounds per commit}]\to 4\) rapidly as \(f\) grows.
\end{lemma}

\begin{proof}
By symmetry of the coin, fix the leader $v\in R_1$.  
A wave commits iff this $v$ satisfies the round-4 support threshold $\,supp_4(v)\ge f{+}1$,
which (by the DAG$\Rightarrow$common-core bridge) is equivalent to $v\in S^\ast(4)$.

\emph{Round 2.}
Each $u\in R_2$ includes $v$ among its $f{+}1$ parents with probability $p_1=\frac{f+1}{n}$.
Let $X:=\#\{u\in R_2:\ v\in P_2(u)\}\sim\mathrm{Bin}(n,p_1)$.

\emph{Round 3.}
Conditioned on $X=k$, a given $y\in R_3$ hits at least one of those $k$ ``good'' $R_2$ vertices with
\[
q_3(k)=1-\frac{\binom{n-k}{f+1}}{\binom{n}{f+1}}\;,
\]
hence we define \(T := |S_3(v)|\), the number of Round 3 vertices that link to $v$.
Conditional on \(X = k\), the random variable \(T\) follows a binomial distribution with parameters \((n, q_3(k))\); \ie
\[
T \mid X = k \sim \mathrm{Bin}\bigl(n,\, q_3(k)\bigr).
\]

\emph{Round 4.}
Conditioned on \(T=t\), there are \(t\) “good’’ vertices in \(R_3\) (those in \(S_3(v)\)).
A vertex \(z\in R_4\) chooses \(f{+}1\) parents uniformly from \(R_3\); it avoids \(S_3(v)\) with probability \(\binom{n-t}{f+1}/\binom{n}{f+1}\), hence it hits \(S_3(v)\) with
\[
q_4(t)\;=\;1-\frac{\binom{n-t}{f+1}}{\binom{n}{f+1}}\,.
\]
Since different \(z\in R_4\) choose independently, the number \(Y:=\#\{z\in R_4:\ z\rightsquigarrow v\}\) satisfies
\[
Y \,\big|\, T=t \;\sim\; \mathrm{Bin}\bigl(n,\,q_4(t)\bigr).
\]

\emph{Commit event and total probability.}
The four-round commit rule is $Y\ge f{+}1$. Taking the law of total probability over
$X$ and $T$ yields exactly the stated triple sum $p_4(f)=\Pr[Y\ge f{+}1]$.
In this random-delay model, the expected commit time is written as:
$\mathbb{E}[\textnormal{rounds}]=4/p_4(f)$.

For small \(f\), evaluating \(p_4(f)\) yields
\[
p_4(1)\approx 0.9419,\quad
p_4(2)\approx 0.9869,\quad
p_4(3)\approx 0.9970, \dots
\]
so the corresponding expected rounds per commit are
\(4/p_4(f)\approx 4.247,\,4.053,\,4.012\), respectively.
Moreover, \(p_4(f)\to 1\) rapidly as \(f\) grows, implying that one wave (four rounds)
suffices in expectation to commit.

This completes the proof.
\end{proof}

\iheading{How this ties back to common-core latency.}
By the DAG $\Rightarrow$ Common-Core correspondence (Sec.~\ref{sec:core}),
the event ``coin-elected leader $v$ satisfies $supp_4(v)\ge f{+}1$'' is exactly ``$v\in S^\ast(4)$''. 
Thus the per-wave commit probability is $p_4=\Pr[v\in S^\ast(4)]$, and the expected commit latency formula from Eq.~\eqref{eq:exp-lat} specializes to $\mathbb{E}[\textnormal{rounds}]=4/p_4$ under the random-delay model.

\bheading{System-Wide Liveness.}
\begin{lemma} \label{lm:vertex_commit}
    Every vertex $v$ in the DAG will eventually be committed by every correct replica.
\end{lemma}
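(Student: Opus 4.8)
The plan is to combine the liveness results already established---Lemma~\ref{lm:adversary_commit} (a leader vertex is committed every six rounds in expectation even under an asynchronous adversary) and Lemma~\ref{lm:leadervertexpath} (every later committed leader has a path to every earlier committed leader)---with the structure of the \textsc{OrderVertices} routine. First I would fix an arbitrary vertex $v$ that some correct replica has added to its DAG; by Lemma~\ref{lm:DAGview} every correct replica eventually has $v$ in its local DAG, so it suffices to show $v$ is eventually in the causal history of some committed leader vertex. Let $r$ be the round of $v$ and consider any wave $w$ whose first round is strictly greater than $r$.

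The key step is to argue that $v$ lies on a path from (i.e., in the causal history of) the leader vertex of every sufficiently late wave, \emph{provided} that leader is committed. Here I would invoke the strong-edge structure of the DAG: each vertex references at least $f+1$ vertices of the previous round, and by the quorum-intersection argument used in Lemma~\ref{lm:leadervertexpath}, once we are two rounds past round $r$, every vertex in those later rounds has a strong path back to at least one vertex of round $r+1$, and---pushing the induction down one more round---in fact to $v$ itself once $v$'s round is at least, say, two rounds below the current round and $v$ has been delivered to all correct replicas (so that honest vertices in round $r+1$ reference it, or reference it via weak edges if not through strong edges). Actually the cleanest route is: by the DAG construction rules and Lemma~\ref{lm:DAGview}, $v$ is eventually in the local DAG of every correct replica before it builds round $r'$ for $r'$ large enough; from that point on, by the \textsc{SetWeakEdges}/\textsc{Path} mechanism, every new honest vertex $u$ satisfies $\Name{Path}(u,v)$ (either a strong path already exists, or a weak edge is added). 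Hence for any wave $w$ with first round beyond that threshold, its leader vertex $\ell_w$ (if one exists in the DAG) satisfies $\Name{Path}(\ell_w, v)$.

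Next I would use Lemma~\ref{lm:adversary_commit}: infinitely many waves elect a leader that satisfies the commit rule, so some wave $w^\star$ with first round past the threshold has its leader $\ell_{w^\star}$ directly committed by every correct replica (directly committed leaders are committed by all correct replicas, by the agreement on DAG views and the deterministic commit rule; this is essentially Lemma~\ref{lm:sameorder}). When $\ell_{w^\star}$ is committed, \textsc{OrderVertices} is invoked and, walking the stack, it outputs $\Name{CommitBlock}$ for \emph{every} $v'$ with $\Name{Path}(\ell_{w^\star}, v')$ that has not yet been delivered---in particular for $v$, since $\Name{Path}(\ell_{w^\star}, v)$ holds. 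Therefore $v$ is committed by that replica; and since every correct replica commits the same leader sequence in the same order (Lemma~\ref{lm:sameorder}) and applies the same deterministic ordering of causal histories, every correct replica commits $v$.

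The main obstacle is the middle step: carefully justifying that $v$ is genuinely in the causal history of every late-enough leader vertex. The subtlety is that strong edges only guarantee back-reachability to $f+1$ vertices of the \emph{immediately} preceding round, not to a specific vertex two or more rounds back, and a Byzantine-authored vertex need not reference $v$ at all. The resolution is that we do not need \emph{all} vertices to reach $v$---we only need the \emph{committed leader} to reach $v$, and leaders that get committed have $\ge f+1$ third-round supporters, whose combined causal histories (again by quorum intersection across the three rounds of the wave, exactly as in Lemma~\ref{lm:f+1satisfy}) must cover all $f+1$-or-more first-round vertices of that wave and, transitively, all vertices delivered before the wave began---which includes $v$ once $v$ has propagated. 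Making this "transitively covers everything delivered before the wave" claim precise, using the weak-edge rule to close the gap for vertices not reached by strong edges, is where the real work lies; the rest is assembling the cited lemmas.
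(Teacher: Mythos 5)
Your proof follows essentially the same route as the paper's (which is far terser): weak edges guarantee that $v$ eventually lies in the causal history of later vertices, Lemma~\ref{lm:adversary_commit} guarantees infinitely many committed wave leaders, and \Name{OrderVertices} then delivers $v$ as part of some committed leader's causal history; your version is correct and considerably more careful than the paper's. One small misdirection in your closing paragraph: the commit rule delivers the vertices reachable \emph{from} the leader, so the combined causal histories of the $f+1$ third-round supporters are irrelevant to what gets committed --- what saves you when the leader itself is Byzantine-authored and omits weak edges is that its $f+1$ strong references include at least one honest vertex, which (once $v$ has propagated) reaches $v$ via the \Name{SetWeakEdges} mechanism you already invoked in your ``cleanest route'' paragraph.
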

\begin{proof}
    When a correct replica creates a new vertex, it establishes strong edges pointing to vertices from the previous round (line~\ref{strongedge}) and weak edges pointing to vertices with no alternative path in the DAG (line~\ref{weakedge}). 
    This construction ensures that all vertices are reachable and eventually included in the total order. 

    By Theorem \ref{thm:safety}, all correct replicas commit the leader vertices, along with their corresponding causal history, in the same order.
    Furthermore, Lemma~\ref{lm:four-rounds-optimal} and Lemma~\ref{lem:four-round-prob} demonstrate that \sysname guarantees vertices are committed within a constant number of rounds, ensuring they are eventually committed with probability 1.

    Thus, every correct replica will eventually commit $v$ within a constant number of rounds. 
\end{proof}

\begin{theorem} \label{thm:Liveness}
    \sysname satisfies the \textbf{Liveness} property.
\end{theorem}

\begin{proof}
In \sysname, assume a correct replica $p_i$ receives transaction $tx$. 
Upon receiving $tx$, $p_i$ creates a vertex $v$ encapsulating $tx$ along with potentially other transactions. 
The vertex $v$ is then reliably broadcast to all other replicas.

By Lemma~\ref{lm:vertex_commit}, every vertex $v$ in the DAG is eventually committed by every correct replica. 
Since the transaction $tx$ is contained in the vertex $v$, it will also be committed by all correct replicas. 
Thus, \sysname ensures that any transaction received by a correct replica will eventually be committed by all correct replicas, satisfying the \textbf{Liveness} property. 
\end{proof}

\section{SGX Hardware vs.\ Simulation Overhead}
\label{sec:sgx-overhead}

To justify our use of SGX simulation mode (SGX-SIM) in geo-distributed WAN experiments, we conduct a systematic evaluation of the performance overhead introduced by real SGX hardware encryption (SGX-HW) compared to SGX-SIM in two direct measurement settings: local (no network) and LAN. We then use these measurements to interpret the WAN-scale setting used in the main evaluation.

\subsection{Local Microbenchmark}

We first isolate the raw overhead of enclave transitions and Memory Encryption Engine (MEE) paging by measuring individual ECALL latencies without any network involvement. Table~\ref{tab:sgx-micro} summarizes the results across four representative ECALL types.

\begin{table}[h]
\captionsetup{labelfont={bf}}
\centering
\caption{SGX-HW vs.\ SGX-SIM ECALL microbenchmark.}
\label{tab:sgx-micro}
\small
\begin{tabular}{lrrr}
\toprule
\textbf{ECALL Type} & \textbf{SGX-SIM} & \textbf{SGX-HW} & \textbf{Overhead} \\
\midrule
Empty (transition only) & 7.62\,$\mu$s & 10.46\,$\mu$s & +2.84\,$\mu$s (+37.2\%) \\
Compute (4\,KB payload) & 10.68\,$\mu$s & 14.13\,$\mu$s & +3.45\,$\mu$s (+32.3\%) \\
Memory copy (4\,KB) & 7.95\,$\mu$s & 11.11\,$\mu$s & +3.16\,$\mu$s (+39.7\%) \\
Memory copy (64\,KB) & 17.64\,$\mu$s & 21.22\,$\mu$s & +3.58\,$\mu$s (+20.3\%) \\
\bottomrule
\end{tabular}
\end{table}

SGX-HW introduces a consistent overhead of approximately 3\,$\mu$s per ECALL due to hardware-enforced security validations, Enclave Page Cache (EPC) lookups, and MEE encryption/decryption. The relative overhead exceeds 30\% for small payloads but decreases for larger payloads as the computation itself begins to dominate.

\subsection{LAN Benchmark}

We next measure the end-to-end latency in a LAN setting, where a client on one machine issues requests to a server on another machine within the same datacenter. We use a 4\,KB payload to match the microbenchmark compute workload.

\begin{table}[h]
\captionsetup{labelfont={bf}}
\centering
\caption{End-to-end latency in LAN (same datacenter).}
\label{tab:sgx-lan}
\small
\begin{tabular}{lr}
\toprule
\textbf{Environment} & \textbf{End-to-End Latency ($\mu$s)} \\
\midrule
LAN + SGX-SIM & ${\sim}93$ \\
LAN + SGX-HW & ${\sim}93$ (within noise) \\
\bottomrule
\end{tabular}
\end{table}

Although a single ECALL adds ${\sim}3\,\mu$s in isolation, at LAN scale the enclave transition overlaps with concurrent socket writes and kernel scheduling rather than serializing on the critical path. The residual difference between SGX-HW and SGX-SIM falls within the run-to-run jitter ($\sim$1--5\,$\mu$s) from userspace measurements, making the two statistically indistinguishable in this setting.

\subsection{EPC Capacity and Enclave Memory Usage}

The SGX-enabled \texttt{ecs.g7t.xlarge} instances used in our experiments provide 3.875\,GiB of EPC. Each replica runs one enclave configured with an 8\,MiB heap and two 4\,MiB stacks, for a reserved budget of approximately 16\,MiB per replica.

In our implementation, the enclave hosts only the minimal trusted components (\textit{MC}, \textit{RAC}, and \textit{RNG}). Runtime profiling shows that the peak in-enclave heap usage is only 4096\,B during our experiments. Consensus-level data structures such as the DAG, message queues, and transaction pools remain outside the enclave. As a result, the enclave memory footprint remains small and stable across the evaluated workloads, and none of our tested configurations approach EPC pressure or incur EPC paging.

\subsection{Summary}

These results confirm that SGX-HW overhead is only measurable in isolated, network-free microbenchmarks. Even in the most optimistic LAN setting, the overhead is already negligible ($<$0.1\%). Under WAN-scale latencies (e.g., the 100\,ms delays used in our DIN evaluation), the SGX-HW overhead becomes even more insignificant. This justifies our use of SGX-SIM for geo-distributed WAN experiments where SGX-enabled instances are not available across all cloud regions.



\end{document}